\newtheorem{Twierdzenie}{Theorem}[section]
\newtheorem{Wniosek}{Corollary}[section]
\newtheorem{Definicja}{Definition}[section]
\renewcommand{\theequation}{\arabic{section}.\arabic{equation}}
\definecolor{light_gray_1}{gray}{0.85}
\definecolor{light_gray_2}{gray}{0.95}
\title{Hyperheavenly spaces and their application in Walker and para-Kähler geometries: part I}
\author{$\textrm{Adam Chudecki}^{*}$}
\begin{document}

\maketitle

$*$ Center of Mathematics and Physics, Lodz University of Technology, 
\newline
$\ \ \ \ \ $ Al. Politechniki 11, 90-924 Łódź, Poland, adam.chudecki@p.lodz.pl
\newline
\newline
\newline
\textbf{Abstract}. 
\newline
Spaces equipped with congruences of null strings are considered. A special attention is paid to the spaces which belong to the two-sided Walker class and para-Kähler class. Properties of an intersection of self-dual and anti-self-dual congruences of null strings are used as an additional criterion for a classification of such spaces. Finally, a few examples of para-Kähler and para-Kähler-Einstein spaces are presented.
%\newline
%\newline
%\textbf{PACS numbers:} 04.20.Cv, 04.20.Jb, 04.20.Gz

%#####################################################################################

\renewcommand{\arraystretch}{1.5}
\setlength\arraycolsep{4pt}
\setcounter{equation}{0}

\section{Introduction}

\subsection{Background}

This article is thought of as a first part of more extensive work devoted to the para-Hermite and para-Kähler spaces (abbreviated by pH-spaces and pK-spaces, respectively; also pHE stands for \textsl{para-Hermite-Einstein} and pKE stands for \textsl{para-Kähler-Einstein}). PH structures were defined for the first time in \cite{Libermann}. Since then pH structures and their modifications appeared in many geometrical problems. Recently, such spaces have been considered in the papers \cite{An, Bor, Bor_Makhmali_Nurowski}. In \cite{Bor} a relation between homogeneous pKE metric called \textsl{the dancing metric} and (2,3,5)-distributions with maximal algebra of infinitesimal symmetries has been found. In \cite{An} it is explained how to construct a rank 2 distribution on the 5-dimensional circle bundles of null self-dual and anti-self-dual planes of 4-dimensional conformal structures of neutral signature. Such a distribution is called \textsl{twistor distribution}. A fundamental invariant of twistor distributions, so-called \textsl{Cartan quartic}, depends on the components of the Weyl curvature of the conformal structure. The question asked in \cite{Bor_Makhmali_Nurowski} was whether the root type of the Cartan quartic of twistor distribution agrees with the root type of the quartic representation of the Weyl tensor? In \cite{Bor_Makhmali_Nurowski} it is shown that the answer to this question is positive. 

In \cite{Bor_Makhmali_Nurowski} the authors emphasize the importance of pKE metrics. To obtain examples of pKE metrics the authors directly integrated Cartan structure equations. They presented examples of pKE-spaces for which the anti-self-dual Weyl spinor is of Petrov-Penrose (real) types [II], [III] and [N]. They also found a general form of the metric of  pKE-space of the type [D] (to be more precise, it is the metric of the type $[\textrm{D}_{r}]^{nn} \otimes [\textrm{D}_{r}]^{ee}$, see Sections \ref{nomenclature} and \ref{sekcja_klasyfikacja_petrova} for explanation of the symbols). This metric depends on 5 constants\footnote{It is well-known that a general solution of the Einstein vacuum equations of the type [D] in Lorentzian spaces depends on 7 constants (this solution is known as Plebański-Demiański solution \cite{Plebanski_Demianski}). Spaces equipped with a neutral signature metric which are of the (real) type [D] splits into three classes. The first class is the type $[\textrm{D}_{r}]^{nn} \otimes [\textrm{D}_{r}]^{nn}$. It depends only on cosmological constant. The second class is the type $[\textrm{D}_{r}]^{nn} \otimes [\textrm{D}_{r}]^{ee}$ (it depends on 5 constants, \cite{Bor_Makhmali_Nurowski}). The last class is the type $[\textrm{D}_{r}]^{ee} \otimes [\textrm{D}_{r}]^{ee}$. Such a space is not pKE anymore although it belongs to pHE class. The general metric of this last class remains unknown.}. 

Examples of pKE and pHE metrics of various Petrov-Penrose types have been presented also in our previous work \cite{Chudecki_przyklady}. These examples have been obtained as a real neutral slice of the 4-dimensional, complex space called \textsl{the hyperheavenly space ($\mathcal{HH}$-space)}. Note, that there are examples of the metrics of the type $[\textrm{D}]^{nn} \otimes [\textrm{D}]^{ee}$ in \cite{Chudecki_przyklady}. However, these examples depend on 3 complex constants only and they are not general. Thus, a progress made on this field by Bor, Makhmali and Nurowski is remarkable.

The results published in \cite{An, Bor, Bor_Makhmali_Nurowski,Chudecki_przyklady} suggest that pKE-spaces play an important role in mathematical physics. Hence, the following research program seems to be reasonable:
\begin{enumerate}[label=(\roman*)]
\item Classification of pKE metrics for which the self-dual (SD) or the anti-self-dual (ASD) Weyl spinor is algebraically degenerate,
\item Solving vacuum Einstein equations with cosmological constant for different classes of pKE-spaces or - if these equations are too complicated to be solved completely - to reduce them as much as possible.
\end{enumerate}
In this paper we have completed part (i). In fact, the classification we propose holds true also in pK, pH and pHE-spaces. We also carried out a part of (ii). To explain which part of (ii) has been fulfilled we need to focus on the properties of geometric structures referred to as \textsl{congruences of null strings} (also called \textsl{foliations of null strings}). 

Congruences of null strings are families of totally null and totally geodesic 2-dimensional surfaces (\textsl{the null strings}). In real 4-dimensional spaces they appear as the integral manifolds of the 2-dimensional, totally null distributions and they have been investigated since the fifties \cite{Walker} (\textsl{Walker spaces}). Note that congruences of null strings can only exist in conformal structures of neutral signature. They cannot exist in Lorentzian spaces and in Riemannian spaces\footnote{4-dimensional real manifolds equipped with a metric can be of three different types for which we use the following terminology. Spaces equipped with a metric of the signature $(+---)$ are called \textsl{Lorentzian} (or \textsl{hyperbolic}). \textsl{Neutral spaces} (also called \textsl{split} or \textsl{ultrahyperbolic}) are endowed with a metric of the signature $(++--)$. Finally, \textsl{Riemannian spaces} (also called \textsl{proper-Riemannian} or \textsl{Euclidean}) are equipped with a metric of the signature $(++++)$.}. However, congruences of null strings are also admitted by complex 4-dimensional spaces called \textsl{the hyperheavenly spaces} ($\mathcal{HH}$-spaces) \cite{Plebanski_Robinson_1,Plebanski_Robinson_2}. This fact is crucial for our further work.

Hyperheavenly spaces evolved from \textsl{heavenly spaces} ($\mathcal{H}$-spaces) in the seventies and along with spinors \cite{Penrose,Plebanski_Spinors} and \textsl{twistors} they are a powerful tool in complex analysis of a spacetime. $\mathcal{HH}$-spaces are defined as a complex 4-dimensional manifolds equipped with a holomorphic metric which satisfies the vacuum Einstein equations with cosmological constant and for which SD (or ASD) part of the Weyl tensor is algebraically degenerate. Null strings in $\mathcal{HH}$-spaces are 2-dimensional holomorphic surfaces (in neutral spaces they are 2-dimensional real surfaces).

Properties of congruences of null strings have been investigated in \cite{Plebanski_Rozga}. For our purposes it is necessary to explain their most important property which is called \textsl{expansion of the congruence}. In general, congruences of null strings are not parallely propagated (for details see Section \ref{congruences_of_null_strings}). Following Boyer, Finley and Plebański we call such congruences \textsl{expanding} \cite{Boyer_Finley_Plebanski}. The congruences which are parallely propagated we call \textsl{nonexpanding}. Additionally, congruences of null strings are SD or ASD\footnote{In the sense that at each point $p$ of a null string a bivector tangent to the null string is SD or ASD.} which depends on the orientation. After changing the orientation SD congruences become ASD congruences and vice-versa.

Real 4-dimensional Einstein spaces can be obtained as \textsl{real slices} \cite{Rozga} of $\mathcal{HH}$-spaces. It is quite hard to obtain Riemannian and Lorentzian spaces from $\mathcal{HH}$-spaces but real neutral slices of $\mathcal{HH}$-spaces can be obtained easily. Thus, $\mathcal{HH}$-spaces are useful tool in investigations of pHE and pKE-spaces \cite{Przanowski_Formanski_Chudecki, Chudecki_przyklady}. Also, pH and pK-spaces can be obtained as a real slice of \textsl{weak $\mathcal{HH}$-spaces} which are generalizations of $\mathcal{HH}$-spaces (see Section \ref{subsekcja_degn_x_any}).

PH-spaces are usually defined as real neutral spaces equipped with an integrable \textsl{almost para-complex} structure\footnote{\textsl{An almost para-complex structure} is an endomorphism $K:T\mathcal{M} \rightarrow T \mathcal{M}$ such that $K^{2} = \textrm{id}_{T\mathcal{M}}$ whose $\pm 1$-eigenspaces have rank 2.} and a metric of split signature which satisfies certain compatibility condition (for a brief treatment of this topic see \cite{Bor_Makhmali_Nurowski}). Equivalently, pH-spaces are equipped with two distinct expanding congruences of null strings of the same duality. PK-spaces are pH-spaces for which so called \textsl{para-Kähler} 2-form is closed. It is equivalent to the fact that both congruences of null strings are nonexpanding (see, e.g. \cite{Przanowski_Formanski_Chudecki}). Let the orientation be chosen in such a manner that both nonexpanding congruences of null strings are ASD. Hence, the ASD Weyl spinor of pK-spaces must be of the type [D] or [O] Consequently, pK-spaces can be of the types $[\textrm{any}] \otimes [\textrm{D}]^{nn}$ (if curvature scalar $R \ne 0$) or $[\textrm{any}] \otimes [\textrm{O}]^{n}$ (if $R = 0$, in this case a space is SD). 

Consider now pK-spaces equipped with an additional congruence of SD null strings. This SD congruence can be expanding or nonexpanding. The present paper is devoted to the pK-spaces equipped with a nonexpanding congruence of SD null strings. It will be shown that these are spaces of the types $[\textrm{II,D}]^{n} \otimes [\textrm{D}]^{nn}$ or $[\textrm{III,N}]^{n} \otimes [\textrm{O}]^{n}$. Also, if more SD congruences of null strings exist in a space then this space is of one of the types $[\textrm{II,D}]^{ne} \otimes [\textrm{D}]^{nn}$, $[\textrm{D}]^{nn} \otimes [\textrm{D}]^{nn}$, $[\textrm{III}]^{ne} \otimes [\textrm{O}]^{n}$ and finally $[\textrm{II}]^{nee} \otimes [\textrm{D}]^{nn}$. However, this last type is not analyzed in this article\footnote{Frankly, spaces equipped with more then 2 congruences of null strings of the same duality are so interesting that they deserve a separate paper.}. Note, that among pKE-spaces only types $[\textrm{II}]^{n} \otimes [\textrm{D}]^{nn}$, $[\textrm{D}]^{nn} \otimes [\textrm{D}]^{nn}$ and $[\textrm{III,N}]^{n} \otimes [\textrm{O}]^{n}$ are possible and we found all metrics of these types.

Our considerations are local and in general complex. We consider complex manifolds of dimension four equipped with a holomorphic metric. The results can be easily carried over to the case of real manifolds with a neutral signature metric. We do not consider real Lorentzian slices and real Riemannian slices of the metrics presented in this paper.

\subsection{Summary of main results}

There are three main aims of this paper. The first aim is a detailed analysis of the spaces equipped with three distinct nonexpanding congruences of null strings (one SD and two ASD). Such spaces are \textsl{two-sided Walker} and para-Kähler. The results are presented in Sections \ref{subsekcja_deg_nx_D_nn} and \ref{sekcja_rozwiazania_Einsteinowskie}. Especially interesting is the metric (\ref{twierdzenie_metryka_Walker_2}) which is a general metric of the space of the type $[\textrm{II}]^{n} \otimes [\textrm{D}]^{nn}$. This metric specialized to the Einstein case gives the metric (\ref{twierdzenie_metryka_typ_IIIn_x_nic_Einstein}). 

Note, that a first step to obtain the metric (\ref{twierdzenie_metryka_Walker_2}) is to equip a weak $\mathcal{HH}$-space with an additional expanding congruence of ASD null strings (Section \ref{subsekcja_degn_x_dege}). Law and Matsushita called such spaces \textsl{sesquiWalker spaces} \cite{Law}. However, the Authors of \cite{Law} found only one class of metrics of such spaces (the metric (\ref{twierdzenie_metryka_degn_x_anye_minus_minus}) in our paper). The second (more generic) class remained undiscovered in \cite{Law}. We filled this gap (the metric (\ref{twierdzenie_metryka_degn_x_anye_plus_plus})). 

The second aim is an analysis of the SD spaces which can be obtained from the metric (\ref{twierdzenie_metryka_Walker_2}). The results are gathered is Sections \ref{subsekcja_III_NXnic} and \ref{subsekcja_III_NXnic_Einstein}. An advantage of our approach is the fact that we obtained SD metrics of the types $[\textrm{III,N}]^{n} \otimes [\textrm{O}]^{n}$ with maximally reduced number of arbitrary functions. A special attention is paid to the metric (\ref{ogolna_metryka_vacuum_IIIxO}) which is the general metric of a space of the type $[\textrm{N}]^{n} \otimes [\textrm{O}]^{n}$. Such a space has a rare property of being two-sided conformally recurrent \cite{Plebanski_Przanowski_rec}. However, in \cite{Plebanski_Przanowski_rec} the authors listed three classes of such metrics with five arbitrary functions of two variables each. We proved that these three classes can be reduced to a single class which depends on two functions of two variables and one constant. 

The third aim is a development of an approach to the problem of subclassification of the spaces equipped with congruences of null strings of a different duality. SD and ASD congruences of null strings intersect and this intersection constitutes the congruence of null (complex) geodesics. Properties of this intersection are used as a subcriterion in the classification. A foundation of our approach is presented in Section \ref{section_preliminaries} (earlier in \cite{Chudecki_typy_N}) and full classification is listed in Appendix \ref{Dodatek_klasyfikacja}.

The paper contains also examples of the spaces equipped with expanding and nonexpanding congruences of null strings of the same duality. These are the metrics (\ref{twierdzenie_metrykadegn_x_degne_minisminus_plusplus}), (\ref{twierdzenie_metrykadegn_x_degne_minisminus_minisminus}), (\ref{IIxD_mmmmmmpp}), (\ref{metryka_DxD}) and (\ref{ogolna_metryka_vacuum_IIIxO}) with constraints (\ref{jawne_przyklady_rozwiazan}). According to our best knowledge these are the first explicit examples of such metrics.

Our paper is organized, as follows. In Section \ref{section_preliminaries} the algebraic preliminaries of the subject are presented (formalism, Petrov-Penrose classification, congruences of null strings and their intersections, weak $\mathcal{HH}$-spaces). Section \ref{Sekcja_przestrzenie_Walkera} is devoted to the sesquiWalker spaces, two-sided Walker spaces and pK-spaces. In Section \ref{sekcja_rozwiazania_Einsteinowskie} two-sided Walker-Einstein spaces and pKE-spaces are considered. Finally, in Appendix \ref{Dodatek_klasyfikacja} the detailed classification of the spaces equipped with at most two congruences of SD and ASD spaces is presented.

\subsection{Some basic abbreviations}
\label{nomenclature}

A remark about abbreviations used in our paper is needed. We use abbreviations:
\begin{eqnarray}
\nonumber
\mathcal{C} &-& \textrm{congruence of null strings}
\\ \nonumber
\mathcal{C}s &-& \textrm{congruences of null strings (if there is no need to establish}
\\ \nonumber
&& \textrm{the number of congruences or the properties of congruences)}
\\ \nonumber
\mathcal{C}_{m^{A}} &-& \textrm{SD congruence of null strings generated by a spinor } m^{A}
\\ \nonumber
\mathcal{C}_{m^{\dot{A}}} &-& \textrm{ASD congruence of null strings generated by a spinor } m^{\dot{A}}
\\ \nonumber
\mathcal{C}^{n} &-& \textrm{nonexpanding congruence of null strings}
\\ \nonumber
\mathcal{C}^{e} &-& \textrm{expanding congruence of null strings}
\\ \nonumber
\mathcal{C}^{nn} &-& \textrm{two nonexpanding congruences of null strings of the same duality}
\\ \nonumber
\mathcal{C}^{ne} &-& \textrm{two congruences of null strings of the same duality,} 
\\ \nonumber
&& \textrm{one nonexpanding, one expanding}
\\ \nonumber
\mathcal{C}^{ee} &-& \textrm{two expanding congruences of null strings of the same duality}
\end{eqnarray}
There are also possibilities $\mathcal{C}^{nee}$, $\mathcal{C}^{eee}$ and $\mathcal{C}^{eeee}$, but they do not appear in our paper. We also use "mixed" symbols with obvious meaning, like $\mathcal{C}^{n}_{m^{A}}$ (nonexpanding SD congruence of null strings generated by a spinor $m^{A}$), etc.

Intersection of SD and ASD $\mathcal{C}s$ constitutes a congruence of null geodesics which properties are described by the optical scalars (shear, expansion and twist). For such structures we use the following abbreviations
\begin{eqnarray}
\nonumber
\mathcal{I} &-& \textrm{congruence of null geodesics } (\mathcal{I} \textrm{ like } \mathcal{I}\textrm{ntersection})
\\ \nonumber
\mathcal{I}s &-& \textrm{congruences of null geodesics (if there is no need to establish}
\\ \nonumber
&& \textrm{the number of intersections or the properties of intersections)}
\\ \nonumber
\mathcal{I}(\mathcal{C}_{m^{A}},\mathcal{C}_{m^{\dot{A}}}) &-& \textrm{congruence of null geodesics which is an intersection of } \mathcal{C}_{m^{A}} \textrm{ and } \mathcal{C}_{m^{\dot{A}}}
\\ \nonumber
\mathcal{I}(\mathcal{C}^{n},\mathcal{C}^{n}) &-& \textrm{congruence of null geodesics which is an intersection of SD } \mathcal{C}^{n} \textrm{ and ASD } \mathcal{C}^{n}
\\ \nonumber
\mathcal{I}(\mathcal{C}^{n},\mathcal{C}^{e}) &-& \textrm{congruence of null geodesics which is an intersection of SD } \mathcal{C}^{n} \textrm{ and ASD } \mathcal{C}^{e}
\\ \nonumber
\mathcal{I}(\mathcal{C}^{e},\mathcal{C}^{e}) &-& \textrm{congruence of null geodesics which is an intersection of SD } \mathcal{C}^{e} \textrm{ and ASD } \mathcal{C}^{e}
\\ \nonumber
\mathcal{I}^{--} &-& \textrm{congruence of null geodesics which is nonexpanding and nontwisting} 
\\ \nonumber
\mathcal{I}^{+-} &-& \textrm{congruence of null geodesics which is expanding but nontwisting} 
\\ \nonumber
\mathcal{I}^{-+} &-& \textrm{congruence of null geodesics which is nonexpanding but twisting} 
\\ \nonumber
\mathcal{I}^{++} &-& \textrm{congruence of null geodesics which is expanding and twisting} 
\end{eqnarray}

We also use equalities like $\mathcal{I}(\mathcal{C}_{m^{A}},\mathcal{C}_{m^{\dot{A}}})=\mathcal{I}^{--}$ or $\mathcal{I}(\mathcal{C}^{n},\mathcal{C}^{e})=\mathcal{I}^{++}$. For example, $\mathcal{I}(\mathcal{C}_{m^{A}},\mathcal{C}_{m^{\dot{A}}})=\mathcal{I}^{--}$ means that the intersection of  congruences $\mathcal{C}_{m^{A}}$ and $\mathcal{C}_{m^{\dot{A}}}$ is nonexpanding and nontwisting.

A space which is equipped with SD (or ASD) $\mathcal{C}^{n}$ is called \textsl{Walker space} \cite{Walker}. If a space is equipped with $\mathcal{C}^{nn}$ we deal with \textsl{double Walker spaces} \cite{Law_3}. If there are one SD $\mathcal{C}^{n}$ and one ASD $\mathcal{C}^{n}$, we call such a space \textsl{two-sided Walker} \cite{Chudecki_Przanowski_Walkery}. Consider a space equipped with SD $\mathcal{C}^{e}$ and ASD $\mathcal{C}^{e}$. Law and Matsushita called such spaces \textsl{real AlphaBeta-geometries} \cite{Law}. They also called spaces equipped with SD $\mathcal{C}^{n}$ and ASD $\mathcal{C}^{e}$ (or vice-versa) \textsl{sesquiWalker spaces}\footnote{All these names of  spaces refer to 4-dimensional neutral spaces.}.

At this point a set with reasonable names of the spaces has been exhausted. How a space equipped with SD $\mathcal{C}^{nn}$ and ASD $\mathcal{C}^{nn}$ should be called? The natural answer is \textsl{two-sided double Walker space} and it is somehow acceptable. But what about a space equipped with SD $\mathcal{C}^{n}$ and ASD $\mathcal{C}^{nn}$? The name \textsl{one-sided Walker one-sided double Walker} seems to be a little bit sloppy. If we admit the existence of $\mathcal{C}s$ of a different duality and different properties more problems with names of the spaces arise.

Realizing this, we propose uniform abbreviations. Note, that in Lorentzian spaces algebraic types of the SD and ASD parts of the Weyl spinor are the same (see Section \ref{sekcja_klasyfikacja_petrova}). Thus, to determine Petrov-Penrose type of the conformal curvature it is sufficient to use a single symbol (for example $[\textrm{I}]$ or $[\textrm{D}]$). 

However, the SD and ASD Weyl spinors can be of a different Petrov-Penrose type in Riemannian, neutral and complex spaces. Thus, a single symbol of a type is not sufficient. Usually full data of a type of the conformal curvature is given in the following symbol
\begin{equation}
\nonumber
[\textrm{SD}_{\textrm{type}}] \otimes [\textrm{ASD}_{\textrm{type}}]
\end{equation}
where
\begin{eqnarray}
\nonumber
&& \textrm{SD}_{\textrm{type}}, \textrm{ASD}_{\textrm{type}} = \{ \textrm{I}, \textrm{II}, \textrm{D}, \textrm{III}, \textrm{N}, \textrm{O}  \} \textrm{ in complex spaces}
\\ \nonumber
&& \textrm{SD}_{\textrm{type}}, \textrm{ASD}_{\textrm{type}} = \{ \textrm{I}, \textrm{D}, \textrm{O}  \} \textrm{ in Riemannian spaces}
\\ \nonumber
&& \textrm{SD}_{\textrm{type}}, \textrm{ASD}_{\textrm{type}} = \{ \textrm{I}_{r}, \textrm{I}_{rc}, \textrm{I}_{c}, \textrm{II}_{r}, \textrm{II}_{rc}, \textrm{D}_{r}, \textrm{D}_{c}, \textrm{III}_{r}, \textrm{N}_{r}, \textrm{O}_{r}  \} \textrm{ in neutral spaces}
\end{eqnarray}
For example, the symbol $[\textrm{D}] \otimes [\textrm{N}]$ means that the SD Weyl spinor is of the type [D] and the ASD Weyl spinor is of the type [N]. The types $\textrm{I}$, $\textrm{I}_{r}$, $\textrm{I}_{rc}$ and $\textrm{I}_{c}$ are called \textsl{algebraically general}. All other types are called \textsl{algebraically special} or \textsl{algebraically degenerate}.

If, additionally, a space is equipped with one or more SD or ASD $\mathcal{C}s$, we add superscripts, i.e., we use a symbol 
\begin{equation}
\label{symbol_typu}
[\textrm{SD}_{\textrm{type}}]^{i_{1}i_{2}...} \otimes [\textrm{ASD}_{\textrm{type}}]^{j_{1}j_{2}...}
\end{equation}
where the number of superscripts $i$ $(j)$ carries information about the number of SD (ASD) $\mathcal{C}s$. $i_{1}, i_{2},..., j_{1}, j_{2},... = \{ n,e \}$ where $n$ stands for $\mathcal{C}^{n}$, while $e$ stands for $\mathcal{C}^{e}$. Hence, the generic complex spaces are:
\begin{eqnarray}
\nonumber
[\textrm{deg} ]^{n} \otimes [\textrm{any}] &-& \textrm{weak nonexpanding } \mathcal{HH}\textrm{-spaces}
\\ \nonumber
[\textrm{deg} ]^{e} \otimes [\textrm{any} ] &-& \textrm{weak expanding } \mathcal{HH}\textrm{-spaces}
\\ \nonumber
[\textrm{deg} ]^{n} \otimes [\textrm{any} ], C_{ab}=0 &-& \textrm{nonexpanding } \mathcal{HH}\textrm{-spaces}
\\ \nonumber
[\textrm{deg} ]^{e} \otimes [\textrm{any} ], C_{ab}=0 &-& \textrm{expanding } \mathcal{HH}\textrm{-spaces}
\end{eqnarray}
where \textsl{deg} means that Petrov-Penrose type is algebraically special, \textsl{any} means that Petrov-Penrose type is arbitrary, $C_{ab}$ is the traceless Ricci tensor. Similarly, for neutral spaces we get:
\begin{eqnarray}
\nonumber
[\textrm{deg} ]^{n} \otimes [\textrm{any}] &-& \textrm{Walker spaces}, \textrm{real weak nonexpanding } \mathcal{HH}\textrm{-spaces}
\\ \nonumber
[\textrm{D} ]^{nn} \otimes [\textrm{any}] &-& \textrm{double Walker spaces}, \textrm{para-Kähler spaces}
\\ \nonumber
[\textrm{D} ]^{ee} \otimes [\textrm{any}] &-& \textrm{para-Hermite spaces}
\\ \nonumber
[\textrm{deg} ]^{n} \otimes [\textrm{deg} ]^{n} &-& \textrm{two-sided Walker spaces}
\\ \nonumber
[\textrm{deg} ]^{n} \otimes [\textrm{any} ]^{e} &-& \textrm{sesquiWalker spaces}
\\ \nonumber
[\textrm{any} ]^{e} \otimes [\textrm{any} ]^{e} &-& \textrm{AlphaBeta-geometries}
\\ \nonumber
[\textrm{deg} ]^{e} \otimes [\textrm{any} ] &-& \textrm{real weak expanding } \mathcal{HH}\textrm{-spaces}
\\ \nonumber
[\textrm{deg} ]^{n} \otimes [\textrm{any} ], C_{ab}=0 &-& \textrm{Walker-Einstein spaces}, \textrm{real nonexpanding } \mathcal{HH}\textrm{-spaces}
\\ \nonumber
[\textrm{deg} ]^{e} \otimes [\textrm{any} ], C_{ab}=0 &-& \textrm{real expanding } \mathcal{HH}\textrm{-spaces}
\end{eqnarray}
 Of course, our abbreviations hold true also if the orientation is changed. For example, $[\textrm{D} ]^{nn} \otimes [\textrm{any}]$ and $[\textrm{any}] \otimes [\textrm{D} ]^{nn}$ are essentially the same spaces. 

If the SD (ASD) Weyl spinor vanishes, then the corresponding space is called ASD (SD) and it is equipped with infinitely many SD (ASD) $\mathcal{C}s$. If additionally $R \ne 0$ then all these $\mathcal{C}s$ are expanding and in such a case we use the symbol $[\textrm{O}]^{e}$. If $R =0$ then there are both expanding and nonexpanding $\mathcal{C}s$. In such a case we use the symbol $[\textrm{O}]^{n}$. Hence 
\begin{eqnarray}
\nonumber
[\textrm{any} ] \otimes [\textrm{O}]^{n} &-& \textrm{SD spaces with } R = 0 \textrm{ (if } C_{ab}=0: \textrm{ heavenly spaces with } \Lambda=0)
\\ \nonumber
[\textrm{any} ] \otimes [\textrm{O}]^{e} &-& \textrm{SD spaces with } R \ne 0 \textrm{ (if } C_{ab}=0: \textrm{ heavenly spaces with } \Lambda \ne 0)
\\ \nonumber
[\textrm{O}]^{n} \otimes [\textrm{any} ]   &-& \textrm{ASD spaces with } R = 0 \textrm{ (if } C_{ab}=0: \textrm{ hellish spaces with } \Lambda=0)
\\ \nonumber
[\textrm{O}]^{e} \otimes [\textrm{any} ]   &-& \textrm{ASD spaces with } R \ne 0 \textrm{ (if } C_{ab}=0: \textrm{ hellish spaces with } \Lambda \ne 0)
\end{eqnarray}
We believe, that these abbreviations simplify considerably the rest of the article.

\setcounter{equation}{0}
\section{Preliminaries}
\label{section_preliminaries}

\subsection{The formalism}

\subsubsection{Generic complex case}

In this section we present the foundations of the null tetrad and spinorial formalisms. The spinorial formalism used in this paper is Infeld - Van der Waerden - Plebański notation. For more details see \cite{Plebanski_Spinors,Pleban_formalism_2} or \cite{Chudecki_struny,Plebanski_Przanowski_rec} for a brief summary. 

Let $(\mathcal{M}, ds^{2})$ be a 4-dimensional complex analytic differential manifold equipped with a holomorphic metric. The metric $ds^{2}$ can be written in the form 
\begin{equation}
ds^{2} = 2e^{1}e^{2} + 2e^{3}e^{4} = -\frac{1}{2} g_{A\dot{B}} g^{A \dot{B}}
\end{equation}
where 1-forms $(e^{1},e^{2},e^{3},e^{4})$ are members of a \textsl{complex null tetrad} (\textsl{a complex coframe}) and they form a basis of 1-forms. The relation between $e^{a}$ and $g^{A \dot{B}}$ reads
\begin{equation}
\label{definicccja_gAB}
(g^{A\dot{B}}) := \sqrt{2}
\left[\begin{array}{cc}
e^4 & e^2 \\
e^1 & -e^3
\end{array}\right] 
 , \  A=1,2,  \ \dot{B}=\dot{1},\dot{2}
\end{equation}
A dual basis is denoted by $(\partial_{1}, \partial_{2}, \partial_{3}, \partial_{4})$ and it is also called \textsl{a complex null tetrad} (\textsl{a complex frame}). A dual basis in spinorial formalism takes the form
\begin{equation}
(\partial_{A\dot{B}}) := -\sqrt{2}
\left[\begin{array}{cc}
\partial_{4} & \partial_{2} \\
\partial_{1} & -\partial_{3}
\end{array}\right]
\end{equation}
Spinorial indices are manipulated according to the following rules 
\begin{equation}
\label{spinorial_indices_lowering_rule}
m_{A} = \ \epsilon_{A B} m^{B}
\ , \ \ \ 
m^{A} = m_{B} \epsilon^{BA}
\ , \ \ \
m_{\dot{A}} = \ \epsilon_{\dot{A} \dot{B}} m^{\dot{B}}
\ , \ \ \ 
m^{\dot{A}} = m_{\dot{B}} \epsilon^{\dot{B} \dot{A}}
\end{equation}
where $\epsilon_{AB}$ and $\epsilon_{\dot{A}\dot{B}}$ are the spinor Levi-Civita symbols
\begin{eqnarray}
&& (\epsilon_{AB})  := \left[ \begin{array}{cc}
                            0 & 1   \\
                           -1 & 0  
                            \end{array} \right]  =:  (\epsilon^{AB} )
\ \ \ , \ \ \ 
 (\epsilon_{\dot{A}\dot{B}})  := \left[ \begin{array}{cc}
                            0 & 1   \\
                           -1 & 0  
                            \end{array} \right]  =:  (\epsilon^{\dot{A}\dot{B}} ) 
\\ \nonumber
&& \epsilon_{AC} \epsilon^{AB} = \delta^{B}_{C} \ \ \ , \ \ \ \epsilon_{\dot{A}\dot{C}} \epsilon^{\dot{A}\dot{B}} = \delta^{\dot{B}}_{\dot{C}} \ \ \ , \ \ \ 
(\delta^{A}_{C})= (\delta^{\dot{B}}_{\dot{C}})= \left[ \begin{array}{cc}
                            1 & 0   \\
                            0 & 1  
                            \end{array} \right]
\end{eqnarray}
Rules (\ref{spinorial_indices_lowering_rule}) imply the following rules for the objects from a tangent space
\begin{eqnarray}
&& \partial^{A} = \partial_{B} \epsilon^{A B}, \ \partial_{A} = \epsilon_{B A} \partial^{B}, \ \textrm{where} \ \partial^{A} := \frac{\partial}{\partial x_{A}}, \ \partial_{A} := \frac{\partial}{\partial x^{A}}
\\ \nonumber
&& \partial^{\dot{A}} = \partial_{\dot{B}} \epsilon^{\dot{A} \dot{B}}, \ \partial_{\dot{A}} = \epsilon_{\dot{B} \dot{A}} \partial^{\dot{B}}, \ \textrm{where} \ \partial^{\dot{A}} := \frac{\partial}{\partial x_{\dot{A}}}, \ \partial_{\dot{A}} := \frac{\partial}{\partial x^{\dot{A}}}
\end{eqnarray}

Define 2-forms $S^{AB}$ and $S^{\dot{A}\dot{B}}$
\begin{subequations}
\begin{eqnarray}
&& (S^{AB}) := 
\left[\begin{array}{cc}
2 e^4 \wedge e^2 & e^1 \wedge e^2 + e^3 \wedge e^4 \\
e^1 \wedge e^2 + e^3 \wedge e^4 & 2 e^3 \wedge e^1
\end{array}\right] 
\\ 
&&(S^{\dot{A} \dot{B}}) := 
\left[\begin{array}{cc}
2 e^4 \wedge e^1 & -e^1 \wedge e^2 + e^3 \wedge e^4 \\
-e^1 \wedge e^2 + e^3 \wedge e^4 & 2 e^3 \wedge e^2
\end{array}\right] 
\end{eqnarray}
\end{subequations}
$S^{AB}$ are SD and $S^{\dot{A}\dot{B}}$ are ASD. They form a basis of 2-forms.

The first Cartan structure equations in the spinorial formalism read
\begin{equation}
 d g^{A \dot{B}} + \mathbf{\Gamma}^{A}_{\ \; C} \wedge g^{C \dot{B}} + 
                                  \mathbf{\Gamma}^{\dot{B}}_{\ \; \dot{C}} \wedge g^{A \dot{C}} = 0
\end{equation}
where $\mathbf{\Gamma}_{AB}$ and $\mathbf{\Gamma}_{\dot{A}\dot{B}}$ are SD and ASD spinorial connection forms. Their relation with the spinorial forms $\Gamma_{ab}$ in the null tetrad formalism is well-known
      \begin{eqnarray}
\label{zwiazek_koneksji_spinorowej_i_klasycznej}
&&(\mathbf{\Gamma}_{AB}) = -\frac{1}{2}
                    \left[ \begin{array}{cc}
                            2\, \Gamma_{42} &  \Gamma_{12} + \Gamma_{34}  \\
                            \Gamma_{12} + \Gamma_{34} & 2\, \Gamma_{31}  
                            \end{array} \right]
\\ \nonumber
&&(\mathbf{\Gamma}_{\dot{A} \dot{B}}) = -\frac{1}{2}
                    \left[ \begin{array}{cc}
                            2\, \Gamma_{41} &  -\Gamma_{12} + \Gamma_{34}   \\
                             -\Gamma_{12} + \Gamma_{34} & 2\, \Gamma_{32}  
                            \end{array} \right]
\end{eqnarray}
If we use a decomposition
\begin{equation}
\mathbf{\Gamma}_{AB} =: - \frac{1}{2} \mathbf{\Gamma}_{AB C \dot{D}} g^{C \dot{D}}, \ \mathbf{\Gamma}_{\dot{A} \dot{B}} =: - \frac{1}{2} \mathbf{\Gamma}_{\dot{A} \dot{B} C \dot{D}} g^{C \dot{D}}, \ \Gamma_{ab} =: \Gamma_{abc} e^{c}
\end{equation}
the explicit relations between $\mathbf{\Gamma}_{AB M \dot{N}}$, $\mathbf{\Gamma}_{\dot{A}\dot{B} M \dot{N}}$ and  $\Gamma_{abc}$ read
\begin{eqnarray}
\label{jawne_relacje_miedzy_koneksjami_obu_rodzajow}
&& \mathbf{\Gamma}_{11A\dot{B}} = \sqrt{2}
\left[\begin{array}{cc}
\Gamma_{424}  & \Gamma_{422} \\
\Gamma_{421}  & -\Gamma_{423}
\end{array}\right]  , \ \ \ 
\mathbf{\Gamma}_{22A\dot{B}} = \sqrt{2}
\left[\begin{array}{cc}
\Gamma_{314} & \Gamma_{312} \\
\Gamma_{311} & -\Gamma_{313}
\end{array}\right]
\\ \nonumber
&& \mathbf{\Gamma}_{12A\dot{B}} = \frac{1}{\sqrt{2}}
\left[\begin{array}{cc}
\Gamma_{124} + \Gamma_{344} \ & \Gamma_{122} + \Gamma_{342} \\
\Gamma_{121} + \Gamma_{341} \ & -\Gamma_{123} - \Gamma_{343}
\end{array}\right]
\\ \nonumber
&& \mathbf{\Gamma}_{\dot{1}\dot{1} A\dot{B}} = \sqrt{2}
\left[\begin{array}{cc}
\Gamma_{414} & \Gamma_{412} \\
\Gamma_{411} & -\Gamma_{413}
\end{array}\right] , \ \ \ 
\mathbf{\Gamma}_{\dot{2}\dot{2} A\dot{B}} = \sqrt{2}
\left[\begin{array}{cc}
\Gamma_{324} & \Gamma_{322} \\
\Gamma_{321} & -\Gamma_{323}
\end{array}\right]
\\ \nonumber
&& \mathbf{\Gamma}_{\dot{1}\dot{2} A\dot{B}} = \frac{1}{\sqrt{2}}
\left[\begin{array}{cc}
-\Gamma_{124} + \Gamma_{344} & -\Gamma_{122} + \Gamma_{342} \\
-\Gamma_{121} + \Gamma_{341} & \Gamma_{123} - \Gamma_{343}
\end{array}\right]
\end{eqnarray}
The formula for the covariant derivative of an arbitrary spinor field in the spinorial formalism reads
\begin{eqnarray}
\label{covariant_spinorial_derivative}
\nabla_{M \dot{N}} \Psi^{A \dot{B}}_{C \dot{D}} &=& \partial_{M \dot{N}} \Psi^{A \dot{B}}_{C \dot{D}} 
+ \mathbf{\Gamma}^{A}_{\ SM \dot{N}} \, \Psi^{S \dot{B}}_{C \dot{D}} 
- \mathbf{\Gamma}^{S}_{\ CM \dot{N}} \, \Psi^{A \dot{B}}_{S \dot{D}}
\\ \nonumber
&& \ \ \ \ \ \ \ \ \ \ \ \ \ 
+ \mathbf{\Gamma}^{\dot{B}}_{\ \dot{S}M \dot{N}} \, \Psi^{A \dot{S}}_{C \dot{D}}
- \mathbf{\Gamma}^{\dot{S}}_{\ \dot{D}M \dot{N}} \, \Psi^{A \dot{B}}_{C \dot{S}}
\end{eqnarray}   
where           
\begin{equation}
\nonumber
\nabla_{A\dot{B}} := g_{aA\dot{B}} \nabla^{a} \ , \ \ \ \partial_{A\dot{B}} := g_{aA\dot{B}} \partial^{a} 
\end{equation}
and the matrices $g_{aA\dot{B}}$ are defined by the relation $g^{A\dot{B}} = g_{a}^{\ A\dot{B}}e^{a}$.

The second Cartan structure equations read
\begin{equation}
\label{drugie_rownania_struktury}
\mathbf{R}^{A}_{\ \; B} = d \mathbf{\Gamma}^{A}_{\ \; B} + \mathbf{\Gamma}^{A}_{\ \; C} \wedge \mathbf{\Gamma}^{C}_{\ \; B}, \ 
\mathbf{R}^{\dot{A}}_{\ \; \dot{B}} = d \mathbf{\Gamma}^{\dot{A}}_{\ \; \dot{B}} + \mathbf{\Gamma}^{\dot{A}}_{\ \; \dot{C}} \wedge \mathbf{\Gamma}^{\dot{C}}_{\ \; \dot{B}}
\end{equation}
$\mathbf{R}^{A}_{\ \; B}$ and $\mathbf{R}^{\dot{A}}_{\ \; \dot{B}}$ are the curvature 2-forms of the connection $\mathbf{\Gamma}^{A}_{\ \; B}$ or $\mathbf{\Gamma}^{\dot{A}}_{\ \; \dot{B}}$, respectively. Decomposition of $\mathbf{R}_{AB}= \mathbf{R}_{(AB)}$ and $\mathbf{R}_{\dot{A}\dot{B}} = \mathbf{R}_{(\dot{A}\dot{B})}$ reads
\begin{eqnarray}
\label{definicja_curvatury}
\mathbf{R}_{AB} &=& - \frac{1}{2} \, C_{ABCD} \, S^{CD} + \frac{R}{24} \, S_{AB} 
           + \frac{1}{2} \, C_{AB \dot{C}\dot{D}} \, S^{\dot{C} \dot{D}}
\\ \nonumber
\mathbf{R}_{\dot{A}\dot{B}} &=& - \frac{1}{2} \, C_{\dot{A}\dot{B}\dot{C}\dot{D}} \, S^{\dot{C} \dot{D}}
                  + \frac{R}{24} \, S_{\dot{A} \dot{B}} 
           + \frac{1}{2} \, C_{CD \dot{A}\dot{B}} \, S^{CD}
\end{eqnarray}
$C_{ABCD}=C_{(ABCD)}$ ($C_{\dot{A}\dot{B}\dot{C}\dot{D}}=C_{(\dot{A}\dot{B}\dot{C}\dot{D})}$) is the spinorial image of the SD (ASD) part of the Weyl tensor; $C_{AB \dot{C}\dot{D}} = C_{(AB) \dot{C}\dot{D}} =C_{AB (\dot{C}\dot{D})}$ is the spinorial image of the traceless Ricci tensor and, finally, $R$ is the curvature scalar. 

Dotted and undotted spinors transform as follows
\begin{equation}
m'^{A_{1}...A_{n}} = L^{A_{1}}_{\ \; R_{1}} \, ... \, L^{A_{n}}_{\ \; R_{n}} \, m^{R_{1}...R_{n}}, \ 
m'^{\dot{A}_{1}...\dot{A}_{n}} = M^{\dot{A}_{1}}_{\ \; \dot{R}_{1}} \, ... \, M^{\dot{A}_{n}}_{\ \; \dot{R}_{n}} \, m^{\dot{R}_{1}...\dot{R}_{n}}
\end{equation}
where $L^{A}_{\ \; R}, \, M^{\dot{A}}_{\ \; \dot{R}} \in SL(2, \mathbb{C})$.

\subsubsection{Real neutral case}

Any 4-dimensional real space can be obtained from a generic 4-dimensional complex space by the procedure of a real slice of a complex space \cite{Rozga}. To obtain such a slice one has to use \textsl{reality conditions}. We skip a discussion about the reality conditions in Lorentzian and Riemannian spaces, we focus only on neutral spaces. There are two different ways of obtaining neutral spaces from complex spaces. Following \cite{Rod_Hill_Nurowski} we call them $S_{r}$ ("split real") and $S_{c}$ ("split complex")
\begin{eqnarray}
\nonumber
\textrm{Reality conditions } S_{r}: && \overline{g^{A\dot{B}}}  = g^{A\dot{B}} \ \Longrightarrow \ \overline{e^{a}} = e^{a} , \ a=1,2,3,4
\\ \nonumber
\textrm{Reality conditions } S_{c}: && \overline{g^{A\dot{B}}} = \eta^{AC}\eta^{\dot{B}\dot{D}} \, \overline{g_{C\dot{D}}} \ \Longrightarrow \ \overline{e^{2}} = e^{1}, \ \overline{e^{4}}=- e^{3}  
\\ \nonumber
&& \textrm{where }(\eta^{AB}) =(\eta^{\dot{A}\dot{B}}) := \left[ \begin{array}{cc}
                            1 & 0   \\
                            0 & -1  
                            \end{array} \right] 
\end{eqnarray}
where "bar" means the complex conjugation. Application of $S_{r}$ is easy: it is enough to replace all complex coordinates by real ones and all holomorphic functions by real smooth ones. 

In the rest of the text we always use the reality conditions $S_{r}$. The reason why we prefer $S_{r}$ to $S_{c}$ is as follows. We are interested in neutral slices of complex spaces which inherit $\mathcal{C}s$ in the sense that complex $\mathcal{C}s$ in a complex space become real $\mathcal{C}s$ in neutral slice of the complex space. In other words, $\mathcal{C}s$ must be characterized by \textsl{the real index} equal 2 \cite{Kopczynski, Rod_Hill_Nurowski}. We refer the Reader to Section 7.2 of \cite{Rod_Hill_Nurowski} where advantages of $S_{r}$ over $S_{c}$ in such a case have been clearly explained.

\subsection{Petrov-Penrose classification}
\label{sekcja_klasyfikacja_petrova}

Algebraic classification of totally symmetric 4-index spinors has been presented in \cite{Penrose}. This classification applied to the SD (ASD) Weyl spinor leads to the \textsl{Petrov-Penrose} classification of the conformal curvature. A contraction of $C_{ABCD}$ with an arbitrary 1-index spinor $\xi^{A}$ such that $\xi^{2} \ne 0$ yields
\begin{equation}
C_{ABCD} \xi^{A} \xi^{B} \xi^{C} \xi^{D} =  (\xi^{2})^{4} \, \mathcal{P} (z)
\end{equation}
where $\mathcal{P} (z)$ is a 4-th order polynomial in $z := \xi^{1} / \xi^{2}$. Due to the fundamental theorem of algebra $\mathcal{P}$ can be always brought to a factorized form. Hence 
\begin{equation}
(\xi^{2})^{4} \, \mathcal{P} (z) = (a_{A} \xi^{A}) (b_{B} \xi^{B})(c_{C} \xi^{C})(d_{D} \xi^{D})
\end{equation}
Because of the arbitrariness of $\xi^{A}$ we find
\begin{equation}
C_{ABCD} = a_{(A} b_{B} c_{C} d_{D)}
\end{equation}
Complex, 1-index, undotted spinors $a_{A}$, $b_{A}$, $c_{A}$ and $d_{A}$ are called \textsl{Penrose spinors}. Penrose spinors are mutually linearly independent in general. In such a case the SD Weyl spinor is \textsl{algebraically general}. It corresponds to the case when the polynomial $\mathcal{P}(z)$ has four different roots. If at least two Penrose spinors are proportional to each other then the SD Weyl spinor is \textsl{algebraically special}. There are different patterns of the roots of polynomial $\mathcal{P}(z)$. Each of them corresponds to different  Petrov-Penrose types of $C_{ABCD}$. 

If $C_{ABCD}$ is complex then there are 6 different Petrov-Penrose types. In neutral spaces $C_{ABCD}$ is real. Hence, the scheme of the roots of $\mathcal{P}(z)$ is more complicated. There appear 10 different Petrov-Penrose types. The symbols which are usually used as abbreviations of the corresponding Petrov-Penrose types of spinor $C_{ABCD}$ and the scheme of the roots of the polynomial $\mathcal{P}(z)$ are gathered in the Table \ref{typy_C_ABCD}. Note, that in neutral spaces types $[\textrm{I}_{r}]$, $[\textrm{II}_{r}]$, $[\textrm{III}_{r}]$, $[\textrm{D}_{r}]$ and $[\textrm{N}_{r}]$ decompose into product of real 1-index spinors\footnote{The subscript $r$ in the symbols $[\textrm{III}_{r}]$, $[\textrm{N}_{r}]$ and $[\textrm{O}_{r}]$ is basically redundant. However, we keep this subscript to distinguish the types in complex spaces from the types in neutral spaces.}. In \cite{Bor_Makhmali_Nurowski} such types have been called \textsl{the special real Petrov types}.
\begin{table}[!ht]
\begin{center}
\begin{tabular}{|c|c|c|c|c|c|}   \hline
\multicolumn{3}{|c|}{Complex case}  & \multicolumn{3}{|c|}{Real case}  \\  \hline
Type & $C_{ABCD}=$  &  Roots of $\mathcal{P} (z)$ & Type & $C_{ABCD}=$  & Roots of $\mathcal{P} (z)$ \\ \hline
$[\textrm{I}]$      & $a_{(A} b_{B} c_{C} d_{D)}$  & $Z_{1}Z_{2}Z_{3}Z_{4}$ &
$[\textrm{I}_{r}]$  & $m_{(A} n_{B} r_{C} s_{D)}$  &  $R_{1}R_{2}R_{3}R_{4}$ \\ \cline{4-6}
& & &
$[\textrm{I}_{rc}]$  & $m_{(A} n_{B} a_{C} \bar{a}_{D)}$  &  $R_{1}R_{2}Z\bar{Z}$ \\ \cline{4-6}
& & &
$[\textrm{I}_{c}]$   & $a_{(A} \bar{a}_{B} b_{C} \bar{b}_{D)}$ &  $Z_{1}\bar{Z}_{1}Z_{2}\bar{Z}_{2}$ \\ \hline
$[\textrm{II}]$      & $a_{(A} a_{B} b_{C} c_{D)}$  & $Z_{1}^{2} Z_{2}Z_{3}$ &
$[\textrm{II}_{r}]$  & $m_{(A} m_{B} n_{C} r_{D)}$  &  $R_{1}^{2}R_{2}R_{3}$ \\ \cline{4-6}
& & &
$[\textrm{II}_{rc}]$ & $m_{(A} m_{B} a_{C} \bar{a}_{D)}$ &  $R^{2}Z\bar{Z}$ \\ \hline
$[\textrm{D}]$       & $a_{(A} a_{B} b_{C} b_{D)}$   & $Z_{1}^{2}Z_{2}^{2}$ &
$[\textrm{D}_{r}]$   & $m_{(A} m_{B} n_{C} n_{D)}$   &  $R_{1}^{2} R_{2}^{2}$ \\ \cline{4-6}
& & &
$[\textrm{D}_{c}]$   & $a_{(A} a_{B} \bar{a}_{C} \bar{a}_{D)}$ &  $Z^{2} \bar{Z}^{2}$ \\ \hline
$[\textrm{III}]$     & $a_{(A} a_{B} a_{C} b_{D)}$    & $Z_{1}^{3} Z_{2}$ &
$[\textrm{III}_{r}]$ & $m_{(A} m_{B} m_{C} n_{D)}$    &  $R_{1}^{3}R_{2}$  \\ \hline
$[\textrm{N}]$       & $a_{A} a_{B} a_{C} a_{D}$   & $Z^{4}  $ &
$[\textrm{N}_{r}]$   & $m_{A} m_{B} m_{C} m_{D}$   &  $R^{4}  $ \\ \hline
$[\textrm{O}]$       & $0$                                             &  $-$ &
$[\textrm{O}_{r}]$   & $0$                                             & $-$ \\ \hline
\end{tabular}
\caption{Petrov-Penrose types of complex and real totally symmetric 4-index spinor. $Z$ means that a root is complex while $R$ stands for a real root; the power denotes the multiplicity of the corresponding root; spinors $a_{A}$, $b_{A}$, $c_{A}$ and $d_{A}$ are complex, spinors $m_{A}$, $n_{A}$, $r_{A}$ and $s_{A}$ are real; bar stands for the complex conjugation.}
\label{typy_C_ABCD}
\end{center}
\end{table}

Consider now a pair $\nu_{A}$ and $\mu_{A}$ of normalized spinors, $\nu^{A} \mu_{A}=1$. Such spinors form a basis of 1-index undotted spinors. The SD Weyl spinor can be presented in the form
\begin{eqnarray}
\label{krzywizna_rozlozona_na_wspolll}
2C_{ABCD} &=:& C^{(1)} \, \nu_{A}\nu_{B}\nu_{C}\nu_{D} + 4 C^{(2)} \, \mu_{(A}\nu_{B}\nu_{C}\nu_{D)} + 6 C^{(3)} \, \mu_{(A}\mu_{B}\nu_{C}\nu_{D)}  \ \ \ \ \ \ 
\\ \nonumber
&& + 4 C^{(4)} \, \mu_{(A}\mu_{B}\mu_{C}\nu_{D)} +  C^{(5)} \, \mu_{A}\mu_{B}\mu_{C}\mu_{D} 
\end{eqnarray}
where the scalars $C^{(i)}$, $i = 1,2,3,4,5$ are called \textsl{the SD conformal curvature coefficients}. 

For our further purposes it is convenient to mention how a spinorial basis $(\nu_{A}, \mu_{B})$ can be adapted to the structure of the SD Weyl spinor. Let the SD Weyl spinor be algebraically degenerate, $C_{ABCD} = a_{(A} a_{B} b_{C} c_{D)}$ with $a_{A}$ being a multiple Penrose spinor. If we choose a spinorial basis of undotted spinors in such a manner that $\nu_{A} \sim a_{A}$, then $C^{(5)} = C^{(4)}=0$. In such a case the conditions for the algebraic types of the SD Weyl spinor read
\begin{eqnarray}
\label{warunki_na_typy}
\textrm{type [II]}: && C^{(3)} \ne 0, 2C^{(2)} C^{(2)} - 3 C^{(1)} C^{(3)} \ne 0
\\ \nonumber
\textrm{type [D]}: && C^{(3)} \ne 0, 2C^{(2)} C^{(2)} - 3 C^{(1)} C^{(3)} = 0
\\ \nonumber
\textrm{type [III]}: && C^{(3)} = 0, C^{(2)} \ne 0
\\ \nonumber
\textrm{type [N]}: && C^{(3)} =  C^{(2)} = 0, C^{(1)} \ne 0
\\ \nonumber
\textrm{type [O]} : && C^{(3)} =  C^{(2)} =  C^{(1)} = 0
\end{eqnarray}
Now consider the case of neutral signature conformal structures. Let the SD Weyl spinor be algebraically degenerate with $m_{A}$ being a real multiple Penrose spinor. Then $\nu_{A} \sim m_{A}$ yields $C^{(5)} = C^{(4)}=0$ and 
\begin{eqnarray}
\label{warunki_na_typy_real}
\textrm{type } [\textrm{II}_r]: && C^{(3)} \ne 0, 2C^{(2)} C^{(2)} - 3 C^{(1)} C^{(3)} > 0
\\ \nonumber
\textrm{type } [\textrm{II}_{rc}]: && C^{(3)} \ne 0, 2C^{(2)} C^{(2)} - 3 C^{(1)} C^{(3)} < 0
\\ \nonumber
\textrm{type } [\textrm{D}_r]: && C^{(3)} \ne 0, 2C^{(2)} C^{(2)} - 3 C^{(1)} C^{(3)} = 0
\\ \nonumber
\textrm{type } [\textrm{III}_r]: && C^{(3)} = 0, C^{(2)} \ne 0
\\ \nonumber
\textrm{type } [\textrm{N}_r]: && C^{(3)} =  C^{(2)} = 0, C^{(1)} \ne 0
\\ \nonumber
\textrm{type } [\textrm{O}_r]: && C^{(3)} =  C^{(2)} =  C^{(1)} = 0
\end{eqnarray}

The classification presented in this Section can be applied, \textsl{mutatis mutandis}, to the ASD Weyl spinor $C_{\dot{A}\dot{B}\dot{C}\dot{D}}$.

\subsection{Congruences of null strings}
\label{congruences_of_null_strings}

A structure which play a fundamental role in the further considerations is \textsl{a congruence of null strings} (abbreviated by $\mathcal{C}$, see Section \ref{nomenclature} for explanation of all the abbreviations). We recall only basic properties of $\mathcal{C}s$, for deeper analysis of the topic, see, e.g., \cite{Plebanski_Rozga,Chudecki_struny}). 

Consider first a 2-dimensional SD holomorphic distribution  $\mathcal{D}_{m^{A}} = \{ m_{A} a_{\dot{A}}, m_{A} b_{\dot{A}} \}$, $a_{\dot{A}} b^{\dot{B}} \ne 0$. Such a distribution is defined by the Pfaff system
\begin{equation}
m_{A} g^{A \dot{B}} = 0
\end{equation}
$\mathcal{D}_{m^{A}}$ is integrable in the Frobenius sense if and only if the spinor field $m_{A}$ satisfies the equations
\begin{equation}
\label{rownania_strun_SD}
m^{A} m^{B} \nabla_{A \dot{C}} m_{B} = 0
\end{equation}
Eqs. (\ref{rownania_strun_SD}) are called \textsl{SD null string equations}. If Eqs. (\ref{rownania_strun_SD}) hold true one says that the spinor $m_{A}$ generates the \textsl{congruence of SD null strings}. The integral manifolds of the distribution $\mathcal{D}_{m^{A}}$ are totally null and geodesic, 2-dimensional SD holomorphic surfaces (\textsl{SD null strings}, also called \textsl{totally null planes}). The family of such surfaces constitute \textsl{the congruence of SD null strings}. Eqs. (\ref{rownania_strun_SD}) can be rewritten in the equivalent form
\begin{equation}
\label{rozwiniete_rownania_strun}
\nabla_{A \dot{C}} m_{B} = Z_{A \dot{C}} m_{B} + \epsilon_{AB} M_{\dot{C}}
\end{equation}
where $Z_{A \dot{C}}$ is \textsl{the Sommers vector} and $M_{\dot{C}}$ is \textsl{the expansion of the congruence}\footnote{Note, that expansion of the congruence of null strings is a different concept than the expansion of the congruence of null geodesics.}. To understand better the geometrical meaning of $M_{\dot{C}}$ let $\mathcal{X}=X^{A\dot{B}} \partial_{A\dot{B}}$ be an arbitrary vector field and $\mathcal{V}=V^{A\dot{B}} \partial_{A\dot{B}} \in \mathcal{D}_{m^{A}} \Longleftrightarrow V_{A\dot{B}}=m_{A} V_{\dot{B}} $. Then
\begin{equation}
\nabla_{\mathcal{X}} V_{M \dot{N}} = m_{M} Y_{\dot{M}} + V_{\dot{M}} X_{M \dot{B}} M^{\dot{B}}
\end{equation}
Hence, $\nabla_{\mathcal{X}} \mathcal{V} \in \mathcal{D}_{m^{A}}$ for any vector field $\mathcal{V} \in \mathcal{D}_{m^{A}}$ and for an arbitrary vector field $\mathcal{X}$ if and only if $M_{\dot{A}} = 0$ holds. In such a case $\mathcal{D}_{m^{A}}$ is parallely propagated and the family of the null strings is a set of \textsl{totally null parallel planes} \cite{Walker}. According to Plebański - Robinson - Rózga terminology $\mathcal{C}s$ with vanishing expansion are called \textsl{nonexpanding} (or \textsl{plane}) while $\mathcal{C}s$ with $M_{\dot{A}} \ne 0$ are called \textsl{expanding} (or \textsl{deviating}). 

ASD $\mathcal{C}s$ are similarly defined but they are generated by dotted spinors. If a spinor $m_{\dot{A}}$ generates an ASD $\mathcal{C}$ then it satisfies \textsl{equations of ASD null strings} $m^{\dot{C}} m^{\dot{B}} \nabla_{A \dot{C}} m_{\dot{B}}=0$ or, equivalently
\begin{equation}
\label{ASD_null_string_equations}
\nabla_{A \dot{C}} m_{\dot{B}} = \dot{Z}_{A \dot{C}} m_{\dot{B}} + \epsilon_{\dot{C}\dot{B}} M_{A}
\end{equation}
where $\dot{Z}_{A \dot{C}}$ is the Sommers vector of $\mathcal{C}_{m^{\dot{A}}}$ and $M_{A}$ is the expansion of $\mathcal{C}_{m^{\dot{A}}}$. Note, that expansion of SD $\mathcal{C}_{m^{A}}$ (ASD $\mathcal{C}_{m^{\dot{A}}}$) is given by the dotted $M_{\dot{A}}$ (undotted $M_{A}$) spinor field.

If $C_{ABCD} \ne 0$ ($C_{\dot{A}\dot{B}\dot{C}\dot{D}} \ne 0$) then the following facts hold true (see \cite{Plebanski_Rozga,Chudecki_struny} for proofs)
\begin{itemize}
\item if a spinor $m_{A}$ ($m_{\dot{A}}$) generates a congruence of SD (ASD) null strings, then it is a undotted (dotted) Penrose spinor
\item if a spinor $m_{A}$ ($m_{\dot{A}}$) generates a nonexpanding congruence of SD (ASD) null strings, then it is a multiple undotted (dotted) Penrose spinor. 
\end{itemize}
From these facts it follows that if $C_{ABCD} \ne 0$ ($C_{\dot{A}\dot{B}\dot{C}\dot{D}} \ne 0$) then the maximal number of distinct SD (ASD) $\mathcal{C}s$ is 4 and such a case is possible only if the SD (ASD) Weyl spinor is of the type $[\textrm{I}]$. All the possibilities are presented in the Scheme \ref{Degeneration_scheme_of_complex_case}. 
\begin{Scheme}[!ht]
\begin{displaymath}
%\resizebox{1\textwidth}{!}{
\xymatrixcolsep{0.0cm}
\xymatrixrowsep{0.8cm}
\xymatrix{
0 \ \mathcal{C}s: &  [\textrm{I}]  \ar[d] & [\textrm{II}]  \ar[d]  \ar[dr]  & & [\textrm{III}]  \ar[d] \ar[dr] &    &
[\textrm{D}]  \ar[d]  \ar[dr]  &  &  &  [\textrm{N}]  \ar[d]  \ar[dr] & &  \\ 
1 \ \mathcal{C}:  & [\textrm{I}]^{e} \ar[d]  &  [\textrm{II}]^{e} \ar[d] \ar[dr] & [\textrm{II}]^{n} \ar[d]  &  [\textrm{III}]^{e} \ar[d]  \ar[dr] & [\textrm{III}]^{n} \ar[d] & 
[\textrm{D}]^{e} \ar[d]  \ar[dr] & [\textrm{D}]^{n} \ar[d]  \ar[dr] &  &  [\textrm{N}]^{e}  & [\textrm{N}]^{n}  &  \\
2 \ \mathcal{C}s:  & [\textrm{I}]^{ee}  \ar[d]  &  [\textrm{II}]^{ee} \ar[d] \ar[dr] & [\textrm{II}]^{en} \ar[d] &  [\textrm{III}]^{ee}  & [\textrm{III}]^{en}  & 
[\textrm{D}]^{ee}  & [\textrm{D}]^{en}  & [\textrm{D}]^{nn} &    &   &  \\
3 \ \mathcal{C}s:  & [\textrm{I}]^{eee} \ar[d]  &  [\textrm{II}]^{eee}  & [\textrm{II}]^{een}  &  &  & &   &  &  &   &  \\
4 \ \mathcal{C}s:  & [\textrm{I}]^{eeee}   &    & &  &  & &   &  &  &   &  \\
}
%}
\end{displaymath} 
\caption{Types of the Weyl spinors in spaces equipped with different numbers of congruences of null strings in complex case.}
\label{Degeneration_scheme_of_complex_case}
\end{Scheme}

A little more complicated scheme holds true for a neutral case. Note, that real $\mathcal{C}$ is generated by a real Penrose spinor. It implies that if the SD (ASD) Weyl spinor is of the types $[\textrm{I}_{c}]$ or $[\textrm{D}_{c}]$ then a space does not admit any SD (ASD) $\mathcal{C}s$, see Scheme \ref{Degeneration_scheme_of_real_case}.
\begin{Scheme}[!ht]
\begin{displaymath}
\resizebox{1\textwidth}{!}{
\xymatrixcolsep{0.0cm}
\xymatrixrowsep{0.8cm}
\xymatrix{
0 \ \mathcal{C}s: &  [\textrm{I}_{r}]  \ar[d] & [\textrm{I}_{rc}]  \ar[d] & [\textrm{I}_{c}] & [\textrm{II}_{r}]  \ar[d]  \ar[dr]  & & [\textrm{II}_{rc}] \ar[d]  \ar[dr] & & [\textrm{III}_{r}]  \ar[d] \ar[dr] &    &
[\textrm{D}_{r}]  \ar[d]  \ar[dr]  &  &  & [\textrm{D}_{c}] & [\textrm{N}_{r}]  \ar[d]  \ar[dr] & &  \\ 
1 \ \mathcal{C}:  & [\textrm{I}_{r}]^{e} \ar[d]  & [\textrm{I}_{rc}]^{e} \ar[d] &  & [\textrm{II}_{r}]^{e} \ar[d] \ar[dr] & [\textrm{II}_{r}]^{n} \ar[d]  & [\textrm{II}_{rc}]^{e} & [\textrm{II}_{rc}]^{n} & [\textrm{III}_{r}]^{e} \ar[d]  \ar[dr] & [\textrm{III}_{r}]^{n} \ar[d] & 
[\textrm{D}_{r}]^{e} \ar[d]  \ar[dr] & [\textrm{D}_{r}]^{n} \ar[d]  \ar[dr] &  & & [\textrm{N}_{r}]^{e}  & [\textrm{N}_{r}]^{n}  &  \\
2 \ \mathcal{C}s:  & [\textrm{I}_{r}]^{ee}  \ar[d] & [\textrm{I}_{rc}]^{ee}  & &  [\textrm{II}_{r}]^{ee} \ar[d] \ar[dr] & [\textrm{II}_{r}]^{en} \ar[d] & & &  [\textrm{III}_{r}]^{ee}  & [\textrm{III}_{r}]^{en}  & 
[\textrm{D}_{r}]^{ee}  & [\textrm{D}_{r}]^{en}  & [\textrm{D}_{r}]^{nn} &  &  &   &  \\
3 \ \mathcal{C}s:  & [\textrm{I}_{r}]^{eee} \ar[d]  & & &  [\textrm{II}_{r}]^{eee}  & [\textrm{II}_{r}]^{een} & & &  &  & &   &  &  &  & &  \\
4 \ \mathcal{C}s:  & [\textrm{I}_{r}]^{eeee} & &  & & &  & &  &  & &  & &  &  &   &  \\
}
}
\end{displaymath} 
\caption{Types of the Weyl spinors in spaces equipped with different numbers of congruences of null strings in neutral case.}
\label{Degeneration_scheme_of_real_case}
\end{Scheme}

\subsection{Properties of intersections of SD and ASD congruences of null strings}

Consider a space which is equipped with $\mathcal{C}_{m^{A}}$ (with the expansion $M_{\dot{A}}$) and $\mathcal{C}_{m^{\dot{A}}}$ (with the expansion $M_{A}$). These $\mathcal{C}s$ intersect and this intersection constitutes a congruence of (complex) null geodesics. Properties of such $\mathcal{I}$  have been investigated in \cite{Plebanski_Rozga} and then in \cite{Chudecki_typy_N}. Let $K_{a}$ be a null vector field along the $\mathcal{I}(\mathcal{C}_{m^{A}}, \mathcal{C}_{m^{\dot{A}}})$. Hence, $K_{a} \sim m_{A}m_{\dot{A}}$. Then one defines \textsl{the expansion} $\theta$ and \textsl{the twist} $\varrho$ of $\mathcal{I}(\mathcal{C}_{m^{A}}, \mathcal{C}_{m^{\dot{A}}})$ as follows\footnote{Note, that these definitions are the same like in the Lorentzian case. Nevertheless, the geometrical interpretation of $\theta$ and $\varrho$ in the complex case is not clear yet.}
\begin{subequations}
\begin{eqnarray}
\theta &:=& \frac{1}{2} \nabla^{a}K_{a} 
\\
\varrho^{2} &:=& \frac{1}{2} \nabla_{[a}K_{b]} \, \nabla^{a}K^{b} 
\end{eqnarray}
\end{subequations} 
It has been proven in \cite{Chudecki_typy_N} that if $\mathcal{I}(\mathcal{C}_{m^{A}}, \mathcal{C}_{m^{\dot{A}}})$ is in an affine parametrization then $\theta$ and $\varrho$ are proportional to the following scalars
\begin{eqnarray}
\label{properties_of_the_null_geodesics}
\theta & \sim &  m_{A} M^{A} + m_{\dot{A}} M^{\dot{A}}
\\ \nonumber
\varrho & \sim &  m_{A} M^{A} - m_{\dot{A}} M^{\dot{A}}
\end{eqnarray}
Hence, $\theta$ and $\varrho$ of $\mathcal{I}(\mathcal{C}_{m^{A}}, \mathcal{C}_{m^{\dot{A}}})$ depend on expansions $M_{\dot{A}}$ and $M_{A}$. There are four possibilities for which we propose the following symbols
\begin{eqnarray}
\label{definicja_wlasnosci_kongruencji_zerowych_geodezyjnych}
[++]: \, \theta \ne 0, \varrho \ne 0
\\ \nonumber
[+-]: \, \theta \ne 0, \varrho = 0
\\ \nonumber
[-+]: \, \theta = 0, \varrho \ne 0
\\ \nonumber
[--]: \, \theta = 0, \varrho = 0
\end{eqnarray}
Consequently, one arrives at the Table \ref{Mozliwe_przeciecia_kongruencji}. Note, that $\mathcal{I}^{-+}$ and $\mathcal{I}^{+-}$ appear only as an intersection of expanding $\mathcal{C}s$. There are two different types of $\mathcal{I}^{++}$. Namely, $\mathcal{I}^{++} = \mathcal{I}(\mathcal{C}^{n},\mathcal{C}^{e})$ or $\mathcal{I}^{++} = \mathcal{I}(\mathcal{C}^{e},\mathcal{C}^{e})$. There are also three different types of $\mathcal{I}^{--}$: $\mathcal{I}^{--} = \mathcal{I}(\mathcal{C}^{n},\mathcal{C}^{n})$ or $\mathcal{I}^{--} = \mathcal{I}(\mathcal{C}^{n},\mathcal{C}^{e}) = \mathcal{I}(\mathcal{C}^{e},\mathcal{C}^{n})$ or $\mathcal{I}^{--} = \mathcal{I}(\mathcal{C}^{e},\mathcal{C}^{e})$.
\begin{table}[h]
\begin{center}
\begin{tabular}{|c|c|c|}   \hline
Expansions &  $M^{A}=0$  &  $M^{A} \ne 0$    \\  \hline
$M^{\dot{A}}=0$ & $[--]$ & $[--]$, \, $[++]$ \\ \hline
$M^{\dot{A}} \ne 0$ & $[--]$, \, $[++]$ & $[--]$, \, $[-+]$, \, $[+-]$, \, $[++]$ \\ \hline
\end{tabular}
\caption{Types of congruences of null geodesics via properties of congruences of null strings.}
\label{Mozliwe_przeciecia_kongruencji}
\end{center}
\end{table}

\subsection{Classification of spaces equipped with at most 2 different congruences of SD and ASD null strings}

Consider now four distinct $\mathcal{C}s$ (two SD and two ASD): $\mathcal{C}_{m^{A}}$, $\mathcal{C}_{n^{A}}$,  $\mathcal{C}_{m^{\dot{A}}}$ and $\mathcal{C}_{n^{\dot{A}}}$; $m^{A} n_{A} \ne 0$, $m^{\dot{A}} n_{\dot{A}} \ne 0$. Their properties are gathered in the Table \ref{cztery_kongruencje}. Properties of the intersections of these $\mathcal{C}s$ are listed in the Table \ref{cztery_kongruencje_przeciecia}.
\begin{table}[ht]
\begin{center}
\begin{tabular}{|c|c|c|c|c|}   \hline
 Congruence  &   Duality   & Spinor  & Expansion & Tangent vector    \\ \hline \hline
 $\mathcal{C}_{m^{A}}$ & SD & $m_{A}$  & $M_{\dot{A}}$  & $ \{ m_{A} x_{\dot{B}} ,  m_{A} y_{\dot{B}} \}$, $x_{\dot{B}}y^{\dot{B}} \ne 0$ \\ \hline
 $\mathcal{C}_{n^{A}}$ & SD & $n_{A}$  & $N_{\dot{A}}$  & $ \{ n_{A} x_{\dot{B}} ,  n_{A} y_{\dot{B}} \}$, $x_{\dot{B}}y^{\dot{B}} \ne 0$ \\ \hline
  $\mathcal{C}_{m^{\dot{A}}}$ & ASD & $m_{\dot{A}}$  & $M_{A}$  & $ \{ x_{B} m_{\dot{A}}  ,  y_{B} m_{\dot{A}} \}$, $x_{B}y^{B} \ne 0$ \\ \hline
  $\mathcal{C}_{n^{\dot{A}}}$ & ASD & $n_{\dot{A}}$  & $N_{A}$  & $ \{ x_{B} n_{\dot{A}}  ,  y_{B} n_{\dot{A}} \}$, $x_{B}y^{B} \ne 0$ \\ \hline
\end{tabular}
\caption{Four distinct congruences of null strings.}
\label{cztery_kongruencje}
\end{center}
\end{table}

\begin{table}[ht]
\begin{center}
\begin{tabular}{|c|c|c|c|}   \hline
 Intersection  &   Tangent vector   & Expansion   & Twist    \\ \hline \hline
 $\mathcal{I} (\mathcal{C}_{m^{A}}, \mathcal{C}_{m^{\dot{A}}})$ & $m_{A}m_{\dot{A}}$ & $\theta \sim m_{A}M^{A} + m_{\dot{A}} M^{\dot{A}}$  & $\varrho \sim m_{A}M^{A} - m_{\dot{A}} M^{\dot{A}}$   \\ \hline
 $\mathcal{I} (\mathcal{C}_{m^{A}}, \mathcal{C}_{n^{\dot{A}}})$ & $m_{A}n_{\dot{A}}$ & $\theta \sim m_{A}N^{A} + n_{\dot{A}} M^{\dot{A}}$  & $\varrho \sim m_{A}N^{A} - n_{\dot{A}} M^{\dot{A}}$   \\ \hline
$\mathcal{I} (\mathcal{C}_{n^{A}}, \mathcal{C}_{m^{\dot{A}}})$  & $n_{A}m_{\dot{A}}$ & $\theta \sim n_{A}M^{A} + m_{\dot{A}} N^{\dot{A}}$  & $\varrho \sim n_{A}M^{A} - m_{\dot{A}} N^{\dot{A}}$   \\ \hline
 $\mathcal{I} (\mathcal{C}_{n^{A}}, \mathcal{C}_{n^{\dot{A}}})$ & $n_{A}n_{\dot{A}}$ & $\theta \sim n_{A}N^{A} + n_{\dot{A}} N^{\dot{A}}$  & $\varrho \sim n_{A}N^{A} - n_{\dot{A}} N^{\dot{A}}$   \\ \hline
\end{tabular}
\caption{Four intersections of congruences of null strings.}
\label{cztery_kongruencje_przeciecia}
\end{center}
\end{table}

We propose a classification of spaces equipped with such structures in terms of the properties of $\mathcal{C}s$ and $\mathcal{I}s$. In the symbol (\ref{symbol_typu}) Petrov-Penrose types of both SD and ASD Weyl spinors as well as a number and properties of $\mathcal{C}s$ are gathered. However, (\ref{symbol_typu}) does not cover the properties of $\mathcal{I}s$. Thus, we extend the symbol (\ref{symbol_typu}) to the form 
\begin{subequations}
\label{symbol_na_CCCC}
\begin{eqnarray}
\label{symbol_na_C_C}
\textrm{if } \mathcal{C}_{m^{A}} \textrm{ and }  \mathcal{C}_{m^{\dot{A}}} \ \textrm{exist:} \ \  \ \ \ \ \ \ \ \ \ \ \ \;  && \{ [ \textrm{SD}_{\textrm{type}} ]^{i_{1}} \otimes [ \textrm{ASD}_{\textrm{type}} ]^{j_{1}}, [k_{1}] \}
\\ 
\label{symbol_na_C_CC}
\textrm{if } \mathcal{C}_{m^{A}}, \mathcal{C}_{m^{\dot{A}}} \textrm{ and } \mathcal{C}_{n^{\dot{A}}}\ \textrm{exist:} \ \ \ \ \ \ \ \; && \{ [ \textrm{SD}_{\textrm{type}} ]^{i_{1}} \otimes [ \textrm{ASD}_{\textrm{type}} ]^{j_{1}j_{2}}, [k_{1},k_{2}] \}
\\ 
\label{symbol_na_CC_CC}
\textrm{if } \mathcal{C}_{m^{A}},\mathcal{C}_{n^{A}}, \mathcal{C}_{m^{\dot{A}}} \textrm{ and } \mathcal{C}_{n^{\dot{A}}}\ \textrm{exist:} \ \ && \{ [ \textrm{SD}_{\textrm{type}} ]^{i_{1}i_{2}} \otimes [ \textrm{ASD}_{\textrm{type}} ]^{j_{1}j_{2}}, [k_{1},k_{2},k_{3},k_{4}] \} \ \ \ \ \ \ 
\end{eqnarray}
\end{subequations}
where the superscripts $i_{1}, i_{2}, j_{1}, j_{2} = \{ n , e \}$ carry information about the expansion of  $\mathcal{C}_{m^{A}}$, $\mathcal{C}_{n^{A}}$, $\mathcal{C}_{m^{\dot{A}}}$ and $\mathcal{C}_{n^{\dot{A}}}$, respectively. Then, $k_{1}, k_{2}, k_{3}, k_{4} = \{ --,-+,+-,++ \} $ are the properties of $\mathcal{I} (\mathcal{C}_{m^{A}}, \mathcal{C}_{m^{\dot{A}}})$, $\mathcal{I} (\mathcal{C}_{m^{A}}, \mathcal{C}_{n^{\dot{A}}})$, $\mathcal{I} (\mathcal{C}_{n^{A}}, \mathcal{C}_{m^{\dot{A}}})$ and $\mathcal{I} (\mathcal{C}_{n^{A}}, \mathcal{C}_{n^{\dot{A}}})$, respectively. Note, that the properties of $\mathcal{I}s$ are listed in order which is very important. Hence, $[--,++]$ and $[++,--]$ are not, in general, the same geometries. Subtleties hidden in the symbols (\ref{symbol_na_CCCC}) will be explained with help of the three examples. 

\textbf{Example 1.} Consider the type 
\begin{equation}
\nonumber
\{ [ \textrm{II} ]^{e} \otimes [ \textrm{D} ]^{nn}, [--,++] \}
\end{equation}
In this case the SD Weyl spinor is of the type [II] and the SD Weyl spinor is of the type [D]. The space is equipped with $\mathcal{C}^{e}_{m^{A}}$, $\mathcal{C}^{n}_{m^{\dot{A}}}$ and $\mathcal{C}^{n}_{n^{\dot{A}}}$ with the following properties of $\mathcal{I}s$: $\mathcal{I} (\mathcal{C}_{m^{A}}, \mathcal{C}_{m^{\dot{A}}})=\mathcal{I}^{--}$,  $\mathcal{I} (\mathcal{C}_{m^{A}}, \mathcal{C}_{n^{\dot{A}}}) = \mathcal{I}^{++}$. Because both ASD $\mathcal{C}s$ are nonexpanding, the types $\{ [ \textrm{II} ]^{e} \otimes [ \textrm{D} ]^{nn}, [--,++] \}$ and $\{ [ \textrm{II} ]^{e} \otimes [ \textrm{D} ]^{nn}, [++,--] \} $ are, in fact, the same geometries. Hence, in this case one can replace $[--,++]$ with $[++,--]$ without changing the type. In general such a replacement leads to different types (see the next example).

\textbf{Example 2.} Consider now two types 
\begin{equation}
\nonumber
\{ [ \: \cdot \: ]^{e} \otimes [ \: \cdot \: ]^{ne}, [--,++] \}, \  \{ [ \: \cdot \: ]^{e} \otimes [ \: \cdot \: ]^{ne}, [++,--] \}
\end{equation}
Both these types are equipped with three $\mathcal{C}s$: $\mathcal{C}^{e}_{m^{A}}$, $\mathcal{C}^{n}_{m^{\dot{A}}}$  and $\mathcal{C}^{e}_{n^{\dot{A}}}$ and in both the cases $\mathcal{I}s$ are of the type $[--]$ and $[++]$. At the first glance, both types are the same geometries. However, closer investigation shows that in the first case $\mathcal{I} (\mathcal{C}^{e}_{m^{A}}, \mathcal{C}^{n}_{m^{\dot{A}}})=\mathcal{I}^{--}$ and $\mathcal{I} (\mathcal{C}^{e}_{m^{A}}, \mathcal{C}^{e}_{n^{\dot{A}}})=\mathcal{I}^{++}$ hold while in the second case $\mathcal{I} (\mathcal{C}^{e}_{m^{A}}, \mathcal{C}^{n}_{m^{\dot{A}}})=\mathcal{I}^{++}$ and $\mathcal{I} (\mathcal{C}^{e}_{m^{A}}, \mathcal{C}^{e}_{n^{\dot{A}}})=\mathcal{I}^{--}$ hold. Hence, these types represent different geometries. This example shows how important is the order in which properties of $\mathcal{I}s$ are listed.

\textbf{Example 3.} The last example is the most subtle. Consider
\begin{equation}
\nonumber
\{ [ \: \cdot \: ]^{nn} \otimes [ \: \cdot \: ]^{ee}, [--,--,++,++] \}, \  \{ [ \: \cdot \: ]^{nn} \otimes [ \: \cdot \: ]^{ee}, [--,++,++,--] \}
\end{equation}
In these cases all $\mathcal{I}s$ are $\mathcal{I} (\mathcal{C}^{n}, \mathcal{C}^{e})$. Also, both types are equipped with two $\mathcal{I}^{--}$ and two $\mathcal{I}^{++}$. However, the type $\{ [ \: \cdot \: ]^{nn} \otimes [ \: \cdot \: ]^{ee}, [--,--,++,++] \}$ is characterized by the condition $\mathcal{I} (\mathcal{C}^{n}_{m^{A}}, \mathcal{C}^{e}_{m^{\dot{A}}})=\mathcal{I} (\mathcal{C}^{n}_{m^{A}}, \mathcal{C}^{e}_{n^{\dot{A}}})=\mathcal{I}^{--}$. Type $\{ [ \: \cdot \: ]^{nn} \otimes [ \: \cdot \: ]^{ee}, [--,++,++,--] \}$ is slightly different because in this type $\mathcal{I} (\mathcal{C}^{n}_{m^{A}},\mathcal{C}^{e}_{m^{\dot{A}}}) = \mathcal{I}^{--}$ and $\mathcal{I} (\mathcal{C}^{n}_{m^{A}},\mathcal{C}^{e}_{n^{\dot{A}}})= \mathcal{I}^{++}$ hold.

In Appendix \ref{Dodatek_klasyfikacja} a detailed classification of spaces equipped with at most two SD and two ASD $\mathcal{C}s$ is given. Also it is shown how a null tetrad can be adapted to $\mathcal{C}s$.

\textbf{Remark}. Note, that $\mathcal{I} (\mathcal{C}^{n}, \mathcal{C}^{n})$ is always $\mathcal{I}^{--}$. In such a case the symbols (\ref{symbol_na_CCCC}) can be simplified by omitting $[k_{1}]$, $[k_{1}, k_{2}]$ and $[k_{1}, k_{2}, k_{3}, k_{4}]$ parts. Thus, in what follows the type $\{ [ \: \cdot \: ]^{n} \otimes [ \: \cdot \: ]^{n}, [--] \}$ will be replaced by the simpler symbol $[ \: \cdot \: ]^{n} \otimes [ \: \cdot \: ]^{n}$; $\{ [ \: \cdot \: ]^{n} \otimes [ \: \cdot \: ]^{nn}, [--,--] \}$ by $[ \: \cdot \: ]^{n} \otimes [ \: \cdot \: ]^{nn}$ and finally $\{ [ \: \cdot \: ]^{nn} \otimes [ \: \cdot \: ]^{nn}, [--,--,--,--] \}$ by $ [ \: \cdot \: ]^{nn} \otimes [ \: \cdot \: ]^{nn}$.

\subsection{Weak hyperheavenly spaces}
\label{subsekcja_degn_x_any}

\subsubsection{The metric, connection and curvature}

In this section we introduce basic definitions and notations of the \textsl{weak hyperheavenly spaces}. Such spaces have been defined in \cite{Chudecki_Przanowski_Walkery}. 
\begin{Definicja}
\label{definicja_weak_HH_spaces}
\textsl{Weak hyperheavenly space (weak $\mathcal{HH}$-space)} is a pair $(\mathcal{M}, ds^{2})$ where $\mathcal{M}$ is a 4-dimensional complex analytic differential manifold and $ds^{2}$ is a holomorphic metric, satisfying the following conditions:
\begin{enumerate}[label=(\roman*)]
\item there exists a 2-dimensional holomorphic totally null self-dual integrable distribution given by the Pfaff system
\begin{equation}
\nonumber
%\label{condition_1}
m_{A} \, g^{A \dot{B}} = 0  , \ m_{A} \ne 0
\end{equation}
\item the SD Weyl spinor $C_{ABCD}$ is algebraically degenerate and $m_{A}$ is a multiple Penrose spinor i.e.
\begin{equation}
\nonumber
%\label{condition_2}
C_{ABCD} \, m^{A} m^{B} m^{C} =0 
\end{equation}
\end{enumerate}
\end{Definicja}
Hence, weak $\mathcal{HH}$-space is a space of type $[\textrm{deg}]^{e} \otimes [\textrm{any}]$. In what follows we specialize the SD $\mathcal{C}$ to be nonexpanding. Note, that if $\mathcal{C}$ is nonexpanding then it is generated by a spinor which is a multiple Penrose spinor. Thus - for \textsl{weak nonexpanding $\mathcal{HH}$-spaces} (i.e., types $[\textrm{deg}]^{n} \otimes [\textrm{any}]$) - the condition $(ii)$ in Definition \ref{definicja_weak_HH_spaces} is implied by the condition $(i)$.

The metric of weak nonexpanding $\mathcal{HH}$-space can be brought to the form \cite{Chudecki_Przanowski_Walkery}
\begin{equation}
\label{metryka_slaba_HH}
\frac{1}{2} \, ds^{2}=  -dp^{\dot{A}}  dq_{\dot{A}} + Q^{\dot{A}\dot{B}} \, dq_{\dot{A}}  dq_{\dot{B}}= e^{1}e^{2} + e^{3} e^{4}
\end{equation}
where $(q_{\dot{A}}, p_{\dot{B}})$ are local complex coordinates. The coordinate system $(q_{\dot{A}}, p_{\dot{B}})$ is chosen in such a manner that $p_{\dot{A}}$ are coordinates on null strings while coordinates $q_{\dot{A}}$ label the null strings. $Q^{\dot{A}\dot{B}} = Q^{(\dot{A}\dot{B})}$ are holomorphic functions of variables $(q^{\dot{A}}, p^{\dot{B}})$. After imposing the reality conditions $S_{r}$ the metric (\ref{metryka_slaba_HH}) becomes the metric of the Walker space. 

An appropriate choice of a null tetrad is essential for our further purposes. It is convenient to work with the null tetrad $(e^{1}, e^{2}, e^{3}, e^{4})$ defined as follows
\begin{eqnarray}
\label{tetrada_spinorowa}
[ e^{3} , e^{1}  ] &=& -\frac{1}{\sqrt{2}} \, g^{2}_{\ \, \dot{A}} =  dq_{\dot{A}}
\\ \nonumber
[  e^{4} , e^{2}  ]   &=& \frac{1}{\sqrt{2}} \, g^{1 \dot{A}} = -dp^{\dot{A}} + Q^{\dot{A} \dot{B}} \, dq_{\dot{B}}
\end{eqnarray}
The operators
\begin{eqnarray}
&& \partial_{\dot{A}} := \frac{\partial}{\partial p^{\dot{A}}}, \ \eth_{\dot{A}} :=  \frac{\partial}{\partial q^{\dot{A}}} - Q_{\dot{A}}^{\ \ \dot{B}} \partial_{\dot{B}} 
\\ \nonumber
&& \partial^{\dot{A}} := \frac{\partial}{\partial p_{\dot{A}}}, \ \eth^{\dot{A}} :=   \frac{\partial}{\partial q_{\dot{A}}} + Q^{\dot{A} \dot{B}} \partial_{\dot{B}} 
\end{eqnarray}
form the dual basis
\begin{equation}
-\partial_{\dot{A}} = [\partial_{4}, \partial_{2} ] , \ 
\eth^{\dot{A}} = [ \partial_{3} , \partial_{1}], \ \Longrightarrow \ \partial_{A\dot{B}}  = \sqrt{2} \, [ \partial_{\dot{B}}, \eth_{\dot{B}}]
\end{equation}
The null tetrad defined by (\ref{tetrada_spinorowa}) is called \textsl{Plebański tetrad}. It is easy to see that null strings (leaves of SD $\mathcal{C}$) are spanned by the vectors $(\partial_{2}, \partial_{4})$, i.e., $\mathcal{C}^{n}_{m^{A}}$ is generated by the spinor $m_{A} = [0,m]$, $m \ne 0$. In this sense Plebański tetrad is \textsl{adapted} to the SD $\mathcal{C}$.

From the first structure equations one finds that the only nonzero spinorial connection coefficients read
\begin{equation}
\label{rozpisane_wspolczynniki_koneksji_spinorowej}
\mathbf{\Gamma}_{122 \dot{D}} = - \frac{1}{\sqrt{2}} \,  \partial^{\dot{A}}  Q_{\dot{A}\dot{D}}, \
\mathbf{\Gamma}_{222 \dot{D}} = -\sqrt{2}  \, \eth^{\dot{A}} Q_{\dot{A}\dot{D}}, \
\mathbf{\Gamma}_{\dot{A} \dot{B} 2 \dot{D}} =  \sqrt{2} \,  
 \partial_{(\dot{A}} Q_{\dot{B})\dot{D}}  
\end{equation}
Nonzero SD curvature coefficients $C^{(i)}$, the curvature scalar $R$ and the ASD Weyl spinor take the form
\begin{eqnarray}
\label{krzywizna}
&&  C^{(3)} = \frac{R}{6} = -\frac{1}{3} \, \partial_{\dot{A}} \partial_{\dot{B}} Q^{\dot{A} \dot{B}}, \
 C^{(2)}  = - \,  \partial^{\dot{A}}  \eth^{\dot{B}} Q_{\dot{A} \dot{B}}
\\ \nonumber
&& \frac{1}{2} \, C^{(1)}  = 
- \eth^{\dot{A}}  \eth^{\dot{B}} Q_{\dot{A} \dot{B}}  +
 (\eth^{\dot{A}} Q_{\dot{A} \dot{B}})
(\partial_{\dot{C}} Q^{\dot{B} \dot{C}}) , \ C_{\dot{A}\dot{B}\dot{C}\dot{D}} = -  
\partial_{(\dot{A}}\partial_{\dot{B}} Q_{\dot{C} \dot{D})}
\end{eqnarray}
The traceless Ricci tensor is given by the formulas
\begin{equation}
\label{Traceless_Ricci}
C_{11 \dot{A} \dot{B}} =0, \ C_{12 \dot{A} \dot{B}} = - \frac{1}{2}
 \partial_{(\dot{A}} \partial^{\dot{C}} Q_{\dot{B}) \dot{C}} , \
C_{22 \dot{A} \dot{B}} = -  \partial_{(\dot{A}}  \eth^{\dot{C}} Q_{\dot{B}) \dot{C}}
\end{equation}

The metric (\ref{metryka_slaba_HH}) remains invariant under the following transformations of the coordinates
\begin{equation}
\label{gauge}
q'^{\dot{A}} = q'^{\dot{A}} (q^{\dot{M}}), \ p'^{\dot{A}} = D^{-1 \ \ \dot{A}}_{\ \ \; \dot{B}} \, p^{\dot{B}} + \sigma^{\dot{A}}
\end{equation}
where $\sigma^{\dot{A}}=\sigma^{\dot{A}}(q_{\dot{B}})$ are arbitrary functions and 
\begin{eqnarray}
\nonumber
D_{\dot{A}}^{\ \ \dot{B}} 
&:=& \frac{\partial q'_{\dot{A}}}{\partial q_{\dot{B}}} 
= \Delta \, \frac{\partial q^{\dot{B}}}{\partial q'^{\dot{A}}}
= \Delta \, \frac{\partial p'_{\dot{A}}}{\partial p_{\dot{B}}}
= \frac{\partial p^{\dot{B}}}{\partial p'^{\dot{A}}} , \ \Delta := \det 
\left( \frac{\partial q'_{\dot{A}}}{\partial q_{\dot{B}}} \right)
 = \frac{1}{2} \, D_{\dot{A}\dot{B}} D^{\dot{A}\dot{B}}
\\ 
D^{-1 \ \ \dot{B}}_{\ \ \; \dot{A}} 
&=& \frac{\partial q_{\dot{A}}}{\partial q'_{\dot{B}}}
= \Delta^{-1} \, \frac{\partial q'^{\dot{B}}}{\partial q^{\dot{A}}}
= \frac{\partial p'^{\dot{B}}}{\partial p^{\dot{A}}}
= \Delta^{-1} \, \frac{\partial p_{\dot{A}}}{\partial p'_{\dot{B}}}
\end{eqnarray}
Hence
\begin{eqnarray}
 D^{\, \dot{A}}_{\ \ \dot{B}} &:=& D_{\dot{M}}^{\ \ \dot{N}} \epsilon^{\dot{M} \dot{A}} \epsilon_{\dot{B} \dot{N}} = - \Delta \, D^{-1 \ \ \dot{A}}_{\ \ \; \dot{B}}
\\ \nonumber
D^{-1 \; \dot{A}}_{\ \ \ \ \ \dot{B}} &:=& D^{-1 \ \ \dot{N}}_{\ \ \; \dot{M}} \epsilon^{\dot{M} \dot{A}} \epsilon_{\dot{B} \dot{N}} = - \frac{1}{\Delta} \, D_{\dot{B}}^{\ \ \dot{A}}
\end{eqnarray}
Functions $Q^{\dot{A}\dot{B}}$ transform under (\ref{gauge}) as follows
\begin{equation}
\label{transformacja_Q}
Q'^{\dot{A} \dot{B}} = D^{-1 \ \ \dot{A}}_{\ \ \; \dot{R}} \, D^{-1 \ \ \dot{B}}_{\ \ \; \dot{S}}
Q^{\dot{R} \dot{S}} + D^{-1 \ \ ( \dot{A}}_{\ \ \; \dot{R}} \, 
\frac{\partial p'^{\dot{B})}}{\partial q_{\dot{R}}} 
\end{equation}
Transformations (\ref{gauge}) are equivalent to the spinorial transformations 
\begin{eqnarray}
\label{transformacja_spinorowa}
L^{A}_{\ \; B} &=& \left[ \begin{array}{cc}
       \Delta^{-\frac{1}{2}} & h  \Delta^{\frac{1}{2}}   \\
       0 &  \Delta^{\frac{1}{2}}  
       \end{array} \right], \ 2 h := \frac{\partial \sigma^{\dot{R}}}{\partial q'^{\dot{R}}}
\\ \nonumber
M^{\dot{A}}_{\ \; \dot{B}} &=& \Delta^{\frac{1}{2}} \, D^{-1 \ \ \dot{A}}_{\ \ \; \dot{B}}
\end{eqnarray}
Note, that the metric (\ref{metryka_slaba_HH}) can be rewritten if the form
\begin{equation}
\label{metryka_slaba_HH_alternative_form}
\frac{1}{2} \, ds^{2}=  dqdy - dpdx + \mathcal{A} \, dp^{2} -2 \mathcal{Q} \, dpdq + \mathcal{B} \, dq^{2}
\end{equation}
where we denoted
\begin{equation}
\label{skroty_wspolrzednych}
q^{\dot{1}} =: q, \ q^{\dot{2}} =: p, \ p^{\dot{1}} =: x, \ p^{\dot{2}} =: y, \ Q^{\dot{A} \dot{B}} =: \left[ \begin{array}{cc}
       \mathcal{A} \ & \mathcal{Q}   \\
       \mathcal{Q} \ & \mathcal{B}  
       \end{array} \right]
\end{equation}

Summarizing, (\ref{metryka_slaba_HH}) ((\ref{metryka_slaba_HH_alternative_form}), alternatively) is the general metric of the spaces equipped with a single SD $\mathcal{C}^{n}$. In the next Sections we consider spaces equipped with more $\mathcal{C}s$ (see Schemes \ref{Structure} and \ref{Structure2}).

\begin{Scheme}[!ht]
\begin{displaymath}
%\resizebox{1\textwidth}{!}{
\xymatrixcolsep{0.0cm}
\xymatrixrowsep{0.0cm}
\xymatrix{
    &  [\textrm{deg}]^{n} \otimes [\textrm{any}]   \ar[ddd]    \\
    &  \\
   \textrm{let } \mathcal{C}^{e}_{m^{\dot{A}}} \textrm{ exists}  \ar[r]^-{\textrm{Sections } \ref{subsekcja_degn_x_dege_plus_plus}, \ \ref{subsekcja_degn_x_dege_minus_minus}} & \\
   &  \{ [\textrm{deg}]^{n} \otimes [\textrm{any}]^{e},[++]  \} \ \ \ (\textrm{Theorem } \ref{theorem_degn_x_anye_plus_plus}) \\
   &  \{ [\textrm{deg}]^{n} \otimes [\textrm{any}]^{e},[--]  \} \ \ \ (\textrm{Theorem } \ref{theorem_degn_x_anye_minus_minus})  \ar[ddd] \\
   &  \\
   \textrm{let } \mathcal{C}^{n}_{m^{\dot{A}}} \textrm{ exists} \ar[r]^-{\textrm{Section } \ref{subsekcja_degn_x_degn}} & \\
   &   [\textrm{deg}]^{n} \otimes [\textrm{deg}]^{n} \ \ \ (\textrm{Theorem } \ref{main_theorem_degn_x_degn})    \ar[ddd] \\
   &  \\
   \textrm{let } \mathcal{C}^{n}_{m^{\dot{A}}} \textrm{ and } \mathcal{C}^{e}_{n^{\dot{A}}} \textrm{ exist} \ar[r]^-{\textrm{Section } \ref{subsekcja_degn_x_degne_obaprzypadki}} & \\
  & \{ [\textrm{deg}]^{n} \otimes [\textrm{deg}]^{ne},[--,++] \} \ \ \ (\textrm{Theorem } \ref{theorem_degn_x_degne_minisminus_plusplus}) \\
  & \{ [\textrm{deg}]^{n} \otimes [\textrm{deg}]^{ne},[--,--] \} \ \ \ (\textrm{Theorem } \ref{theorem_degn_x_degne_minisminus_minusminus})   \ar[ddd] \\
  &  \\
   \textrm{let } \mathcal{C}^{n}_{m^{\dot{A}}} \textrm{ and } \mathcal{C}^{n}_{n^{\dot{A}}} \textrm{ exist}  \ar[r]^-{\textrm{Section } \ref{subsekcja_deg_nx_D_nn_subsekcja}} & \\
   &  [\textrm{deg}]^{n} \otimes [\textrm{D}]^{nn} \ \ \ (\textrm{Theorem } \ref{main_theorem_degn_x_degnn}) \ar[ddd]    \\
   &  \\
   \textrm{let } \mathcal{C}^{n}_{m^{\dot{A}}}, \mathcal{C}^{n}_{n^{\dot{A}}} \textrm{ and } \mathcal{C}^{e}_{n^{A}} \textrm{ exist} \ar[r]^-{\textrm{Sections } \ref{subsection_II_D_mmmmmmpp}, \ \ref{subsection_D_D_mmmmmmpp}} & \\
  & \{ [\textrm{II,D}]^{ne} \otimes [\textrm{D}]^{nn},[--,--,++,++]  \} \qquad \qquad \ \ \ \ \ \ \; \\
  & \{ [\textrm{II}]^{ne} \otimes [\textrm{D}]^{nn},[--,--,--,++] \} \ \ \ (\textrm{Theorem } \ref{theorem_II_D_mmmmmmpp})  \\
  & \{ [\textrm{D}]^{ne} \otimes [\textrm{D}]^{nn},[--,--,--,++] \} \ \ \ (\textrm{Theorem } \ref{theorem_D_D_mmmmmmpp}) \ar[ddd] \\
  &  \\
   \textrm{let } \mathcal{C}^{n}_{m^{\dot{A}}}, \mathcal{C}^{n}_{n^{\dot{A}}} \textrm{ and } \mathcal{C}^{n}_{n^{A}} \textrm{ exist}  \ar[r]^-{\textrm{Section } \ref{subsection_D_D_mmmmmmmm}} & \\
  &  [\textrm{D}]^{nn} \otimes [\textrm{D}]^{nn} \ \ \ (\textrm{Theorem } \ref{theorem_D_D_mmmmmmmm})
}
%}
\end{displaymath} 
\caption{Spaces considered in Sections \ref{subsekcja_degn_x_dege_plus_plus} - \ref{subsection_D_D_mmmmmmmm}.}
\label{Structure}
\end{Scheme}

\begin{Scheme}[!ht]
\begin{displaymath}
%\resizebox{1\textwidth}{!}{
\xymatrixcolsep{0.0cm}
\xymatrixrowsep{0.0cm}
\xymatrix{
   &   [\textrm{deg}]^{n} \otimes [\textrm{D}]^{nn}   \ar@{-}[dddd]    &    \\
     & & \\
     & & \\
      & & \\
  \ar@{-}[r] \ar@{-}[dd] & \ar@{-}[r]  &  \ar@{-}[dd] \\
     \ar[ddddd]_{(self-duality)}^{\textrm{Section } \ref{subsekcja_III_NXnic_alesubsekcja}} &  & \ar[ddddd]_{(Einstein)}^{\textrm{Section } \ref{typy_IInxDnn_alesubsekcja}} \\
      &  &  \\
      & & \\
       & & \\
       & & \\
       & & \\
   [\textrm{III}]^{n} \otimes [\textrm{O}]^{n} \ \ \ (\textrm{Theorem } \ref{theorem_vacuum_IIIxO}) &   &   [\textrm{II}]^{n} \otimes [\textrm{D}]^{nn} \ \ \ (\textrm{Theorem } \ref{theorem_vacuum_IIxD}) \\
   [\textrm{N}]^{n} \otimes [\textrm{O}]^{n} \ \ \ (\textrm{Theorem } \ref{theorem_vacuum_NxO})  \ar[dddddd]^-{\textrm{Section } \ref{subsekcja_III_NXnic_Einstein_alesubsekcja}}_{(self-duality \ and \ Einstein)} &   &  [\textrm{D}]^{nn} \otimes [\textrm{D}]^{nn}  \ \ \ (\textrm{Theorem } \ref{theorem_vacuum_IIxD})  \\
   &  & \\
   &  &  \\
    &  &  \\
     &  &  \\
      &  &  \\
   [\textrm{III}]^{n} \otimes [\textrm{O}]^{n} \ \ \ (\textrm{Theorem } \ref{ttwierdzenie_metryka_typ_III_Nn_x_nic_Einstein}) &   &     \\
   [\textrm{N}]^{n} \otimes [\textrm{O}]^{n} \ \ \ (\textrm{Theorem } \ref{ttwierdzenie_metryka_typ_III_Nn_x_nic_Einstein})  &   &     \\ 
}
\end{displaymath} 
\caption{Spaces considered in Sections \ref{subsekcja_III_NXnic_alesubsekcja} - \ref{subsekcja_III_NXnic_Einstein_alesubsekcja}.}
\label{Structure2}
\end{Scheme}

\subsubsection{ASD congruence $\mathcal{C}_{m^{\dot{A}}}$}

Before we proceed further according to Schemes \ref{Structure} and \ref{Structure2} it is helpful to write explicitly the ASD null string equations in weak $\mathcal{HH}$-spaces. Let ASD $\mathcal{C}^{e}_{m^{\dot{A}}}$ be generated by a spinor $m_{\dot{A}}$ with an expansion $M_{A}$. The ASD null string equations (\ref{ASD_null_string_equations}) contracted with $m^{\dot{B}}$, with help of (\ref{covariant_spinorial_derivative}) and (\ref{rozpisane_wspolczynniki_koneksji_spinorowej}) yield 
\begin{subequations}
\label{rownania_ASD_struny_typuCmm_dotA}
\begin{eqnarray}
\label{rownania_ASD_struny_typuCmm_dotA_1}
\frac{1}{\sqrt{2}} \, m_{\dot{A}} M_{1} &=& m^{\dot{B}} \frac{\partial m_{\dot{B}}}{\partial p^{\dot{A}}}
\\ 
\label{rownania_ASD_struny_typuCmm_dotA_2}
\frac{1}{\sqrt{2}} \, m_{\dot{A}} M_{2} &=& m^{\dot{B}} \frac{\partial m_{\dot{B}}}{\partial q^{\dot{A}}} + m^{\dot{S}} \frac{\partial}{\partial p^{\dot{S}}} ( m^{\dot{B}} Q_{\dot{A}\dot{B}} ) - m^{\dot{B}} Q_{\dot{A}\dot{B}} \, \frac{\partial m^{\dot{S}}}{\partial p^{\dot{S}}}
\end{eqnarray}
\end{subequations}
The spinor $m_{\dot{A}}$ can be always written in the form $m_{\dot{A}} = [z,1]$ where $z=z(q,p,x,y)$ (a spinor which generates $\mathcal{C}$ must be nonzero; one assumes $m_{\dot{2}} \ne 0$ and re-scale $m_{\dot{A}}$ to $m_{\dot{2}} = 1$). Under (\ref{gauge}) $z$ transforms as follows
\begin{equation}
\label{transformacja_na_z}
z' = \Delta^{-\frac{1}{2}} ( z p'_{p} - p'_{q})
\end{equation}
Contraction of (\ref{rownania_ASD_struny_typuCmm_dotA}) with $m^{\dot{A}}$ gives
\begin{subequations}
\begin{eqnarray}
\label{rownania_ASD_struny_typuCmm_z_1}
&& zz_{y} - z_{x} =0
\\
\label{rownania_ASD_struny_typuCmm_z_2}
&& z_{q}-zz_{p} - z_{y} \mathcal{Y} + z \frac{\partial \mathcal{Y}}{\partial y} - \frac{\partial \mathcal{Y}}{\partial x} = 0, \ \mathcal{Y} := \mathcal{B} + 2z \mathcal{Q} + z^{2} \mathcal{A} 
\end{eqnarray}
\end{subequations}
where $z_{y} \equiv \dfrac{\partial z}{\partial y}$, $z_{x} \equiv \dfrac{\partial z}{\partial x}$, $z_{q} \equiv \dfrac{\partial z}{\partial q}$, $z_{p} \equiv \dfrac{\partial z}{\partial p}$. The expansion $M^{A}$ reads
\begin{subequations}
\begin{eqnarray}
\label{M_1_dla_degxany}
&& \frac{1}{\sqrt{2}} \, M_{1} = - z_{y}
\\ 
\label{M_2_dla_degxany}
&& \frac{1}{\sqrt{2}} \, M_{2} = -z_{p} - \frac{\partial}{\partial x} (\mathcal{Q} + z \mathcal{A}) + z \frac{\partial}{\partial y} (\mathcal{Q} + z \mathcal{A})  - z_{y} (\mathcal{Q} + z \mathcal{A}) 
\end{eqnarray}
\end{subequations}
Finally, according to (\ref{properties_of_the_null_geodesics}) one finds
\begin{equation}
\label{properties_of_I_CmA_CmdotA}
\textrm{properties of  } \mathcal{I} (\mathcal{C}^{n}_{m^{A}}, \mathcal{C}^{e}_{m^{\dot{A}}}): \ \ \theta, \varrho \sim z_{y}
\end{equation}

\subsubsection{ASD congruence $\mathcal{C}_{n^{\dot{A}}}$}
\label{the_seond_ASD_congruence_section}

Consider now the second ASD $\mathcal{C}^{e}_{n^{\dot{A}}}$ generated by a spinor $n^{\dot{A}}$ such that $n^{\dot{A}} m_{\dot{A}} \ne 0 \ \longrightarrow \ n_{\dot{1}} \ne 0$. Hence, the spinor $n_{\dot{A}}$ can be brought to the form $n_{\dot{A}} = [1, w]$ where $w=w(q,p,x,y)$ with the transformation formula
\begin{equation}
w' = \Delta^{-\frac{1}{2}} (q'_{q} w - q'_{p})
\end{equation}
ASD null string equations for $\mathcal{C}^{e}_{n^{\dot{A}}}$ have the same form as (\ref{rownania_ASD_struny_typuCmm_dotA}) but with $m^{\dot{A}} \rightarrow n^{\dot{A}}$ and $M^{A} \rightarrow N^{A}$. Explicitly
\begin{subequations}
\label{rownania_ASD_struny_typuCmn_dotA}
\begin{eqnarray}
\label{rownania_ASD_struny_typuCmn_dotA_1}
 && w_{y}-ww_{x}= 0
\\ 
\label{rownania_ASD_struny_typuCmn_dotA_2}
&& w_{p}-ww_{q} + \frac{\partial \mathcal{Z}}{\partial y} - w \frac{\partial \mathcal{Z}}{\partial x} + w_{x} \mathcal{Z}= 0, \ \mathcal{Z} := \mathcal{A} + 2w \mathcal{Q} + w^{2} \mathcal{B}
\end{eqnarray}
\end{subequations}
and 
\begin{subequations}
\begin{eqnarray}
\label{N_1_dla_degxdeg}
&& \frac{1}{\sqrt{2}} \, N_{1} = w_{x}
\\ 
\label{N_2_dla_degxdeg}
&& \frac{1}{\sqrt{2}} \, N_{2} = w_{q} + w \frac{\partial}{\partial x} (\mathcal{Q} + w \mathcal{B}) - \frac{\partial}{\partial y} (\mathcal{Q} + w \mathcal{B}) - w_{x} (\mathcal{Q} + w \mathcal{B})
\end{eqnarray}
\end{subequations}
Moreover
\begin{equation}
\label{properties_of_I_CmA_CndotA}
\textrm{properties of  } \mathcal{I} (\mathcal{C}^{n}_{m^{A}}, \mathcal{C}^{e}_{n^{\dot{A}}}): \ \ \theta, \varrho \sim w_{x}
\end{equation}

%$$$$$$$$$$$$$$$$$$$$$$$$$$$$$$$$$$$$$$$$$$$$$$$$$$$$$$$$$$$$$$$$$$$$$

\section{Two-sided (complex) Walker spaces}
\label{Sekcja_przestrzenie_Walkera}
\setcounter{equation}{0}

\subsection{Spaces of the types $[\textrm{deg}]^{n} \otimes [\textrm{any}]^{e}$ and $[\textrm{deg}]^{n} \otimes [\textrm{deg}]^{n}$}
\label{subsekcja_degn_x_dege}

Assume that weak nonexpanding $\mathcal{HH}$-space is equipped with $\mathcal{C}^{e}_{m^{\dot{A}}}$. From (\ref{properties_of_I_CmA_CmdotA}) it follows that further steps depend on $z_{y}$. If $z_{y} \ne 0$ the space is of the type $\{ [\textrm{deg}]^{n} \otimes [\textrm{any}]^{e}, [++] \}$ while for $z_{y} =0$ we arrive at the space of the type $\{ [\textrm{deg}]^{n} \otimes [\textrm{any}]^{e}, [--] \}$.

\subsubsection{Spaces of the types $\{ [\textrm{deg}]^{n} \otimes [\textrm{any}]^{e}, [++] \}$ }
\label{subsekcja_degn_x_dege_plus_plus}

\begin{Twierdzenie} 
\label{theorem_degn_x_anye_plus_plus}
Let $(\mathcal{M}, ds^{2})$ be a complex (neutral) space of the type $\{ [\textrm{deg}]^{n} \otimes [\textrm{any}]^{e}, [++] \}$. Then there exists a local coordinate system $(q,p,x,z)$ such that the metric takes the form
\begin{eqnarray}
\label{twierdzenie_metryka_degn_x_anye_plus_plus}
\frac{1}{2} ds^{2} &=& - dpdx -z \, dq dx - (x - \Sigma_{z} ) \, dqdz + \mathcal{A} \, dp^{2} 
\\ \nonumber
&& +(\Sigma_{p} - 2 \mathcal{Q} ) \, dpdq + ((x-\Sigma_{z}) \, \Omega  + z \Sigma_{p} -2z  \mathcal{Q} - z^{2} \mathcal{A}) \, dq^{2}
\end{eqnarray}
where $\mathcal{A}=\mathcal{A} (q,p,x,z)$, $\mathcal {Q} = \mathcal {Q}(q,p,x,z) $, $\Sigma=\Sigma (q,p,z)$ and $\Omega=\Omega (q,p,z)$ are arbitrary holomorphic (real smooth) functions.
\end{Twierdzenie}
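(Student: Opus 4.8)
The plan is to exploit the hypothesis $z_y \neq 0$, which is exactly what distinguishes the $[++]$ case by (\ref{properties_of_I_CmA_CmdotA}), and to use $z$ itself as the fourth coordinate in place of $y$. Since $z_y \neq 0$, the map $(q,p,x,y) \mapsto (q,p,x,z)$ is a local diffeomorphism, so the whole proof reduces to rewriting the metric (\ref{metryka_slaba_HH_alternative_form}) in the coordinates $(q,p,x,z)$ and reading off the coefficients.

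First I would determine how $y$ depends on the new coordinates. Holding $(q,p,z)$ fixed and differentiating $z=z(q,p,x,y)$ gives $\partial y/\partial x = -z_x/z_y$, and the first null string equation (\ref{rownania_ASD_struny_typuCmm_z_1}) collapses this to $\partial y/\partial x = -z$. Integrating in $x$ yields
\begin{equation}
\nonumber
y = -zx + \Sigma(q,p,z),
\end{equation}
which introduces the function $\Sigma = \Sigma(q,p,z)$ of the statement as the function arising on integration. Substituting the corresponding one-form $dy = -z\,dx + \Sigma_q\,dq + \Sigma_p\,dp + (\Sigma_z - x)\,dz$ into the single term $dq\,dy$ of (\ref{metryka_slaba_HH_alternative_form}) and collecting reproduces \emph{every} coefficient of the claimed metric (\ref{twierdzenie_metryka_degn_x_anye_plus_plus}) except the one multiplying $dq^2$, which at this stage equals $\Sigma_q + \mathcal{B}$. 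Writing $\mathcal{B} = \mathcal{Y} - 2z\mathcal{Q} - z^2\mathcal{A}$, matching this last coefficient to the target is equivalent to proving the identity $\mathcal{Y} = (x-\Sigma_z)\,\Omega + z\Sigma_p - \Sigma_q$ for a suitable function $\Omega=\Omega(q,p,z)$.

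The decisive step is to show that this identity is forced by the second null string equation (\ref{rownania_ASD_struny_typuCmm_z_2}). I would transcribe it into the coordinates $(q,p,x,z)$ by the chain rule, writing the old derivatives as $\mathcal{Y}_x = \tilde{\mathcal{Y}}_x + z_x\tilde{\mathcal{Y}}_z$ and $\mathcal{Y}_y = z_y\tilde{\mathcal{Y}}_z$, where $\tilde{\mathcal{Y}}$ denotes $\mathcal{Y}$ regarded as a function of the new variables. Using $z_x = z z_y$ once more, the two terms proportional to $\tilde{\mathcal{Y}}_z$ cancel. Differentiating $y = -zx+\Sigma$ also gives $z_y = (\Sigma_z - x)^{-1}$, $z_q = -z_y\Sigma_q$ and $z_p = -z_y\Sigma_p$, and feeding these in reduces the equation to the first-order linear ODE in $x$
\begin{equation}
\nonumber
(\Sigma_z - x)\,\tilde{\mathcal{Y}}_x + \tilde{\mathcal{Y}} = z\Sigma_p - \Sigma_q.
\end{equation}
Its general solution is $\tilde{\mathcal{Y}} = \Omega(q,p,z)\,(x-\Sigma_z) + z\Sigma_p - \Sigma_q$, with the integration function $\Omega$ depending on $(q,p,z)$ only --- precisely the second arbitrary function of three variables in the statement. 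Substituting back, the $dq^2$ coefficient becomes $(x-\Sigma_z)\Omega + z\Sigma_p - 2z\mathcal{Q} - z^2\mathcal{A}$, which closes the identification, while $\mathcal{A}$ and $\mathcal{Q}$ survive unchanged as arbitrary functions of $(q,p,x,z)$.

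I expect the only real obstacle to be the bookkeeping in the third paragraph: carrying out the chain-rule substitutions carefully enough that the $\tilde{\mathcal{Y}}_z$ terms cancel and the residue is genuinely an ordinary differential equation in $x$, with $(q,p,z)$ as passive parameters. Once that cancellation is secured, solving the ODE and collecting coefficients is routine.
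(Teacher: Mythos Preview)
Your proposal is correct and follows essentially the same route as the paper: pass to coordinates $(q,p,x,z)$ using $z_y\ne 0$, solve the first null string equation to obtain $y=-zx+\Sigma(q,p,z)$, then transform the second null string equation into the linear first-order ODE $(\Sigma_z-x)\tilde{\mathcal{Y}}_x+\tilde{\mathcal{Y}}=z\Sigma_p-\Sigma_q$ whose general solution introduces $\Omega(q,p,z)$, and finally substitute into (\ref{metryka_slaba_HH_alternative_form}). The only cosmetic difference is that you derive $y=-zx+\Sigma$ by integrating $\partial y/\partial x=-z$ directly, whereas the paper phrases the same step as a hodograph inversion; and you spell out the chain-rule cancellation of the $\tilde{\mathcal{Y}}_z$ terms via $zz_y-z_x=0$, which the paper leaves implicit.
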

\begin{proof}
$\theta \ne 0$ and $\varrho \ne 0$ imply $z_{y} \ne 0$ (compare (\ref{properties_of_I_CmA_CmdotA})). A general solution of (\ref{rownania_ASD_struny_typuCmm_z_1}) is obtained by multiplying (\ref{rownania_ASD_struny_typuCmm_z_1}) by $dx \wedge dy \wedge dp \wedge dq$, treating $z$ as an independent variable and $y$ as a function of $(q,p,x,z)$. Finally
\begin{equation}
\label{general_solution_of_z_equation}
y = -xz + \Sigma (q,p,z)
\end{equation}
where $\Sigma$ is an arbitrary function of its variables. Hence
\begin{equation}
z_{x} = \frac{z}{\Sigma_{z}-x}, \ z_{y}  = \frac{1}{\Sigma_{z}-x}, \ z_{q} = -\frac{\Sigma_{q}}{\Sigma_{z}-x}, \ 
z_{p}  = -\frac{\Sigma_{p}}{\Sigma_{z}-x}
\end{equation}
Formula (\ref{general_solution_of_z_equation}) suggests a coordinate transformation $(q,p,x,y) \rightarrow (q,p,x,z)$. Denote
\begin{equation}
\tilde{\mathcal{Y}} = \tilde{\mathcal{Y}} (q,p,x,z) = \tilde{\mathcal{Y}} (q,p,x,y(x,z,q,p)) = \mathcal{Y} (q,p,x,y)
\end{equation}
Transformation of Eq. (\ref{rownania_ASD_struny_typuCmm_z_2}) to the coordinate system $(q,p,x,z)$ yields
\begin{equation}
\tilde{\mathcal{Y}} - (x-\Sigma_{z}) \tilde{\mathcal{Y}}_{x} - z \Sigma_{p} + \Sigma_{q} = 0
\end{equation}
with a general solution
\begin{equation}
\tilde{\mathcal{Y}} = (x-\Sigma_{z}) \Omega + z \Sigma_{p} - \Sigma_{q} 
\end{equation}
where $\Omega=\Omega (q,p,z)$ is an arbitrary function. The definition of $\mathcal{Y}$ (\ref{rownania_ASD_struny_typuCmm_z_2}) implies
\begin{equation}
\label{general_solution_of_Y_equation}
 \mathcal{B} + 2z \mathcal{Q} + z^{2} \mathcal{A} = (x-\Sigma_{z}) \Omega + z \Sigma_{p} - \Sigma_{q} 
\end{equation}
Thus, the solution for $\mathcal{B}$ is given in terms of $\mathcal{A}$, $\mathcal{Q}$, $\Sigma$ and $\Omega$. Inserting (\ref{general_solution_of_z_equation}) and (\ref{general_solution_of_Y_equation}) into (\ref{metryka_slaba_HH_alternative_form}) one arrives at (\ref{twierdzenie_metryka_degn_x_anye_plus_plus}).
\end{proof}
The metric (\ref{twierdzenie_metryka_degn_x_anye_plus_plus}) can be written in a simpler form but we do not enter this problem here. 

\textbf{Remark}. In \cite{Law} the authors found a general metric for a space equipped with SD $\mathcal{C}_{m^{A}}^{n}$ and ASD $\mathcal{C}_{m^{\dot{A}}}^{e}$ with an additional assumption of an integrability of a 3-dimensional distribution given by the Pfaff system $m_{A}m_{\dot{A}} g^{A\dot{A}}=0$. They called such a space \textsl{an integrable (sesquiWalker) $\alpha\beta$-geometry}. The adjective "integrable" in the definition given by Law and Matsushita is equivalent to the vanishing of the twist of $\mathcal{I} (\mathcal{C}^{n}_{m^{A}}, \mathcal{C}^{e}_{m^{\dot{A}}})$, i.e., $z_{y}=0$. Hence, the metric found in \cite{Law} is, in fact, the metric of a space of the type $\{ [\textrm{deg}]^{n} \otimes [\textrm{any}]^{e}, [--] \}$. We consider such spaces in the next Section.

\subsubsection{Spaces of the types $\{ [\textrm{deg}]^{n} \otimes [\textrm{deg}]^{e}, [--] \}$ }
\label{subsekcja_degn_x_dege_minus_minus}

\begin{Twierdzenie} 
\label{theorem_degn_x_anye_minus_minus}
Let $(\mathcal{M}, ds^{2})$ be a complex (neutral) space of the type $\{ [\textrm{deg}]^{n} \otimes [\textrm{any}]^{e}, [--] \}$. Then there exists a local coordinate system $(q,p,x,y)$ such that the metric takes the form
\begin{equation}
\label{twierdzenie_metryka_degn_x_anye_minus_minus}
\frac{1}{2} ds^{2} =  dqdy-dpdx + \mathcal{A} \, dp^{2} - 2 \mathcal{Q} \, dqdp + \mathcal{B} \, dq^{2}
\end{equation}
where $\mathcal{A}=\mathcal{A} (q,p,x,y)$, $\mathcal {Q} = \mathcal {Q}(q,p,x,y) $ and $\mathcal {B} = \mathcal {B}(q,p,y) $ are arbitrary holomorphic (real smooth) functions such that $\mathcal{Q}_{x} \ne 0$.
\end{Twierdzenie}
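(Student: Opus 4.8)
The plan is to translate the two geometric hypotheses into differential conditions on the single scalar $z$ that determines the ASD spinor $m_{\dot A}=[z,1]$, to use those conditions to normalise the gauge, and then to read off the constraints on $\mathcal{A},\mathcal{Q},\mathcal{B}$ directly from the null string equations and the expansion formulas already established in the preliminaries.

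First I would unpack the symbol $[--]$. By (\ref{definicja_wlasnosci_kongruencji_zerowych_geodezyjnych}) it means $\theta=\varrho=0$, and by (\ref{properties_of_I_CmA_CmdotA}) both optical scalars of $\mathcal{I}(\mathcal{C}^{n}_{m^{A}},\mathcal{C}^{e}_{m^{\dot A}})$ are proportional to $z_y$; hence $z_y=0$. Substituting this into the first ASD null string equation (\ref{rownania_ASD_struny_typuCmm_z_1}), namely $zz_y-z_x=0$, forces $z_x=0$ as well. Therefore $z=z(q,p)$ depends only on the coordinates labelling the SD null strings; this is the decisive structural fact, since it says that the scalar cutting out the expanding ASD congruence is constant along each SD null string.

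Next I would use the residual gauge freedom (\ref{gauge}) to normalise $z$ to zero. Under a change of the labelling coordinates $q'^{\dot A}=q'^{\dot A}(q^{\dot M})$ the scalar transforms by (\ref{transformacja_na_z}), $z'=\Delta^{-1/2}(z\,p'_p-p'_q)$. Because $z$ involves only the label coordinates, the requirement $z'=0$ becomes the first-order linear PDE $p'_q=z\,p'_p$ for the new label $p'(q,p)$, which is solvable by the method of characteristics; choosing $q'$ so as to keep the Jacobian nonsingular gives an admissible transformation of the form (\ref{gauge}), which by construction preserves the metric shape (\ref{metryka_slaba_HH_alternative_form}). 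After relabelling (and dropping primes) the ASD congruence is generated by $m_{\dot A}=[0,1]$, i.e. $z\equiv0$.

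With $z\equiv0$ the remaining assertions follow by substitution. Since $\mathcal{Y}=\mathcal{B}+2z\mathcal{Q}+z^{2}\mathcal{A}$ reduces to $\mathcal{B}$ and every derivative of $z$ vanishes, the second ASD null string equation (\ref{rownania_ASD_struny_typuCmm_z_2}) collapses to $\mathcal{B}_x=0$, so $\mathcal{B}=\mathcal{B}(q,p,y)$. Likewise the expansion components (\ref{M_1_dla_degxany}) and (\ref{M_2_dla_degxany}) become $M_1=0$ and $M_2\sim-\mathcal{Q}_x$; as $\mathcal{C}^{e}_{m^{\dot A}}$ is expanding by hypothesis, $M^{A}\ne0$ forces $\mathcal{Q}_x\ne0$. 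Feeding $z\equiv0$ together with these data back into (\ref{metryka_slaba_HH_alternative_form}) reproduces exactly the metric (\ref{twierdzenie_metryka_degn_x_anye_minus_minus}) with the claimed functional dependence and the inequality $\mathcal{Q}_x\ne0$. The one genuine obstacle is the normalisation step: one must verify that $z$ really depends on the labels alone (guaranteed by $z_x=z_y=0$) so that (\ref{transformacja_na_z}) can be solved for $p'$ as a function of $(q,p)$, and that the resulting coordinate change remains nonsingular and inside the admissible class (\ref{gauge}); everything after that is routine substitution.
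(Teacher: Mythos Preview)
Your proof is correct and follows essentially the same route as the paper's own argument: deduce $z_y=0$ from $[--]$, then $z_x=0$ from (\ref{rownania_ASD_struny_typuCmm_z_1}), gauge $z$ to zero via (\ref{transformacja_na_z}), read off $\mathcal{B}_x=0$ from (\ref{rownania_ASD_struny_typuCmm_z_2}), and obtain $\mathcal{Q}_x\ne0$ from (\ref{M_2_dla_degxany}). The only difference is that you spell out the characteristic equation $p'_q=z\,p'_p$ for the normalisation step, whereas the paper simply cites (\ref{transformacja_na_z}) and asserts that $z$ can be gauged away.
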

\begin{proof}
If $\theta=\varrho=0$ then $z_{y}=0$. From (\ref{rownania_ASD_struny_typuCmm_z_1}) one finds $z_{x}=0$ and consequently, $z=z(q,p)$.
Hence, $z$ can be gauged away without any loss of generality (compare (\ref{transformacja_na_z})). From (\ref{rownania_ASD_struny_typuCmm_z_2}) one finds $\mathcal{B}=\mathcal{B} (q,p,y)$. From (\ref{M_2_dla_degxany}) it follows that $M_{2} = - \sqrt{2} \mathcal{Q}_{x}$ so $\mathcal{Q}_{x} \ne 0$, otherwise $\mathcal{C}^{e}_{m^{\dot{A}}}$ becomes $\mathcal{C}^{n}_{m^{\dot{A}}}$.
\end{proof}
\textbf{Remark}. The metric (\ref{twierdzenie_metryka_degn_x_anye_minus_minus}) is exactly the metric found in \cite{Law}.

\subsubsection{Spaces of the types $ [\textrm{deg}]^{n} \otimes [\textrm{deg}]^{n}$ }
\label{subsekcja_degn_x_degn}

\begin{Twierdzenie} 
\label{main_theorem_degn_x_degn}
Let $(\mathcal{M}, ds^{2})$ be a complex (neutral) space of the type $[\textrm{deg}]^{n} \otimes [\textrm{deg}]^{n}$. Then there exists a local coordinate system $(q,p,x,y)$ such that the metric takes the form
\begin{equation}
\label{twierdzenie_metryka_Walker_1}
\frac{1}{2} ds^{2} =  dqdy-dpdx + \mathcal{A}  \, dp^{2} - 2 \mathcal{Q}  \, dqdp + \mathcal{B}  \, dq^{2}
\end{equation}
where $\mathcal{A}=\mathcal{A}(q,p,x,y)$, $\mathcal {Q} =\mathcal {Q}(q,p,y)$ and $\mathcal {B}=\mathcal {B}(q,p,y)$ are arbitrary holomorphic (real smooth) functions.
\end{Twierdzenie}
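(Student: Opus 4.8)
The plan is to begin from the general weak nonexpanding $\mathcal{HH}$-space metric and then impose that the second congruence, the ASD one $\mathcal{C}_{m^{\dot{A}}}$, is also nonexpanding. Since the space carries a nonexpanding SD congruence it is of type $[\textrm{deg}]^{n} \otimes [\textrm{any}]$, so by the results of the preceding subsection its metric can be brought to the form (\ref{metryka_slaba_HH_alternative_form}) with functions $\mathcal{A}$, $\mathcal{Q}$, $\mathcal{B}$ of $(q,p,x,y)$. The whole task then reduces to showing that nonexpansion of $\mathcal{C}_{m^{\dot{A}}}$ forces both $\mathcal{Q}$ and $\mathcal{B}$ to be independent of $x$, which is exactly the difference between (\ref{metryka_slaba_HH_alternative_form}) and the claimed form (\ref{twierdzenie_metryka_Walker_1}).

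First I would write the generating spinor as $m_{\dot{A}} = [z,1]$ and recall that its expansion components are given by (\ref{M_1_dla_degxany}) and (\ref{M_2_dla_degxany}). Nonexpansion means $M_{A}=0$, i.e. both $M_{1}=0$ and $M_{2}=0$. The condition $M_{1}=0$ gives immediately $z_{y}=0$, and feeding this into the first ASD null string equation (\ref{rownania_ASD_struny_typuCmm_z_1}), namely $zz_{y}-z_{x}=0$, yields $z_{x}=0$. Hence $z=z(q,p)$ depends only on the coordinates labelling the null strings.

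Next I would gauge $z$ away. The residual coordinate freedom (\ref{gauge}) acts on $z$ through (\ref{transformacja_na_z}), $z'=\Delta^{-1/2}(zp'_{p}-p'_{q})$; choosing the new coordinate $p'$ to solve the first-order linear equation $zp'_{p}=p'_{q}$, which is integrable by the method of characteristics precisely because $z=z(q,p)$ only, produces $z'=0$ while keeping the metric in the form (\ref{metryka_slaba_HH_alternative_form}). Working in this gauge, $z\equiv 0$, so all derivatives of $z$ vanish and the second ASD null string equation (\ref{rownania_ASD_struny_typuCmm_z_2}) collapses to $\mathcal{Y}_{x}=0$ with $\mathcal{Y}=\mathcal{B}$, giving $\mathcal{B}_{x}=0$, i.e. $\mathcal{B}=\mathcal{B}(q,p,y)$. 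Finally the remaining nonexpansion condition $M_{2}=0$, read off from (\ref{M_2_dla_degxany}) at $z=0$, reduces to $-\mathcal{Q}_{x}=0$, so $\mathcal{Q}=\mathcal{Q}(q,p,y)$. Substituting these restrictions into (\ref{metryka_slaba_HH_alternative_form}) reproduces precisely (\ref{twierdzenie_metryka_Walker_1}).

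I expect the only genuinely delicate point to be the gauge step: one must verify both that the linear PDE $zp'_{p}=p'_{q}$ admits a solution with $\partial p'/\partial p \neq 0$ so that the transformation (\ref{gauge}) is admissible, and that after this change of coordinates the metric is still of the form (\ref{metryka_slaba_HH_alternative_form}) with merely relabelled functions $\mathcal{A},\mathcal{Q},\mathcal{B}$. Everything else mirrors the computation already carried out in the proof of Theorem \ref{theorem_degn_x_anye_minus_minus}; the essential difference is that here the extra equation $M_{2}=0$ is imposed, whereas in that theorem $\mathcal{Q}_{x}\neq 0$ was allowed, which is exactly what removes the $x$-dependence of $\mathcal{Q}$ in the present nonexpanding case.
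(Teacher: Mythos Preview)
Your proposal is correct and follows essentially the same route as the paper. The paper's proof simply cites Theorem \ref{theorem_degn_x_anye_minus_minus} (which already establishes $z=0$ and $\mathcal{B}_{x}=0$) and then imposes the single additional condition $M_{2}=-\sqrt{2}\,\mathcal{Q}_{x}=0$; you have unpacked that citation into the explicit steps $M_{1}=0\Rightarrow z_{y}=0\Rightarrow z_{x}=0\Rightarrow z=z(q,p)$, gauge $z\to 0$, then $\mathcal{B}_{x}=0$ and $\mathcal{Q}_{x}=0$, which is exactly the content of that earlier theorem together with the new nonexpansion condition.
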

\begin{proof}
Consider the metric (\ref{twierdzenie_metryka_degn_x_anye_minus_minus}). Vanishing of the expansion of $\mathcal{C}^{e}_{m^{\dot{A}}}$ implies $M_{2} = - \sqrt{2} \mathcal{Q}_{x} =0$. Hence, $\mathcal {Q} =\mathcal {Q}(q,p,y)$.
\end{proof}
\textbf{Remark}. Theorem \ref{main_theorem_degn_x_degn} has been proven earlier in \cite{Chudecki_Przanowski_Walkery} but the proof presented here is much more concise.

In the next Section the metric (\ref{twierdzenie_metryka_Walker_1}) will be treated as a "starting point" for further considerations. Thus we point out that from now on the gauge (\ref{gauge}) is restricted to the transformations such that
\begin{equation}
\label{gauge_restricted}
q'=q'(q,p), \ p'=p'(p)
\end{equation}

\subsection{Spaces of the types $ [\textrm{deg}]^{n} \otimes [\textrm{deg}]^{ne}$ and $[\textrm{deg}]^{n} \otimes [\textrm{D}]^{nn}$}
\label{subsekcja_deg_nx_D_nn}

\subsubsection{Spaces of the types $\{ [\textrm{deg}]^{n} \otimes [\textrm{deg}]^{ne}, [--,++] \}$ and \\ $\{ [\textrm{deg}]^{n} \otimes [\textrm{deg}]^{ne}, [--,--] \}$ }
\label{subsekcja_degn_x_degne_obaprzypadki}

We consider now a two-sided Walker space with the general metric (\ref{twierdzenie_metryka_Walker_1}) and we equip this space with the second ASD $\mathcal{C}$, namely $\mathcal{C}^{e}_{n^{\dot{A}}}$ (see Section \ref{the_seond_ASD_congruence_section}). The next steps are analogous like in Section \ref{subsekcja_degn_x_dege}. First we consider a case with $w_{x} \ne 0$. This case corresponds to spaces of the types $\{ [\textrm{deg}]^{n} \otimes [\textrm{deg}]^{ne}, [--,++] \}$. Then a case with $w_{x}=0$ (what implies $w=0$) but with $N_{2} \ne 0$ leads to spaces of the types $\{ [\textrm{deg}]^{n} \otimes [\textrm{deg}]^{ne}, [--,--] \}$. 
\sloppy
\begin{Twierdzenie} 
\label{theorem_degn_x_degne_minisminus_plusplus}
Let $(\mathcal{M}, ds^{2})$ be a complex (neutral) space of the type $\{ [\textrm{deg}]^{n} \otimes [\textrm{deg}]^{ne}, [--,++] \}$. Then there exists a local coordinate system $(q,p,w,y)$ such that the metric takes the form
\begin{eqnarray}
\label{twierdzenie_metrykadegn_x_degne_minisminus_plusplus}
\frac{1}{2} ds^{2} &=&  dqdy+ w \, dpdy + (y-\Sigma_{w}) \, dpdw + \mathcal{B} \, dq^{2}
\\ \nonumber
&& - (2 \mathcal{Q} + \Sigma_{q}) \, dqdp + ( (y-\Sigma_{w}) \Omega - w \Sigma_{q} -2w \mathcal{Q} - w^{2} \mathcal{B}   ) \, dp^{2}
\end{eqnarray}
where $\mathcal{Q} = \mathcal{Q} (q,p,y)$, $\mathcal{B} = \mathcal{B} (q,p,y)$, $\Omega=\Omega(q,p,w)$ and $\Sigma=\Sigma(q,p,w)$ are arbitrary holomorphic (real smooth) functions.
\begin{proof}
We skip the proof due to its similarity to that of Theorem \ref{theorem_degn_x_anye_plus_plus}.
\end{proof}
\end{Twierdzenie}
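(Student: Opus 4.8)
The plan is to mirror the argument of Theorem \ref{theorem_degn_x_anye_plus_plus}, exploiting the duality between the two ASD congruences. I would start from the two-sided Walker metric (\ref{twierdzenie_metryka_Walker_1}), so that $\mathcal{A}=\mathcal{A}(q,p,x,y)$, $\mathcal{Q}=\mathcal{Q}(q,p,y)$, $\mathcal{B}=\mathcal{B}(q,p,y)$, and equip this space with the second ASD congruence $\mathcal{C}^{e}_{n^{\dot{A}}}$ generated by $n_{\dot{A}}=[1,w]$. The type condition $[--,++]$ means that $\mathcal{I}(\mathcal{C}^{n}_{m^{A}},\mathcal{C}^{n}_{m^{\dot{A}}})=\mathcal{I}^{--}$ (which is automatic, since both these congruences are nonexpanding) while $\mathcal{I}(\mathcal{C}^{n}_{m^{A}},\mathcal{C}^{e}_{n^{\dot{A}}})=\mathcal{I}^{++}$; by (\ref{properties_of_I_CmA_CndotA}) the latter is equivalent to $w_{x}\ne 0$.

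First I would integrate the first null string equation (\ref{rownania_ASD_struny_typuCmn_dotA_1}), $w_{y}-ww_{x}=0$, by the characteristic method of Theorem \ref{theorem_degn_x_anye_plus_plus}: treating $w$ as an independent variable and $x$ as a function of $(q,p,w,y)$, its general solution is $x=-yw+\Sigma(q,p,w)$ with $\Sigma$ arbitrary. Implicit differentiation then yields $w_{x}=(\Sigma_{w}-y)^{-1}$, $w_{y}=w\,(\Sigma_{w}-y)^{-1}$, $w_{q}=-\Sigma_{q}(\Sigma_{w}-y)^{-1}$ and $w_{p}=-\Sigma_{p}(\Sigma_{w}-y)^{-1}$, the exact dual of the formulas for $z$ obtained in Theorem \ref{theorem_degn_x_anye_plus_plus}.

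Since $w_{x}\ne 0$, the map $(q,p,x,y)\mapsto(q,p,w,y)$ is a local diffeomorphism, so I would pass to the chart $(q,p,w,y)$ and set $\tilde{\mathcal{Z}}(q,p,w,y)=\mathcal{Z}(q,p,x,y)$ with $\mathcal{Z}=\mathcal{A}+2w\mathcal{Q}+w^{2}\mathcal{B}$. Rewriting the second equation (\ref{rownania_ASD_struny_typuCmn_dotA_2}) in the new variables, the decisive simplification is that the combination $\mathcal{Z}_{y}-w\mathcal{Z}_{x}$ collapses to $\tilde{\mathcal{Z}}_{y}$ (the $\tilde{\mathcal{Z}}_{w}$ contributions cancel), reducing the equation to the linear first-order ODE $\tilde{\mathcal{Z}}-(y-\Sigma_{w})\tilde{\mathcal{Z}}_{y}+w\Sigma_{q}-\Sigma_{p}=0$ in $y$. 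Its general solution is $\tilde{\mathcal{Z}}=(y-\Sigma_{w})\Omega+\Sigma_{p}-w\Sigma_{q}$ with $\Omega=\Omega(q,p,w)$ arbitrary, whence $\mathcal{A}=\tilde{\mathcal{Z}}-2w\mathcal{Q}-w^{2}\mathcal{B}$. Note that $\mathcal{Q}$ and $\mathcal{B}$, being independent of $x$, are unchanged by the transformation, so they remain functions of $(q,p,y)$ exactly as claimed.

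Finally I would substitute $dx=-w\,dy+(\Sigma_{w}-y)\,dw+\Sigma_{q}\,dq+\Sigma_{p}\,dp$, together with the expression just obtained for $\mathcal{A}$, into (\ref{twierdzenie_metryka_Walker_1}): the $-dp\,dx$ term produces the $w\,dp\,dy$, $(y-\Sigma_{w})\,dp\,dw$ and $-\Sigma_{q}\,dq\,dp$ contributions, the residual $-\Sigma_{p}\,dp^{2}$ combines with $\mathcal{A}\,dp^{2}$, and collecting all terms reproduces exactly (\ref{twierdzenie_metrykadegn_x_degne_minisminus_plusplus}). I expect the only genuinely delicate step to be the chain-rule bookkeeping in passing to $(q,p,w,y)$ — in particular verifying the cancellation $\mathcal{Z}_{y}-w\mathcal{Z}_{x}=\tilde{\mathcal{Z}}_{y}$ and tracking the signs of $\Sigma_{p}$ and $\Sigma_{q}$; everything else is routine, which is precisely why the authors treat this proof as a close replica of that of Theorem \ref{theorem_degn_x_anye_plus_plus}.
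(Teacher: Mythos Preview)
Your proposal is correct and follows exactly the route the paper intends: starting from the two-sided Walker metric (\ref{twierdzenie_metryka_Walker_1}), you solve (\ref{rownania_ASD_struny_typuCmn_dotA_1}) implicitly as $x=-wy+\Sigma(q,p,w)$, pass to the chart $(q,p,w,y)$, use the cancellation $\mathcal{Z}_{y}-w\mathcal{Z}_{x}=\tilde{\mathcal{Z}}_{y}$ (which holds precisely because $w_{y}-ww_{x}=0$) to reduce (\ref{rownania_ASD_struny_typuCmn_dotA_2}) to a linear ODE in $y$, and substitute back into the metric. This is the exact dual of the proof of Theorem \ref{theorem_degn_x_anye_plus_plus}, with $(z,y,x,\mathcal{Y})$ replaced by $(w,x,y,\mathcal{Z})$, and your sign and term tracking is clean throughout.
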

\begin{Twierdzenie} 
\label{theorem_degn_x_degne_minisminus_minusminus}
Let $(\mathcal{M}, ds^{2})$ be a complex (neutral) space of the type $\{ [\textrm{deg}]^{n} \otimes [\textrm{deg}]^{ne}, [--,--] \}$. Then there exists a local coordinate system $(q,p,x,y)$ such that the metric takes the form
\begin{equation}
\label{twierdzenie_metrykadegn_x_degne_minisminus_minisminus}
\frac{1}{2} ds^{2} =  dqdy-dpdx  + \mathcal{A}  \, dp^{2} - 2 \mathcal{Q}  \, dqdp + \mathcal{B}  \, dq^{2}
\end{equation}
where $\mathcal{A}=\mathcal{A} (q,p,x)$, $\mathcal{Q}=\mathcal{Q} (q,p,y)$ and $\mathcal{B}=\mathcal{B} (q,p,y)$ are arbitrary holomorphic (real smooth) functions such that $\mathcal{Q}_{y} \ne 0$.
\begin{proof}
We skip the proof due to its similarity to that of Theorem \ref{theorem_degn_x_anye_minus_minus}.
\end{proof}
\end{Twierdzenie}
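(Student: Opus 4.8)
The plan is to run, \emph{mutatis mutandis}, the argument of Theorem \ref{theorem_degn_x_anye_minus_minus}, but now starting from the two-sided Walker normal form (\ref{twierdzenie_metryka_Walker_1}) rather than from the generic weak $\mathcal{HH}$-metric, and applying it to the \emph{second} ASD congruence $\mathcal{C}^{e}_{n^{\dot{A}}}$ instead of the first. Since the two intersections with $\mathcal{C}^{n}_{m^{A}}$ are prescribed to be $[--,--]$ and the first one, $\mathcal{I}(\mathcal{C}^{n}_{m^{A}},\mathcal{C}^{n}_{m^{\dot{A}}})$, is automatically $\mathcal{I}^{--}$, the only effective hypothesis is that $\mathcal{I}(\mathcal{C}^{n}_{m^{A}},\mathcal{C}^{e}_{n^{\dot{A}}})=\mathcal{I}^{--}$, i.e. $\theta=\varrho=0$, which by (\ref{properties_of_I_CmA_CndotA}) means $w_{x}=0$.

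First I would substitute $w_{x}=0$ into the ASD null string equation (\ref{rownania_ASD_struny_typuCmn_dotA_1}), $w_{y}-ww_{x}=0$, to conclude $w_{y}=0$ and hence $w=w(q,p)$. Next, using the residual freedom — restricted by (\ref{gauge_restricted}) to $q'=q'(q,p)$, $p'=p'(p)$ — together with the transformation law $w' = \Delta^{-\frac{1}{2}} (q'_{q} w - q'_{p})$, I would solve the first-order linear equation $q'_{q}\,w=q'_{p}$ for $q'(q,p)$ by characteristics and thereby gauge $w$ to zero. This is the exact analogue of gauging away $z=z(q,p)$ in Theorem \ref{theorem_degn_x_anye_minus_minus}, under the formal interchange $z\leftrightarrow w$, $x\leftrightarrow y$, $\mathcal{A}\leftrightarrow\mathcal{B}$.

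With $w=0$, so that $\mathcal{Z}=\mathcal{A}$, the remaining ASD null string equation (\ref{rownania_ASD_struny_typuCmn_dotA_2}) reduces to $\mathcal{A}_{y}=0$, giving $\mathcal{A}=\mathcal{A}(q,p,x)$, while $\mathcal{Q}=\mathcal{Q}(q,p,y)$ and $\mathcal{B}=\mathcal{B}(q,p,y)$ are inherited unchanged from (\ref{twierdzenie_metryka_Walker_1}). Evaluating the expansion (\ref{N_2_dla_degxdeg}) at $w=0$ then yields $N_{2}=-\sqrt{2}\,\mathcal{Q}_{y}$, so the requirement that $\mathcal{C}^{e}_{n^{\dot{A}}}$ be genuinely expanding (otherwise it would collapse to a nonexpanding congruence) forces $\mathcal{Q}_{y}\neq 0$. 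Inserting these dependences back into (\ref{twierdzenie_metryka_Walker_1}) reproduces the claimed metric (\ref{twierdzenie_metrykadegn_x_degne_minisminus_minisminus}).

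The step demanding most care is the gauge reduction $w\to 0$: one must check that a first integral $q'=q'(q,p)$ with $q'_{q}\neq 0$ solving $q'_{q}w=q'_{p}$ exists inside the admissible class (\ref{gauge_restricted}), and — crucially — that applying it via (\ref{transformacja_Q}) preserves the structure $\mathcal{Q}=\mathcal{Q}(q,p,y)$, $\mathcal{B}=\mathcal{B}(q,p,y)$ already secured for the first ASD congruence. This is guaranteed precisely because (\ref{gauge_restricted}) was introduced to stabilize the two-sided Walker form; everything else is the symmetric mirror of the corresponding calculation for Theorem \ref{theorem_degn_x_anye_minus_minus}.
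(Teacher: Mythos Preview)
Your proposal is correct and follows exactly the approach the paper intends: it is the mirror image of the proof of Theorem \ref{theorem_degn_x_anye_minus_minus}, applied to the second ASD congruence $\mathcal{C}^{e}_{n^{\dot{A}}}$ starting from the two-sided Walker form (\ref{twierdzenie_metryka_Walker_1}) and using the restricted gauge (\ref{gauge_restricted}) to set $w=0$. The paper's own proof is simply ``analogous to Theorem \ref{theorem_degn_x_anye_minus_minus},'' and your write-up supplies precisely that analogy, including the correct identification of the nondegeneracy condition $\mathcal{Q}_{y}\neq 0$ from $N_{2}=-\sqrt{2}\,\mathcal{Q}_{y}$.
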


\textbf{Remark}. Spaces with metrics (\ref{twierdzenie_metrykadegn_x_degne_minisminus_plusplus}) and (\ref{twierdzenie_metrykadegn_x_degne_minisminus_minisminus}) are equipped with $\mathcal{C}^{ne}$. It was proven in \cite{Przanowski_Formanski_Chudecki} that if an Einstein space is equipped with SD (ASD) $\mathcal{C}^{ne}$ then its SD (ASD) Weyl spinor must vanish. Thus, the traceless Ricci tensor of not conformally flat spaces equipped with SD (or ASD) $\mathcal{C}^{ne}$ must be nonzero. We believe, that metrics (\ref{twierdzenie_metrykadegn_x_degne_minisminus_plusplus}) and (\ref{twierdzenie_metrykadegn_x_degne_minisminus_minisminus}) could be the first explicit examples of not conformally flat spaces equipped with $\mathcal{C}s$ with "mixed" properties and of the same duality.

Note, that the metric (\ref{twierdzenie_metrykadegn_x_degne_minisminus_minisminus}) remains invariant under the transformation (\ref{gauge}) restricted to
\begin{equation}
\label{gauge_restricted_2}
q'=q'(q), \ p'=p'(p)
\end{equation}

\subsubsection{Spaces of the types $ [\textrm{deg}]^{n} \otimes [\textrm{D}]^{nn}$}
\label{subsekcja_deg_nx_D_nn_subsekcja}

In this Section we finally arrive at the metric of the para-Kähler space of the type $[\textrm{deg}]^{n} \otimes [\textrm{D}]^{nn}$. 
\begin{Twierdzenie} 
\label{main_theorem_degn_x_degnn}
Let $(\mathcal{M}, ds^{2})$ be a complex (neutral) space of the type $[\textrm{deg}]^{n} \otimes [\textrm{D}]^{nn}$ ($[\textrm{deg}]^{n} \otimes [\textrm{D}_{r}]^{nn}$). Then there exists a local coordinate system $(q,p,x,y)$ such that the metric takes the form
\begin{equation}
\label{twierdzenie_metryka_Walker_2}
\frac{1}{2} ds^{2} =  dqdy-dpdx + \mathcal{A}  \, dp^{2}  + \mathcal{B}   \, dq^{2}
\end{equation}
where $\mathcal{A}=\mathcal{A} (q,p,x)$ and $\mathcal {B}=\mathcal{B}  (q,p,y)$ are arbitrary holomorphic (real smooth) functions.
\end{Twierdzenie}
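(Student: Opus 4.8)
The plan is to realise the type $[\textrm{deg}]^{n} \otimes [\textrm{D}]^{nn}$ by starting from the two-sided Walker metric (\ref{twierdzenie_metryka_Walker_1}), in which the first ASD congruence $\mathcal{C}^{n}_{m^{\dot{A}}}$ is already built in through $m_{\dot{A}}=[0,1]$, $\mathcal{Q}=\mathcal{Q}(q,p,y)$, $\mathcal{B}=\mathcal{B}(q,p,y)$, and then equipping the space with a \emph{second} ASD congruence $\mathcal{C}^{n}_{n^{\dot{A}}}$, $n_{\dot{A}}=[1,w]$, which is again nonexpanding. Two distinct nonexpanding ASD congruences make $m_{\dot{A}}$ and $n_{\dot{A}}$ two double Penrose spinors of $C_{\dot{A}\dot{B}\dot{C}\dot{D}}$, so the ASD Weyl spinor is automatically of type $[\textrm{D}]$. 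The entire argument runs parallel to the $[--,--]$ analysis of Theorem \ref{theorem_degn_x_degne_minisminus_minusminus}, the only change being that the second congruence is nonexpanding rather than expanding.

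First I would impose $N_{A}=0$ on the second congruence. By (\ref{N_1_dla_degxdeg}) the condition $N_{1}=0$ gives $w_{x}=0$, and the null string equation (\ref{rownania_ASD_struny_typuCmn_dotA_1}) then forces $w_{y}=0$, so $w=w(q,p)$; this $w$ is gauged to zero by a transformation $q'=q'(q,p)$ from (\ref{gauge_restricted}), fixing $n_{\dot{A}}=[1,0]$. With $w=0$ the remaining null string equation (\ref{rownania_ASD_struny_typuCmn_dotA_2}) reduces to $\mathcal{A}_{y}=0$, giving $\mathcal{A}=\mathcal{A}(q,p,x)$ exactly as in (\ref{twierdzenie_metrykadegn_x_degne_minisminus_minisminus}). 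Finally, evaluating (\ref{N_2_dla_degxdeg}) at $w=0$ gives $\tfrac{1}{\sqrt{2}}\,N_{2}=-\mathcal{Q}_{y}$, so the last nonexpansion condition $N_{2}=0$ yields $\mathcal{Q}_{y}=0$, i.e. $\mathcal{Q}=\mathcal{Q}(q,p)$. At this stage the metric reads $\tfrac{1}{2}\,ds^{2}=dqdy-dpdx+\mathcal{A}(q,p,x)\,dp^{2}-2\mathcal{Q}(q,p)\,dqdp+\mathcal{B}(q,p,y)\,dq^{2}$.

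It remains to remove the cross term $-2\mathcal{Q}(q,p)\,dqdp$, and here is where the present case genuinely departs from the expanding one: since $\mathcal{Q}$ no longer depends on $y$, it can be gauged away. I would use the shift freedom still present in (\ref{gauge}), taking $\sigma^{\dot{1}}=0$ and $\sigma^{\dot{2}}=\sigma(q,p)$, which acts as $y\mapsto y+\sigma(q,p)$ with $q,p,x$ unchanged. Then $dy\mapsto dy-\sigma_{q}\,dq-\sigma_{p}\,dp$, so $dqdy$ produces $-\sigma_{p}\,dqdp-\sigma_{q}\,dq^{2}$; choosing $\sigma$ with $\sigma_{p}=-2\mathcal{Q}$ cancels the cross term, the leftover $-\sigma_{q}\,dq^{2}$ is absorbed into a redefined $\mathcal{B}(q,p,y)$, and $\mathcal{A}(q,p,x)$ is left intact. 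This yields (\ref{twierdzenie_metryka_Walker_2}).

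The only non-routine point is the admissibility of this final shift. Because the transformation $y\mapsto y+\sigma(q,p)$ does not alter the coordinates $q$ and $p$, the rules (\ref{transformacja_na_z}) for $z$ and its analogue for $w$ give $z'=z=0$ and $w'=w=0$; hence both ASD congruences stay nonexpanding and the type $[\textrm{deg}]^{n}\otimes[\textrm{D}]^{nn}$ is preserved. Everything else is the bookkeeping already performed in the earlier theorems, so I expect no further difficulty.
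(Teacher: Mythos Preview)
Your proposal is correct and follows essentially the same route as the paper: the paper starts from the metric (\ref{twierdzenie_metrykadegn_x_degne_minisminus_minisminus}) (which already encodes your steps $w=0$ and $\mathcal{A}_{y}=0$), imposes $N_{2}=-\sqrt{2}\,\mathcal{Q}_{y}=0$ to get $\mathcal{Q}=\mathcal{Q}(q,p)$, and then gauges $\mathcal{Q}$ away via the transformation law (\ref{transformacja_Q}) under (\ref{gauge_restricted_2}), which is exactly your shift $\sigma^{\dot{1}}=0$, $\sigma^{\dot{2}}_{p}=-2\mathcal{Q}$. The only difference is cosmetic: you spell out the intermediate reductions explicitly rather than citing Theorem~\ref{theorem_degn_x_degne_minisminus_minusminus}.
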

\begin{proof}
Consider the metric (\ref{twierdzenie_metrykadegn_x_degne_minisminus_minisminus}). Vanishing of the expansion of the $\mathcal{C}^{e}_{n^{\dot{A}}}$ implies $N_{2} =- \sqrt{2} \mathcal{Q}_{y} =0$ (compare (\ref{N_2_dla_degxdeg})). Hence, $\mathcal{Q}=\mathcal{Q} (q,p)$. The transformation law for $\mathcal{Q}$ follows from (\ref{transformacja_Q}) and (\ref{gauge_restricted_2}) and it yields
\begin{equation}
\Delta \mathcal{Q}' = \mathcal{Q} + \frac{1}{2} \frac{dq'}{dq} \, \frac{\partial \sigma^{\dot{2}}}{\partial p} - \frac{1}{2} \frac{dp'}{dp} \, \frac{\partial \sigma^{\dot{1}}}{\partial q}
\end{equation}
Hence, $\mathcal{Q}$ can be gauged away without any loss of generality.
\end{proof}

Nonzero conformal curvature coefficients, the traceless Ricci tensor and the curvature scalar of the metric (\ref{twierdzenie_metryka_Walker_2}) in Plebański tetrad read
\begin{eqnarray}
\label{deg_n_x_D_nn_allformulas}
&& C^{(3)} = \frac{R}{6} = 2 C_{\dot{1}\dot{1}\dot{2}\dot{2}} = -\frac{1}{3} (\mathcal{A}_{xx} + \mathcal{B}_{yy}), \ 
C^{(2)} = -\mathcal{A}_{qx} - \mathcal{B}_{py}
\\ \nonumber
&& \frac{1}{2} C^{(1)} = -\mathcal{B}_{pp} - \mathcal{A}_{qq} + \mathcal{B}_{p}\mathcal{A}_{x} - \mathcal{B}_{y}\mathcal{A}_{q}
\\ \nonumber
&& C_{12\dot{1}\dot{2}} = \frac{1}{4} (\mathcal{A}_{xx} - \mathcal{B}_{yy}), \ C_{22\dot{1}\dot{2}} = \frac{1}{2} (\mathcal{A}_{xq} - \mathcal{B}_{yp})
\end{eqnarray}
Coefficient $C_{12\dot{1}\dot{2}}$ has a transparent geometrical meaning. If $C_{12\dot{1}\dot{2}} \ne 0$ the traceless Ricci tensor has four eigenvectors and two double eigenvalues. If $C_{12\dot{1}\dot{2}} = 0$ the traceless Ricci tensor has two eigenvectors and one quadruple eigenvalue (see \cite{Chudecki_struny}).

The metric (\ref{twierdzenie_metryka_Walker_2}) admits coordinate gauge freedom
\begin{eqnarray}
\label{ograniczone_transformacje}
&& q'=q'(q), \ p'=p'(p) , \ \frac{d q'}{dq} =: \mu(q) , \ \frac{dp'}{dp} =: \nu(p)
\\ \nonumber
&& x'=\frac{1}{\nu} \, x + \frac{1}{\nu} \frac{\partial \sigma}{\partial p} , \ y'=\frac{1}{\mu} \, y + \frac{1}{\mu} \frac{\partial \sigma}{\partial q}
\end{eqnarray}
where $\sigma=\sigma(q,p)$, $\mu=\mu(q)$ and $\nu=\nu(p)$ are arbitrary functions. Under (\ref{ograniczone_transformacje}) the functions $\mathcal{A}$ and $\mathcal{B}$ transform as follows
\begin{equation}
\label{transformacje_na_funkcje_AB}
 \mathcal{A}' = \frac{1}{\nu^{2}} \, \mathcal{A} - \frac{\nu_{p}}{\nu^{3}} \, x + \frac{1}{\nu} \frac{\partial}{\partial p} \left( \frac{1}{\nu}  \frac{\partial \sigma}{\partial p} \right) , \
  \mathcal{B}' = \frac{1}{\mu^{2}} \, \mathcal{B} + \frac{\mu_{q}}{\mu^{3}} \, y - \frac{1}{\mu} \frac{\partial}{\partial q} \left( \frac{1}{\mu}  \frac{\partial \sigma}{\partial q} \right) 
  \end{equation}

Let us discuss briefly possible Petrov-Penrose types of the metric (\ref{twierdzenie_metryka_Walker_2}).

\textbf{Case $\mathcal{A}_{xx} + \mathcal{B}_{yy} \ne 0$}. If $\mathcal{A}_{xx} + \mathcal{B}_{yy} \ne 0$ the metric (\ref{twierdzenie_metryka_Walker_2}) is of the type $[\textrm{II}]^{n} \otimes [\textrm{D}]^{nn}$ or $[\textrm{D}]^{n} \otimes [\textrm{D}]^{nn}$. The type $[\textrm{D}]^{n} \otimes [\textrm{D}]^{nn}$ is given by the condition
\begin{equation}
\label{warunek_na_typ_D}
2 C^{(2)} C^{(2)} - 3 C^{(3)} C^{(1)}=0 
\end{equation}
Eq. (\ref{warunek_na_typ_D}) written explicitly yields
\begin{equation}
\label{warunek_na_typ_D_2_jawnie}
( \mathcal{A}_{qx} + \mathcal{B}_{py})^{2} - (\mathcal{A}_{xx} + \mathcal{B}_{yy}) (\mathcal{B}_{pp} + \mathcal{A}_{qq} - \mathcal{B}_{p}\mathcal{A}_{x} + \mathcal{B}_{y}\mathcal{A}_{q}) = 0
\end{equation}

Interesting task is to find a solution of the condition (\ref{warunek_na_typ_D_2_jawnie}) such that a space is still of the type $[\textrm{D}]^{n} \otimes [\textrm{D}]^{nn}$, i.e., there is no additional SD $\mathcal{C}$ (such an additional structure is considered in Section \ref{sekcja_czwarta_struna}). Namely, to find a solution of the condition (\ref{warunek_na_typ_D_2_jawnie}) such that Eqs. (\ref{equations_for_the_second_SD_congruence}) are not satisfied. This problem is quite advanced and we do not consider it here. 

A special solution of Eq. (\ref{warunek_na_typ_D_2_jawnie}) is given by $\mathcal{A}=0$, $\mathcal{B}_{p} = g (\mathcal{B}_{y}, q)$ where $g$ is an arbitrary function. In this case Eq. (\ref{equations_for_the_second_SD_congruence_2}) is identically satisfied, but (\ref{equations_for_the_second_SD_congruence_1}) is not. Consequently, some $g$s lead to the type $[\textrm{D}]^{n} \otimes [\textrm{D}]^{nn}$ solutions but we were not able to find any explicit example.

\textbf{Case $\mathcal{A}_{xx} + \mathcal{B}_{yy} = 0$}. If $\mathcal{A}_{xx} + \mathcal{B}_{yy} = 0$ then $C^{(3)} = 0$ but at the same time $C_{\dot{1}\dot{1}\dot{2}\dot{2}} =0$. Then a space becomes SD space of the type $[\textrm{III}]^{n} \otimes [\textrm{O}]^{n}$ or $[\textrm{N}]^{n} \otimes [\textrm{O}]^{n}$. Such spaces are considered in Section \ref{subsekcja_III_NXnic}.

\subsection{Spaces of the types $[\textrm{II,D}]^{ne} \otimes [\textrm{D}]^{nn}$ and $[\textrm{D}]^{nn} \otimes [\textrm{D}]^{nn}$}
\label{sekcja_czwarta_struna}

\subsubsection{The second congruence of SD null strings}

In this Section the structure of a space will be specialized deeper. The metric (\ref{twierdzenie_metryka_Walker_2}) is equipped with three $\mathcal{C}s$:
\begin{eqnarray}
\label{trzy_struny_przepisy}
 \textrm{SD} && \mathcal{C}^{n}_{m^{A}} \textrm{ generated by the spinor } m_{A}, \ m_{2} \ne 0
\\ \nonumber
 \textrm{ASD} && \mathcal{C}^{n}_{m^{\dot{A}}} \textrm{ generated by the spinor } m_{\dot{A}}, \ m_{\dot{2}} \ne 0
\\ \nonumber
\textrm{ASD} && \mathcal{C}^{n}_{n^{\dot{A}}} \textrm{ generated by the spinor } n_{\dot{A}}, \ n_{\dot{1}} \ne 0
\end{eqnarray}
Now we assume the existence of the second SD $\mathcal{C}$. Let $\mathcal{C}_{n^{A}}$ be generated by the spinor $n_{A}$ such that $m^{A} n_{A} \ne 0 \ \Longrightarrow \ n_{1} \ne 0$. Hence, the spinor $n_{A}$ can be re-scaled to the form $n_{A} = [1,n]$. The transformation law for $n$ reads (compare (\ref{transformacja_spinorowa}))
\begin{equation}
\label{transformacja_na_h_tilda}
n'=\Delta^{-\frac{1}{2}} (n-\sigma_{pq})
\end{equation}
SD null string equations $n^{B} \nabla_{A \dot{M}} n_{B} = n_{A} N_{\dot{M}}$ written explicitly yield
\begin{subequations}
\label{equations_for_the_second_SD_congruence}
\begin{eqnarray}
\label{equations_for_the_second_SD_congruence_1}
&& n_{q}   - n_{y} \mathcal{B} - \mathcal{B}_{p} + n \mathcal{B}_{y} = n n_{x}
\\ 
\label{equations_for_the_second_SD_congruence_2}
&& n_{p}  + n_{x} \mathcal{A} + \mathcal{A}_{q} - n \mathcal{A}_{x} = n n_{y}
\end{eqnarray}
\end{subequations}
In general, $\mathcal{C}_{n^{A}}$ is expanding and the expansion is given by the formula
\begin{equation}
\label{ekspansion_of_the_second_SD_congruenceee}
N_{\dot{M}} = \sqrt{2} \frac{\partial n}{\partial p^{\dot{M}}}
\end{equation}
It it easy to check, that the properties of the intersection of $\mathcal{C}_{n^{A}}$ with ASD $\mathcal{C}s$ read
\begin{eqnarray}
\textrm{properties of } \mathcal{I}(\mathcal{C}_{n^{A}}, \mathcal{C}_{m^{\dot{A}}}): && \theta, \varrho \sim m^{\dot{A}} N_{\dot{A}} \sim \frac{\partial n}{\partial x}
\\ \nonumber
\textrm{properties of } \mathcal{I}(\mathcal{C}_{n^{A}}, \mathcal{C}_{n^{\dot{A}}}): && \theta, \varrho \sim n^{\dot{A}} N_{\dot{A}} \sim \frac{\partial n}{\partial y}
\end{eqnarray}
Hence, the space is equipped with four $\mathcal{I}s$ with the properties (listed in order given by (\ref{symbol_na_CC_CC})):
\begin{eqnarray}
\label{warrrrunki_na_dodatkowe_wlasnosci_czterystruny}
[--,--,++,++] & \textrm{if} & n_{x} \ne 0, n_{y} \ne 0
\\ \nonumber
[--,--,--,++] & \textrm{if} & n_{x} = 0, n_{y} \ne 0 
\\ \nonumber
[--,--,++,--] & \textrm{if} & n_{x} \ne 0, n_{y} = 0 
\\ \nonumber
[--,--,--,--] & \textrm{if} & n_{x} = n_{y} = 0
\\ \nonumber
\end{eqnarray}
The cases $[--,--,--,++]$ and $[--,--,++,--]$ are equivalent. In the case $[--,--,--,--]$, $\mathcal{C}_{n^{A}}$ is nonexpanding.

The integrability condition of the system (\ref{equations_for_the_second_SD_congruence})
 (see, e.g., \cite{Chudecki_struny}) yields
\begin{equation}
\label{warunek_calkowalnosci_rownan_czwartej_struny}
C^{(1)} - 4C^{(2)} \, n + 6 C^{(3)} \, n^{2} = 0
\end{equation}
Eq. (\ref{warunek_calkowalnosci_rownan_czwartej_struny}) is a quadratic equation for $n$. If $C^{(3)} \ne 0$ one finds a general solution in the form
\begin{equation}
\label{rozwiazanie_n_pm}
n_{\pm} = \frac{C^{(2)}}{3 C^{(3)}} \pm \frac{\sqrt{2 \delta}}{6 C^{(3)}}, \ \delta := 2 C^{(2)} C^{(2)} - 3 C^{(3)} C^{(1)}
\end{equation}
Note, that vanishing of a discriminant of the quadratic equation (\ref{warunek_calkowalnosci_rownan_czwartej_struny}) is equivalent to the type-D condition (\ref{warunek_na_typ_D}). Hence, if $\delta \ne 0$ and both $n_{-}$ and $n_{+}$ solve the system (\ref{equations_for_the_second_SD_congruence}) one arrives at a space of the type $[\textrm{II}]^{nee} \otimes [\textrm{D}]^{nn}$. This is an extremely interesting case with three SD $\mathcal{C}s$ and two ASD $\mathcal{C}s$, but - as we mentioned earlier - we do not consider it in this paper. If $n_{-}$ or $n_{+}$ solve the system (\ref{equations_for_the_second_SD_congruence}) then a space is of the type $[\textrm{II}]^{ne} \otimes [\textrm{D}]^{nn}$. If $\delta=0$ then there is only one double solution of Eq. (\ref{warunek_calkowalnosci_rownan_czwartej_struny}) and a space is of the type $[\textrm{D}]^{ne} \otimes [\textrm{D}]^{nn}$. Unfortunately, after putting $n_{\pm}$ into the system (\ref{equations_for_the_second_SD_congruence}) one gets very complicated equations for $\mathcal{A}$ and $\mathcal{B}$. These equations will be solved in a few special cases in the next Sections.

If $C^{(3)} =0$, we obtain
\begin{equation}
\label{n_na_typ_III}
n = \frac{C^{(1)}}{4 C^{(2)}}
\end{equation}
Solutions of the system (\ref{equations_for_the_second_SD_congruence}) with $n$ given by (\ref{n_na_typ_III}) together with $C^{(3)} =0$ imply that a space is of the type $[\textrm{III}]^{ne} \otimes [\textrm{O}]^{n}$. This case is considered in Section \ref{subsekcja_III_ne_nanic}.

\subsubsection{Spaces of the types $\{ [\textrm{II,D}]^{ne} \otimes [\textrm{D}]^{nn}, [--,--,++,++] \}$}

In this case $n_{x} \ne 0$ and $n_{y} \ne 0$. Such a case will be considered elsewhere.

\subsubsection{Spaces of the types $\{ [\textrm{II}]^{ne} \otimes [\textrm{D}]^{nn}, [--,--,--,++] \}$}
\label{subsection_II_D_mmmmmmpp}

\begin{Twierdzenie} 
\label{theorem_II_D_mmmmmmpp}
Let $(\mathcal{M}, ds^{2})$ be a complex (neutral) space of the type $\{ [\textrm{II}]^{ne} \otimes [\textrm{D}]^{nn}, [--,--,--,++] \}$ ($\{ [\textrm{II}_{r}]^{ne} \otimes [\textrm{D}_{r}]^{nn}, [--,--,--,++] \}$). Then there exists a local coordinate system $(q,p,x,n)$ such that the metric takes the form
\begin{equation}
\label{IIxD_mmmmmmpp}
\frac{1}{2} ds^{2} = -dpdx - n \, dqdp - p \, dq dn + B \, dq dn + A(p-B) \, dq^{2}
\end{equation}
where $A=A (q,n)$ and $B=B  (q,n)$ are arbitrary holomorphic (real smooth) functions such that $A B_{n} + B_{q} \ne 0$ and $A_{nn} (B-p)^{2} +(B-p) \, \frac{\partial}{\partial n} (A B_{n}+B_{q} ) - B_{n} (A B_{n} +B_{q})  \ne 0$.
\end{Twierdzenie}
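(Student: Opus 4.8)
The plan is to begin from the general two-sided Walker metric (\ref{twierdzenie_metryka_Walker_2}), which by Theorem \ref{main_theorem_degn_x_degnn} is the general metric of a space of type $[\textrm{deg}]^{n}\otimes[\textrm{D}]^{nn}$, and to impose on it the second SD congruence $\mathcal{C}_{n^{A}}$ generated by $n_{A}=[1,n]$ through the null string equations (\ref{equations_for_the_second_SD_congruence}). The prescribed intersection data $[--,--,--,++]$ translate, via (\ref{warrrrunki_na_dodatkowe_wlasnosci_czterystruny}), into $n_{x}=0$ and $n_{y}\neq 0$; in particular $n=n(q,p,y)$.

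The first step is to eliminate $\mathcal{A}$. Differentiating (\ref{equations_for_the_second_SD_congruence_2}) with respect to $x$ and using that $n$, $n_{p}$, $n_{y}$ do not depend on $x$ gives $\mathcal{A}_{qx}=n\,\mathcal{A}_{xx}$. If $\mathcal{A}_{xx}\neq 0$ on an open set this would express $n=\mathcal{A}_{qx}/\mathcal{A}_{xx}$ as a function free of $y$, contradicting $n_{y}\neq 0$; hence $\mathcal{A}_{xx}=0$ and then $\mathcal{A}_{qx}=0$, so $\mathcal{A}=\mathcal{A}_{1}(p)\,x+\mathcal{A}_{0}(q,p)$. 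I would then use the residual gauge (\ref{ograniczone_transformacje})--(\ref{transformacje_na_funkcje_AB}) to remove $\mathcal{A}$: the factor $\nu(p)$ is fixed by the ODE $\nu_{p}=\mathcal{A}_{1}\nu$ so that $-\nu_{p}\nu^{-3}x$ cancels the $x$-linear part, after which $\sigma(q,p)$ absorbs the leftover $\mathcal{A}_{0}$. Since the properties of the $\mathcal{I}s$ are gauge invariant, $n_{x}=0$ and $n_{y}\neq 0$ survive, and we may assume $\mathcal{A}=0$.

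With $\mathcal{A}=0$ the system (\ref{equations_for_the_second_SD_congruence}) reduces to $n_{p}=n\,n_{y}$ and $n_{q}-n_{y}\mathcal{B}-\mathcal{B}_{p}+n\mathcal{B}_{y}=0$, and since $n_{y}\neq 0$ I would adopt $(q,p,x,n)$ as coordinates, writing $y=y(q,p,n)$. The first reduced equation gives $y_{p}=-n_{p}/n_{y}=-n$, whence $y_{pn}=-1$; therefore $B:=p+y_{n}$ obeys $B_{p}=1+y_{np}=0$, i.e. $B=B(q,n)$ and $y_{n}=B-p$. Next, writing $\tilde{\mathcal{B}}(q,p,n)$ for $\mathcal{B}$ re-expressed in the new chart, the identity $\tilde{\mathcal{B}}_{p}=\mathcal{B}_{p}-n\mathcal{B}_{y}$ together with the second reduced equation and $n_{q}=-n_{y}y_{q}$ yields $\tilde{\mathcal{B}}_{p}=-n_{y}\,(y_{q}+\mathcal{B})$. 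Setting $G:=\mathcal{B}+y_{q}$ and noting $y_{qp}=0$ and $-n_{y}=(p-B)^{-1}$, this becomes the linear ODE $G_{p}=G/(p-B)$ in $p$; as $B$ is $p$-independent it integrates to $G=A(q,n)\,(p-B)$. Substituting $dy=y_{q}\,dq-n\,dp+(B-p)\,dn$ into the metric and using $\mathcal{B}+y_{q}=A(p-B)$ together with $-p\,dqdn+B\,dqdn=(B-p)\,dqdn$ reproduces (\ref{IIxD_mmmmmmpp}). This passage, in which two functions a priori depending on $(q,p,x)$ and $(q,p,y)$ are forced down to functions of $(q,n)$ alone, is the crux of the argument and the step I expect to demand the most care, mainly in keeping the chain rule between the $(q,p,y)$ and $(q,p,n)$ descriptions straight.

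It remains to translate the two inequalities. Evaluating the curvature (\ref{deg_n_x_D_nn_allformulas}) in the new chart, $\tilde{\mathcal{B}}_{p}=A$ and $\tilde{\mathcal{B}}_{n}=A_{n}(p-B)-(AB_{n}+B_{q})$ give $C^{(3)}=-\tfrac{1}{3}\mathcal{B}_{yy}$ proportional to $A_{nn}(B-p)^{2}+(B-p)\partial_{n}(AB_{n}+B_{q})-B_{n}(AB_{n}+B_{q})$, so the second stated condition is exactly $C^{(3)}\neq 0$, excluding types [III], [N], [O]. Moreover, on the integrability surface (\ref{warunek_calkowalnosci_rownan_czwartej_struny}) one has $\delta=2(C^{(2)}-3C^{(3)}n)^{2}$ (with $\delta$ as in (\ref{rozwiazanie_n_pm})), and a short computation gives $C^{(2)}-3C^{(3)}n=(AB_{n}+B_{q})/(B-p)^{2}$; hence the first condition $AB_{n}+B_{q}\neq 0$ is precisely $\delta\neq 0$, which by (\ref{warunek_na_typ_D}) separates type [II] from type [D]. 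Together these confirm that the metric (\ref{IIxD_mmmmmmpp}) under the two stated constraints realizes exactly $\{[\textrm{II}]^{ne}\otimes[\textrm{D}]^{nn},[--,--,--,++]\}$.
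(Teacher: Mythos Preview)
Your proof is correct and follows essentially the same route as the paper's: you start from the metric (\ref{twierdzenie_metryka_Walker_2}), use $n_{x}=0$ together with $\partial_{x}$ of (\ref{equations_for_the_second_SD_congruence_2}) to force $\mathcal{A}_{xx}=\mathcal{A}_{qx}=0$ and gauge away $\mathcal{A}$, pass to $(q,p,x,n)$ coordinates via $n_{p}=nn_{y}$, and then solve the remaining linear first-order equation for $\widetilde{\mathcal{B}}$. The only cosmetic difference is that the paper introduces an auxiliary primitive $S(q,n)$ with $y=-pn+S$ and sets $B:=S_{n}$, while you work directly with $y_{n}=B-p$ and the linear ODE $G_{p}=G/(p-B)$; these are equivalent. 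Your identification of the two nondegeneracy conditions with $C^{(3)}\neq 0$ and $\delta\neq 0$ via $\delta=2(C^{(2)}-3nC^{(3)})^{2}$ on the integrability locus also matches the paper's direct curvature computation.
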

\begin{proof}
From (\ref{warrrrunki_na_dodatkowe_wlasnosci_czterystruny}) it follows that $n=n(q,p,y)$ holds. Differentiating (\ref{equations_for_the_second_SD_congruence_2}) with respect to $x$ one gets $\mathcal{A}_{qx}=n\mathcal{A}_{xx}$. Because $n_{y} \ne 0$ then $\mathcal{A}_{xx}=\mathcal{A}_{qx}=0$. Hence, $\mathcal{A}$ can be gauged away without any loss of generality (compare (\ref{transformacje_na_funkcje_AB})). With $\mathcal{A}=0$, Eq. (\ref{equations_for_the_second_SD_congruence_2}) reduces to $n_{p}=nn_{y}$ with an implicit solution
\begin{equation}
\label{warunek_na_uwiklanie_rozwiazania}
y = -pn+ S (q,n)
\end{equation}
where $S=S(q,n)$ is an arbitrary function. We treat $n$ as an independent variable and $y$ as a function, $y=y(q,p,n)$. Note that
\begin{equation}
n_{y} = \frac{1}{S_{n}-p}, \ n_{p} = \frac{n}{S_{n}-p}, \ n_{q} =- \frac{S_{q}}{S_{n}-p}
\end{equation}
Let us denote
\begin{equation}
\widetilde{\mathcal{B}} = \widetilde{\mathcal{B}} (q,p,n) = \widetilde{\mathcal{B}} (q,p,y (q,p,n)) = \mathcal{B} (q,p,y)
\end{equation}
Hence
\begin{equation}
 \mathcal{B}_{y} = \widetilde{\mathcal{B}}_{n} n_{y} , \ \mathcal{B}_{p} = \widetilde{\mathcal{B}}_{p} + \widetilde{\mathcal{B}}_{n} n_{p}
\end{equation}
Eq. (\ref{equations_for_the_second_SD_congruence_1}) written in the coordinate system $(q,p,x,n)$ yields
\begin{equation}
S_{q} + \widetilde{\mathcal{B}} + (S_{n}-p) \, \widetilde{\mathcal{B}}_{p}=0
\end{equation}
with a general solution
\begin{equation}
\label{rozwiazanie_na_B_case_IInexDnn}
\widetilde{\mathcal{B}} = A(q,n) \, ( p  -  S_{n} ) - S_{q}
\end{equation}
where $A=A(q,n)$ is an arbitrary function. Inserting (\ref{warunek_na_uwiklanie_rozwiazania}), $\mathcal{A}=0$ and (\ref{rozwiazanie_na_B_case_IInexDnn}) into (\ref{twierdzenie_metryka_Walker_2}), denoting $B(q,n) := S_{n}$ one arrives at (\ref{IIxD_mmmmmmpp}).

The SD conformal curvature coefficients and the traceless Ricci tensor of the metric (\ref{IIxD_mmmmmmpp}) read
\begin{eqnarray}
\nonumber
&& C^{(3)} = \frac{1}{3(B-p)^{3}} \left( A_{nn} (B-p)^{2} +(B-p) \, \frac{\partial}{\partial n} (A B_{n}+B_{q} ) - B_{n} (A B_{n} +B_{q})  \right) 
\\ \label{C3_w_typie_mmmmmmpp}
&& C^{(2)} = 3n C^{(3)} + \frac{A  B_{n} + B_{q}}{(B-p)^{2}}, \ C^{(1)} = 4n C^{(2)} - 6n^{2} C^{(3)}
\\ \nonumber
&& C_{12 \dot{1}\dot{2}} = \frac{3}{4} C^{(3)}, \ C_{22 \dot{1}\dot{2}} = \frac{1}{2} C^{(2)}
\end{eqnarray}
Function $\delta$ defined by (\ref{rozwiazanie_n_pm}) takes the form
\begin{equation}
\delta = 2 \left( \frac{A B_{n} + B_{q}}{(B-p)^{2}}  \right)^{2}
\end{equation}
The SD Weyl spinor is of the type [II] iff $C^{(3)} \ne 0$ and $\delta \ne 0$. 
\end{proof}

\subsubsection{Spaces of the types $\{ [\textrm{D}]^{ne} \otimes [\textrm{D}]^{nn}, [--,--,--,++] \}$}
\label{subsection_D_D_mmmmmmpp}

\begin{Twierdzenie} 
\label{theorem_D_D_mmmmmmpp}
Let $(\mathcal{M}, ds^{2})$ be a complex (neutral) space of the type $\{ [\textrm{D}]^{ne} \otimes [\textrm{D}]^{nn}, [--,--,--,++] \}$ ($\{ [\textrm{D}_{r}]^{ne} \otimes [\textrm{D}_{r}]^{nn}, [--,--,--,++] \}$). Then there exists a local coordinate system $(q,p,x,z)$ such that the metric takes the form
\begin{equation}
\label{metryka_DxD}
\frac{1}{2} ds^{2} = -dp dx - F \, dq dp + (z-p) F_{z} \, dqdz
\end{equation}
where $F=F(q,z)$ is an arbitrary holomorphic (real smooth) function such that $\partial_{z} \partial_{q} \ln F_{z} \ne 0$.
\end{Twierdzenie}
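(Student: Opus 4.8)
The plan is to take the metric (\ref{IIxD_mmmmmmpp}) of Theorem \ref{theorem_II_D_mmmmmmpp} as the starting point. Indeed, the derivation of (\ref{IIxD_mmmmmmpp}) used only the conditions $n_{x}=0$, $n_{y}\neq 0$ that characterize the intersection pattern $[--,--,--,++]$, together with the SD null string equations (\ref{equations_for_the_second_SD_congruence}); it never invoked the type-$[\textrm{II}]$ hypothesis. Consequently (\ref{IIxD_mmmmmmpp}), with arbitrary $A=A(q,n)$ and $B=B(q,n)$, is the general metric for every space with this intersection pattern, the two Petrov-Penrose alternatives being separated by $\delta\neq 0$ (type $[\textrm{II}]$) versus $\delta=0$ (type $[\textrm{D}]$). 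Thus the only new ingredient for the present theorem is to impose $\delta=0$.

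First I would use the expression $\delta=2\left(\frac{AB_{n}+B_{q}}{(B-p)^{2}}\right)^{2}$ found in the proof of Theorem \ref{theorem_II_D_mmmmmmpp}. The type-$[\textrm{D}]$ condition $\delta=0$ is therefore equivalent to $AB_{n}+B_{q}=0$. Assuming $B_{n}\neq 0$ (the generic branch; $B_{n}=0$ forces $B_{q}=0$ and degenerates), I solve for the previously free function $A$, getting $A=-B_{q}/B_{n}$. Substituting this into (\ref{IIxD_mmmmmmpp}) and regrouping gives the compact form $\tfrac{1}{2}ds^{2}=-dp\,dx-n\,dq\,dp+(B-p)\,(dn-A\,dq)\,dq$.

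The core of the argument is then the coordinate change, which is where the main work lies. I would replace the coordinate $n$ by $z:=B(q,n)$, keeping $(q,p,x)$ fixed; since $B_{n}\neq 0$ this is locally invertible, and I define $F(q,z)$ to be the inverse of $B$ in its second slot, so that $n=F(q,z)$. Differentiating the identity $F(q,B(q,n))=n$ yields the two decisive relations $F_{z}=1/B_{n}$ and $F_{q}=-B_{q}/B_{n}=A$. Hence $dn=F_{q}\,dq+F_{z}\,dz=A\,dq+F_{z}\,dz$, so $dn-A\,dq=F_{z}\,dz$, while $-n\mapsto -F$ and $B-p\mapsto z-p$. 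Inserting these into the regrouped metric produces exactly (\ref{metryka_DxD}).

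Finally I would verify that the stated inequality $\partial_{z}\partial_{q}\ln F_{z}\neq 0$ is precisely the condition $C^{(3)}\neq 0$ which guarantees the space is genuinely of type $[\textrm{D}]$ rather than collapsing to $[\textrm{O}]$. Specializing (\ref{C3_w_typie_mmmmmmpp}) to $AB_{n}+B_{q}=0$ gives $C^{(3)}=A_{nn}/(3(B-p))$; then using $B_{n}=1/F_{z}$, $B_{nn}=-F_{zz}/F_{z}^{3}$ and $A=F_{q}$ one computes $A_{nn}=\frac{1}{F_{z}}\,\partial_{z}\partial_{q}\ln F_{z}$, whence $C^{(3)}=\frac{\partial_{z}\partial_{q}\ln F_{z}}{3F_{z}(z-p)}$. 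Since $F_{z}\neq 0$ and $z\neq p$ are the non-degeneracy requirements for (\ref{metryka_DxD}), the type-$[\textrm{D}]$ condition reduces exactly to $\partial_{z}\partial_{q}\ln F_{z}\neq 0$. The only real obstacle is computational bookkeeping in this last step; conceptually the whole proof is dictated by the single substitution $A=-B_{q}/B_{n}$ and the coordinate swap $n\leftrightarrow z$.
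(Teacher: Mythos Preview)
Your proof is correct and follows essentially the same route as the paper: impose $\delta=0$ on the metric (\ref{IIxD_mmmmmmpp}) to obtain $AB_{n}+B_{q}=0$, then invert $B$ in its second argument to pass from the coordinate $n$ to $z=B$ with $n=F(q,z)$, which gives $A=F_{q}$ and yields (\ref{metryka_DxD}). Your derivation of $C^{(3)}$ via $A_{nn}=\frac{1}{F_{z}}\partial_{z}\partial_{q}\ln F_{z}$ is more explicit than the paper's, but the underlying argument is identical.
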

\begin{proof}
The metric (\ref{IIxD_mmmmmmpp}) reduces to type $\{ [\textrm{D}]^{ne} \otimes [\textrm{D}]^{nn}, [--,--,--,++] \}$ iff $\delta=0$. Hence 
\begin{equation}
\label{constraint_na_typ_D_mmmmmmpp}
A B_{n} + B_{q}=0
\end{equation}
Multiplying (\ref{constraint_na_typ_D_mmmmmmpp}) by $dn \wedge dq$, treating $B$ as a new variable and $n$ as a function, $n=n(q,B)$, one arrives at the solution $A=n_{q}$. If we denote $B \rightarrow z$ and $n \rightarrow F$, we arrive at the metric (\ref{metryka_DxD}).

The SD conformal curvature coefficients read
\begin{equation}
C^{(3)} = \frac{\partial_{z} \partial_{q} \ln F_{z}}{3(z-p)F_{z}}, \ C^{(2)} = 3F C^{(3)}, \ C^{(1)} =  6F^{2} C^{(3)}
\end{equation}
The SD Weyl spinor is of the type [D] iff $\partial_{z} \partial_{q} \ln F_{z} \ne 0$.
\end{proof}

\subsubsection{Spaces of the types $ [\textrm{D}]^{nn} \otimes [\textrm{D}]^{nn} $}
\label{subsection_D_D_mmmmmmmm}

\begin{Twierdzenie} 
\label{theorem_D_D_mmmmmmmm}
Let $(\mathcal{M}, ds^{2})$ be a complex (neutral) space of the type $ [\textrm{D}]^{nn} \otimes [\textrm{D}]^{nn}$ ($ [\textrm{D}_{r}]^{nn} \otimes [\textrm{D}_{r}]^{nn}$). Then there exists a local coordinate systems $(q,p,x,y)$ such that the metric takes the form
\begin{equation}
\label{DxD_two_sided_conformally_recurrent}
\frac{1}{2} ds^{2} =   dqdy-dpdx + \mathcal{A}  \, dp^{2}  + \mathcal{B}  \, dq^{2}
\end{equation}
where $\mathcal{A} = \mathcal{A} (x,p)$ and $\mathcal{B} =\mathcal{B} (y,q)$ are arbitrary holomorphic (real smooth) functions such that $\mathcal{A}_{xx} + \mathcal{B}_{yy} \ne 0$.
\end{Twierdzenie}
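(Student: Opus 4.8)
The plan is to build directly on the metric (\ref{twierdzenie_metryka_Walker_2}) of a space of type $[\textrm{deg}]^{n} \otimes [\textrm{D}]^{nn}$ and to adjoin a \emph{second, nonexpanding} SD congruence $\mathcal{C}^{n}_{n^{A}}$. Following Section \ref{sekcja_czwarta_struna}, such a congruence is generated by a spinor $n_{A} = [1,n]$, and its expansion is $N_{\dot{M}} = \sqrt{2}\,\partial n / \partial p^{\dot{M}}$ by (\ref{ekspansion_of_the_second_SD_congruenceee}). Demanding that it be nonexpanding forces $n_{x} = n_{y} = 0$, i.e.\ $n = n(q,p)$; this is precisely the case $[--,--,--,--]$ listed in (\ref{warrrrunki_na_dodatkowe_wlasnosci_czterystruny}).

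First I would remove $n$ by a gauge transformation. The transformation law (\ref{transformacja_na_h_tilda}) reads $n' = \Delta^{-1/2}(n - \sigma_{pq})$, where $\sigma = \sigma(q,p)$ is the free function in (\ref{ograniczone_transformacje}). Choosing $\mu = \nu = 1$ (so $\Delta = 1$) and solving $\sigma_{pq} = n$ by two successive integrations sends $n \mapsto 0$. This gauge only shifts $\mathcal{A}$ by $\sigma_{pp}$ and $\mathcal{B}$ by $-\sigma_{qq}$ according to (\ref{transformacje_na_funkcje_AB}), both functions of $(q,p)$ alone, so the form $\mathcal{A}=\mathcal{A}(q,p,x)$, $\mathcal{B}=\mathcal{B}(q,p,y)$ of the metric is preserved. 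Dropping primes, I may therefore assume $n=0$.

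With $n=0$ the SD null string equations (\ref{equations_for_the_second_SD_congruence_1}) and (\ref{equations_for_the_second_SD_congruence_2}) collapse to $\mathcal{B}_{p}=0$ and $\mathcal{A}_{q}=0$, which at once give $\mathcal{B}=\mathcal{B}(y,q)$ and $\mathcal{A}=\mathcal{A}(x,p)$. Inserting these back into (\ref{twierdzenie_metryka_Walker_2}) produces the asserted metric (\ref{DxD_two_sided_conformally_recurrent}). It then remains to settle the Petrov--Penrose type: substituting $\mathcal{A}=\mathcal{A}(x,p)$, $\mathcal{B}=\mathcal{B}(y,q)$ into (\ref{deg_n_x_D_nn_allformulas}) yields $C^{(2)} = -\mathcal{A}_{qx}-\mathcal{B}_{py} = 0$ and $C^{(1)}=0$, whence the discriminant $\delta = 2C^{(2)}C^{(2)} - 3C^{(3)}C^{(1)}$ vanishes and the SD Weyl spinor is of type [D] --- with the two double Penrose spinors manifestly $[0,1]$ and $[1,0]$ --- exactly when $C^{(3)} = -\tfrac{1}{3}(\mathcal{A}_{xx}+\mathcal{B}_{yy}) \ne 0$. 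The condition $\mathcal{A}_{xx}+\mathcal{B}_{yy}\ne 0$ therefore separates type [D] from the degenerate type [O], while the ASD type-D structure carrying its two nonexpanding congruences is inherited unchanged from (\ref{twierdzenie_metryka_Walker_2}).

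The computation is essentially routine once Section \ref{sekcja_czwarta_struna} is in place. The only point requiring care --- and what I regard as the main obstacle --- is checking that the single gauge function $\sigma$ simultaneously annihilates $n$ and leaves the block structure of the metric intact, so that the reduced null string equations can legitimately be read off in the gauge $n=0$; this is guaranteed precisely because $n$, $\sigma_{pp}$ and $\sigma_{qq}$ all depend only on $(q,p)$.
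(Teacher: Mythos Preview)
Your proof is correct and follows essentially the same route as the paper: reduce to $n=n(q,p)$ via the vanishing expansion, gauge $n$ to zero using (\ref{transformacja_na_h_tilda}), read off $\mathcal{B}_{p}=\mathcal{A}_{q}=0$ from the null string equations, and check $C^{(1)}=C^{(2)}=0$ so that type [D] is equivalent to $\mathcal{A}_{xx}+\mathcal{B}_{yy}\ne 0$. Your added verification that the $\sigma$-gauge preserves the $(q,p,x)$/$(q,p,y)$ dependence of $\mathcal{A}$ and $\mathcal{B}$ is a welcome point of rigor that the paper leaves implicit.
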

\begin{proof}
Vanishing of the expansion of $\mathcal{C}^{e}_{n^{A}}$ implies $n=n(q,p)$ (compare (\ref{ekspansion_of_the_second_SD_congruenceee})). Consequently, from (\ref{transformacja_na_h_tilda}) it follows that $n$ can be gauged away. It leaves us with the conditions $\mathcal{B}_{p}=\mathcal{A}_{q}=0$ (compare (\ref{equations_for_the_second_SD_congruence})). Hence, $\mathcal{B}=\mathcal{B}(q,y)$ and $\mathcal{A}=\mathcal{A}(p,x)$. The SD conformal curvature coefficients read
\begin{equation}
C^{(3)} = -\frac{1}{3} (\mathcal{A}_{xx} + \mathcal{B}_{yy}), \ C^{(2)}=C^{(1)}=0
\end{equation}
so the SD Weyl spinor is of the type [D] iff $\mathcal{A}_{xx} + \mathcal{B}_{yy} \ne 0$.
\end{proof}
After suitable transformation of the variables the metric (\ref{DxD_two_sided_conformally_recurrent}) can be brought to the form
\begin{equation}
\label{DxD_two_sided_conformally_recurrent_second_form}
\frac{1}{2} ds^{2} = \mathcal{\hat{A}} (p,\hat{x}) \, dp d \hat{x}  + \mathcal{\hat{B}} (q,\hat{y}) \, dq d\hat{y}
\end{equation}
The metric (\ref{DxD_two_sided_conformally_recurrent}) ((\ref{DxD_two_sided_conformally_recurrent_second_form}), alternatively) is a well known metric with an interesting property: it is  two-sided conformally recurrent (see, e.g., \cite{Plebanski_Przanowski_rec}).

\color{black}

\subsection{Spaces of the types $[\textrm{III,N}]^{n} \otimes [\textrm{O}]^{n}$}
\label{subsekcja_III_NXnic}

\subsubsection{Types $[\textrm{III,N}]^{n} \otimes [\textrm{O}]^{n}$}
\label{subsekcja_III_NXnic_alesubsekcja}

The standard approach to SD solutions via weak nonexpanding $\mathcal{HH}$-spaces uses the formula (\ref{krzywizna}) \cite{Finley_Plebanski_All,Plebanski_Przanowski_rec,Chudecki_conformally_recurrent}. By demanding that  $C_{\dot{A}\dot{B}\dot{C}\dot{D}}=0$ a solution for $Q^{\dot{A}\dot{B}}$ can be obtained, but there are 15 arbitrary functions of two variables in this solution. Obviously, such an approach generates plenty of arbitrary functions. A number of them is gauge-dependent but it is not so straightforward to prove it. 

There is an interesting geometrical reason why the "$C_{\dot{A}\dot{B}\dot{C}\dot{D}}=0$" - approach is not the optimal one for SD spaces. Weak nonexpanding $\mathcal{HH}$-spaces are equipped with a single SD $\mathcal{C}^{n}$ and there are no ASD $\mathcal{C}s$ in general. However, self-duality condition $C_{\dot{A}\dot{B}\dot{C}\dot{D}}=0$ is equivalent to the existence of infinitely many distinct ASD $\mathcal{C}s$. The problem is that with $C_{\dot{A}\dot{B}\dot{C}\dot{D}}=0$, Plebański tetrad is still adapted to the SD $\mathcal{C}^{n}$ but it is not adapted to the ASD $\mathcal{C}s$ at all.

Our construction presented in Section \ref{subsekcja_deg_nx_D_nn} has an important advantage. We assumed the existence of ASD $\mathcal{C}^{nn}$ and we adapted Plebański tetrad to these two ASD $\mathcal{C}s$. Thus, the metric (\ref{twierdzenie_metryka_Walker_2}) is a better starting point for finding SD metrics then the metric (\ref{metryka_slaba_HH}) because it has been already adapted to the pair of ASD $\mathcal{C}s$. If we additionally demand $C_{\dot{A}\dot{B}\dot{C}\dot{D}}=0$ a solution for $Q^{\dot{A}\dot{B}}$ is no longer full of functions which are arbitrary but gauge-dependent. 

If $C_{\dot{1}\dot{1}\dot{2}\dot{2}}=0$ in (\ref{deg_n_x_D_nn_allformulas}) one arrives at spaces of the types $[\textrm{III,N}]^{n} \otimes [\textrm{O}]^{n}$. These are the only possible SD spaces equipped with SD $\mathcal{C}^{n}$ (SD spaces equipped with SD $\mathcal{C}^{e}$ will be considered elsewhere). Because in spaces of the types $[\textrm{III,N}]^{n} \otimes [\textrm{O}]^{n}$ there are infinitely many $\mathcal{I}s$ it does not make any sense to list their properties. 
\begin{Twierdzenie} 
\label{theorem_vacuum_IIIxO}
Let $(\mathcal{M}, ds^{2})$ be a complex (neutral) space of the type $[\textrm{III}]^{n} \otimes [\textrm{O}]^{n}$ ($[\textrm{III}_{r}]^{n} \otimes [\textrm{O}_{r}]^{n}$). Then there exists a local coordinate system $(q,p,x,y)$ such that the metric takes the form
\begin{equation}
\label{ogolna_metryka_vacuum_IIIxO}
\frac{1}{2} ds^{2} =  dqdy-dpdx + (M x^{2} + P x +\Omega) \, dp^{2}  + (- M y^{2} + N y) \,  dq^{2}
\end{equation}
where $M=M(q,p)$, $P=P(q,p)$, $\Omega = \Omega (q,p)$ and $N=N(q,p)$ are arbitrary holomorphic (real smooth) functions such that $2M_{p} \, y -2M_{q} \, x -N_{p} - P_{q} \ne 0$.
\end{Twierdzenie}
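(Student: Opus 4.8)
The plan is to start from the general $[\textrm{deg}]^{n}\otimes[\textrm{D}]^{nn}$ metric (\ref{twierdzenie_metryka_Walker_2}) supplied by Theorem \ref{main_theorem_degn_x_degnn}, impose self-duality of the ASD Weyl spinor, and then combine separation of variables with the residual gauge (\ref{ograniczone_transformacje}). First I would read off the curvature from (\ref{deg_n_x_D_nn_allformulas}). Because $\mathcal{A}=\mathcal{A}(q,p,x)$, $\mathcal{B}=\mathcal{B}(q,p,y)$ and $\mathcal{Q}=0$, the general formula $C_{\dot{A}\dot{B}\dot{C}\dot{D}}=-\partial_{(\dot{A}}\partial_{\dot{B}}Q_{\dot{C}\dot{D})}$ of (\ref{krzywizna}) leaves only one potentially non-vanishing component, $C_{\dot{1}\dot{1}\dot{2}\dot{2}}=-\tfrac{1}{6}(\mathcal{A}_{xx}+\mathcal{B}_{yy})$; the remaining components $C_{\dot{1}\dot{1}\dot{1}\dot{1}}$, $C_{\dot{1}\dot{1}\dot{1}\dot{2}}$, $C_{\dot{1}\dot{2}\dot{2}\dot{2}}$, $C_{\dot{2}\dot{2}\dot{2}\dot{2}}$ involve $\mathcal{B}_{xx}$, $\mathcal{B}_{xy}$, $\mathcal{A}_{xy}$, $\mathcal{A}_{yy}$ respectively and vanish identically. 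Hence the type $[\textrm{O}]^{n}$ condition $C_{\dot{A}\dot{B}\dot{C}\dot{D}}=0$ collapses to the single equation $\mathcal{A}_{xx}+\mathcal{B}_{yy}=0$, which by (\ref{deg_n_x_D_nn_allformulas}) simultaneously forces $C^{(3)}=R/6=0$ (justifying the $[\textrm{O}]^{n}$ label, i.e. vanishing scalar curvature).

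Next I would separate variables. Since $\mathcal{A}_{xx}$ depends only on $(q,p,x)$ while $-\mathcal{B}_{yy}$ depends only on $(q,p,y)$, the identity $\mathcal{A}_{xx}=-\mathcal{B}_{yy}$ forces both sides to equal a common function $2M(q,p)$. Integrating twice in $x$ and in $y$ then yields $\mathcal{A}=Mx^{2}+Px+\Omega$ and $\mathcal{B}=-My^{2}+Ny+L$, where $P,\Omega,N,L$ are the integration ``constants'', i.e. arbitrary functions of $(q,p)$.

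The technical heart is to eliminate the surplus function $L$ using the gauge freedom (\ref{ograniczone_transformacje})--(\ref{transformacje_na_funkcje_AB}). Taking $\mu=\nu=1$ and a pure shift generated by $\sigma=\sigma(q,p)$, the part $dqdy-dpdx$ contributes $\sigma_{pp}\,dp^{2}-\sigma_{qq}\,dq^{2}$, while $x\mapsto x-\sigma_{p}$, $y\mapsto y-\sigma_{q}$ redistribute the quadratic and linear coefficients. Tracking the $y$-independent part of $\mathcal{B}$ shows it transforms as $L\mapsto L-\sigma_{qq}-N\sigma_{q}-M\sigma_{q}^{2}$, so I would choose $\sigma$ solving the second-order ODE $\sigma_{qq}+N\sigma_{q}+M\sigma_{q}^{2}=L$ (a Riccati equation for $\sigma_{q}$, locally solvable for each fixed $p$) in order to achieve $L=0$. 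The functions $M$, $P$, $N$ and $\Omega$ are merely relabelled and remain arbitrary, $\Omega$ being retained in $\mathcal{A}$; substituting back into (\ref{twierdzenie_metryka_Walker_2}) reproduces exactly (\ref{ogolna_metryka_vacuum_IIIxO}). I expect this gauge step to be the main obstacle, since one must check that precisely one constant term can be removed while the metric form is preserved and all remaining functions stay free.

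Finally I would fix the algebraic type. With $C^{(3)}=0$ the SD Weyl spinor is at most of type [III], and (\ref{deg_n_x_D_nn_allformulas}) gives $C^{(2)}=-\mathcal{A}_{qx}-\mathcal{B}_{py}=2M_{p}y-2M_{q}x-N_{p}-P_{q}$. The SD type is [III] precisely when $C^{(2)}\neq0$, which is the stated non-degeneracy condition $2M_{p}y-2M_{q}x-N_{p}-P_{q}\neq0$; the complementary case $C^{(2)}=0$ produces type [N] and is treated separately in Theorem \ref{theorem_vacuum_NxO}.
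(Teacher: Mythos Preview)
Your proof is correct and follows essentially the same route as the paper: start from the $[\textrm{deg}]^{n}\otimes[\textrm{D}]^{nn}$ metric (\ref{twierdzenie_metryka_Walker_2}), observe that the only surviving ASD Weyl component is $C_{\dot{1}\dot{1}\dot{2}\dot{2}}$, solve $\mathcal{A}_{xx}+\mathcal{B}_{yy}=0$ by separation of variables, gauge away the constant term in $\mathcal{B}$ via (\ref{transformacje_na_funkcje_AB}), and read off $C^{(2)}$. The paper's proof is terser on the gauge step, simply stating that the extra function $S$ (your $L$) ``can be gauged away without any loss of generality (compare (\ref{transformacje_na_funkcje_AB}))'', whereas you make explicit that this amounts to solving a Riccati equation for $\sigma_{q}$; your added detail is sound and the relabelling of $M,N,P,\Omega$ after the shift is exactly as you describe.
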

\begin{proof}
The ASD Weyl spinor of the metric (\ref{twierdzenie_metryka_Walker_2}) vanishes iff $C_{\dot{1}\dot{1}\dot{2}\dot{2}}=0$ what implies 
\begin{equation}
\label{solution_of_A_and_B_for_SD}
\mathcal{A} = M x^{2} + P x +\Omega , \ \mathcal{B} = - M y^{2} + N y + S
\end{equation}
where $M$, $P$, $\Omega$, $N$ and $S$ are arbitrary functions of variables $(q,p)$. Function $S$ can be gauged away without any loss of generality (compare (\ref{transformacje_na_funkcje_AB})). Let us collect all the formulas for the SD curvature and traceless Ricci tensor. Inserting (\ref{solution_of_A_and_B_for_SD}) into (\ref{deg_n_x_D_nn_allformulas}) one obtains
\begin{eqnarray}
\label{krzywizna_i_Ricci_foor_SD}
 C^{(2)} &=& 2M_{p} \, y -2M_{q} \, x -N_{p} - P_{q}
\\ \nonumber
 \frac{1}{2} C^{(1)} &=& xy( 2MM_{q}x-2MM_{p}y +2MN_{p}+2MP_{q}) + y^{2} (M_{pp} - PM_{p}) 
\\ \nonumber
&&- x^{2} (M_{qq}+NM_{q})+ y(- N_{pp} +P N_{p}+ 2M \Omega_{q} )
\\ \nonumber
&&   - x(P_{qq} + NP_{q}) -N\Omega_{q} -\Omega_{qq}
\\ \nonumber
 C_{12\dot{1}\dot{2}} &=& M
\\ \nonumber
 C_{22\dot{1}\dot{2}} &=& M_{p} \, y +M_{q} \, x - \frac{1}{2}N_{p} + \frac{1}{2}P_{q}
\end{eqnarray}
Condition $C^{(2)} \ne 0$ implies $2M_{p} \, y -2M_{q} \, x -N_{p} - P_{q} \ne 0$.
\end{proof}

\begin{Twierdzenie} 
\label{theorem_vacuum_NxO}
Let $(\mathcal{M}, ds^{2})$ be a complex (neutral) space of the type $[\textrm{N}]^{n} \otimes [\textrm{O}]^{n}$ ($[\textrm{N}_{r}]^{n} \otimes [\textrm{O}_{r}]^{n}$). Then there exists a local coordinate system $(q,p,x,y)$ such that the metric takes the form
\begin{equation}
\label{ogolna_metryka_vacuum_NxO}
\frac{1}{2} ds^{2} =  dqdy-dpdx + (M_{0} x^{2} + \Sigma_{p} \, x +\Omega) \, dp^{2}  - ( M_{0} y^{2} +\Sigma_{q} \, y) \,  dq^{2}
\end{equation}
where $M_{0}$ is a constant, $\Omega = \Omega (q,p)$ and $\Sigma = \Sigma (q,p)$ are arbitrary holomorphic (real smooth) functions such that $y(\Sigma_{qpp} - \Sigma_{p} \Sigma_{qp}+ 2M_{0} \Omega_{q} )
 - x(\Sigma_{pqq} -\Sigma_{q}\Sigma_{pq}) + \Sigma_{q} \Omega_{q} -\Omega_{qq} \ne 0$.
\end{Twierdzenie}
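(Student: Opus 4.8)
The plan is to obtain this result as the type-[N] specialization of Theorem \ref{theorem_vacuum_IIIxO}, exactly as that theorem was itself a specialization of the general two-sided Walker metric (\ref{twierdzenie_metryka_Walker_2}). Theorem \ref{theorem_vacuum_IIIxO} already reduced (\ref{twierdzenie_metryka_Walker_2}), under the self-duality condition $C_{\dot{1}\dot{1}\dot{2}\dot{2}}=0$, to the metric (\ref{ogolna_metryka_vacuum_IIIxO}) with $\mathcal{A}$ and $\mathcal{B}$ given by (\ref{solution_of_A_and_B_for_SD}) (the function $S$ having been gauged away) and with the SD curvature coefficients listed in (\ref{krzywizna_i_Ricci_foor_SD}). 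Since that reduction forces $C^{(3)}=0$, the SD Weyl spinor is automatically of one of the types [III], [N] or [O]. By the algebraic criteria (\ref{warunki_na_typy}), the type [N] is singled out precisely by the two conditions $C^{(2)}=0$ and $C^{(1)}\ne 0$, so the entire proof amounts to imposing $C^{(2)}=0$ on the data $M,P,\Omega,N$ and reading off the surviving freedom.

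First I would impose $C^{(2)}=0$. From (\ref{krzywizna_i_Ricci_foor_SD}) one has $C^{(2)} = 2M_{p}\,y - 2M_{q}\,x - N_{p} - P_{q}$, where $M,P,N$ depend on $(q,p)$ only. Requiring this to vanish identically in the fibre coordinates $x$ and $y$ forces the coefficients of $x$, of $y$, and the $(x,y)$-independent part to vanish separately, giving $M_{q}=0$, $M_{p}=0$ and $N_{p}+P_{q}=0$. Hence $M=M_{0}$ is a constant, and $P,N$ are constrained only by the single divergence relation $N_{p}+P_{q}=0$.

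The one nonmechanical step is to resolve this divergence constraint by a scalar potential. Since $N_{p}+P_{q}=0$ is the vanishing divergence of $(P,N)$ in the variables $(q,p)$, its general solution is $P=\Sigma_{p}$, $N=-\Sigma_{q}$ for an arbitrary $\Sigma=\Sigma(q,p)$; indeed $N_{p}+P_{q}=-\Sigma_{qp}+\Sigma_{pq}=0$ identically. Substituting $M=M_{0}$, $P=\Sigma_{p}$, $N=-\Sigma_{q}$ (with $S=0$) into (\ref{solution_of_A_and_B_for_SD}) turns $\mathcal{A}$ and $\mathcal{B}$ into $M_{0}x^{2}+\Sigma_{p}x+\Omega$ and $-(M_{0}y^{2}+\Sigma_{q}y)$, which is exactly the metric (\ref{ogolna_metryka_vacuum_NxO}).

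It then remains only to verify the genuineness condition $C^{(1)}\ne 0$. Inserting $M=M_{0}$ (so all derivatives of $M$ vanish), $P=\Sigma_{p}$ and $N=-\Sigma_{q}$ into the expression for $C^{(1)}$ in (\ref{krzywizna_i_Ricci_foor_SD}), the pure $x^{2}$ and $y^{2}$ terms drop out because $M$ is constant, and the $xy$ term drops out because it is proportional to $N_{p}+P_{q}=0$; what survives is the linear-in-$(x,y)$ expression whose non-vanishing is precisely the stated inequality $y(\Sigma_{qpp}-\Sigma_{p}\Sigma_{qp}+2M_{0}\Omega_{q}) - x(\Sigma_{pqq}-\Sigma_{q}\Sigma_{pq}) + \Sigma_{q}\Omega_{q} - \Omega_{qq}\ne 0$. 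The only real labour is this final simplification, which is routine bookkeeping; the substantive idea is the potential representation of $(P,N)$.
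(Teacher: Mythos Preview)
Your proof is correct and follows essentially the same approach as the paper: impose $C^{(2)}=0$ on (\ref{krzywizna_i_Ricci_foor_SD}) to obtain $M=M_{0}=\textrm{const}$ and $N_{p}+P_{q}=0$, introduce the potential $\Sigma$ via $P=\Sigma_{p}$, $N=-\Sigma_{q}$, and then record the residual condition $C^{(1)}\ne 0$. Your write-up is somewhat more explicit about why the quadratic-in-$(x,y)$ terms of $C^{(1)}$ drop out, but the logic is identical.
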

\begin{proof}
Condition $C^{(2)}=0$ implies $M=M_{0}=\textrm{const}$ and $N_{p} + P_{q} = 0$. Thus, there is a function $\Sigma$ such that $N=-\Sigma_{q}$ and $P=\Sigma_{p}$. Formulas (\ref{krzywizna_i_Ricci_foor_SD}) simplify to the form
\begin{eqnarray}
\label{krzywizna_i_Ricci_foor_SD_typ_N}
 \frac{1}{2} C^{(1)} &=&  y(\Sigma_{qpp} - \Sigma_{p} \Sigma_{qp}+ 2M_{0} \Omega_{q} )
 - x(\Sigma_{pqq} -\Sigma_{q}\Sigma_{pq}) + \Sigma_{q} \Omega_{q} -\Omega_{qq}
\\ \nonumber
 C_{12\dot{1}\dot{2}} &=& M_{0}
\\ \nonumber
 C_{22\dot{1}\dot{2}} &=&   \Sigma_{qp} 
\end{eqnarray}
Condition $C^{(1)} \ne 0$ implies $y(\Sigma_{qpp} - \Sigma_{p} \Sigma_{qp}+ 2M_{0} \Omega_{q} )
 - x(\Sigma_{pqq} -\Sigma_{q}\Sigma_{pq}) + \Sigma_{q} \Omega_{q} -\Omega_{qq} \ne 0$.
\end{proof}

\textbf{Remark}. Spaces of the type $ [\textrm{N}]^{n} \otimes [\textrm{O}]^{n}$ are two-sided conformally recurrent. Such spaces have been considered by Plebański and Przanowski in \cite{Plebanski_Przanowski_rec}. Plebański and Przanowski listed three classes of such solutions which depend on five arbitrary functions of two variables each. Our result shows that all these metrics can be reduced to the metric (\ref{ogolna_metryka_vacuum_NxO}) with two functions of two variables and one constant. Consequently, our approach is a serious improvement of results published in \cite{Plebanski_Przanowski_rec}.

\subsubsection{Type $[\textrm{III}]^{ne} \otimes [\textrm{O}]^{n}$}
\label{subsekcja_III_ne_nanic}

The only additional structure with which a space of the type $[\textrm{III}]^{n} \otimes [\textrm{O}]^{n}$ can be equipped is one more SD $\mathcal{C}$. This second congruence must be $\mathcal{C}^{e}$. With such a congruence an algebraic reduction to the type $[\textrm{N}]^{n} \otimes [\textrm{O}]^{n}$ is not possible anymore. Thus, we assume the existence of the function $n$ such that it has the form (\ref{n_na_typ_III}) with $C^{(1)}$ and $C^{(2)}$ given by (\ref{krzywizna_i_Ricci_foor_SD}). Inserting $n$ into Eqs. (\ref{equations_for_the_second_SD_congruence}) one finds that $M=M_{0}=\textrm{const}$ and
\begin{eqnarray}
\label{uklad_rownan_na_typ_IIIxnic_zdrugastruna}
&& 2(ba_{q} - ab_{q}) - 4a^{2} N_{p} - 2M_{0} af + cb=0
\\ \nonumber
&& 2(ac_{q}-ca_{q}) + 2Nac - c^{2} =0
\\ \nonumber
&& 2(af_{q} - fa_{q}) + 2Naf-fc=0
\\ \nonumber
&& 2(ba_{p} - ab_{p}) - 4M_{0} a^{2} \Omega + 2P ba - b^{2} = 0
\\ \nonumber
&& 2(ac_{p} - ca_{p}) + 4a^{2} P_{q} - 2 M_{0} af + cb =0
\\ \nonumber
&& 2(a f_{p} - f a_{p}) + 4a^{2} \Omega_{q} + 2 \Omega ac - 2P fa +fb = 0
\end{eqnarray}
where we denoted
\begin{equation}
a:= N_{p} + P_{q}, \ b := P N_{p} - N_{pp} + 2M_{0} \Omega_{q}, \ c:= NP_{q} + P_{qq}, \ f:= \Omega_{qq} + N \Omega_{q}
\end{equation}
(\ref{uklad_rownan_na_typ_IIIxnic_zdrugastruna}) is a system of six equations for three functions $N$, $P$ and $\Omega$ of two variables $(p,q)$. This system is surprisingly complicated. We were able to find only three different special solutions
\begin{eqnarray}
\label{jawne_przyklady_rozwiazan}
(i) && N=0, \ M_{0} \textrm{ is arbitrary}, \ P = \frac{4}{4p-q}, \ \Omega = \xi P - \xi_{p} - M_{0} \xi^{2}, \ \xi=\xi (p) \ \ \ \ 
\\ \nonumber
(ii) && N=M_{0} = \Omega=0, \ P=\frac{4}{4p-q} + \xi, \ \xi=\xi(p)
\\ \nonumber
(iii) && P=\Omega=0, \ M_{0} \textrm{ is arbitrary}, \ N = \frac{\xi_{q}}{p- \xi}, \ \xi=\xi(q), \ \xi_{q} \ne 0
\end{eqnarray}
where $\xi$ is an arbitrary function of its variable. Hence, the metric (\ref{ogolna_metryka_vacuum_IIIxO}) with (\ref{jawne_przyklady_rozwiazan}) is an example of the type $[\textrm{III}]^{ne} \otimes [\textrm{O}]^{n}$.

% &&&&&&&&&&&&&&&&&&&&&&&&&&&&&&&&&&&&&&&&&&&&&&&&&&&&&&&&&&&&&&&&&&&&&&&&&&&&&&&&&&&&&&&&

\section{Para-Kähler Einstein spaces}
\setcounter{equation}{0}
\label{sekcja_rozwiazania_Einsteinowskie}

\subsection{General case}

From Theorem \ref{main_theorem_degn_x_degnn} and from the formulas (\ref{deg_n_x_D_nn_allformulas}) one easily obtains Einstein spaces of the types $[\textrm{deg}]^{n} \otimes [\textrm{D}]^{nn}$. With $C_{12\dot{1}\dot{2}}=C_{22\dot{1}\dot{2}}=0$ and $R=-4 \Lambda$ one arrives at the solution
\begin{equation}
\label{general_vacuum_solution}
\mathcal{A} = \frac{\Lambda}{2} x^{2} + \Phi_{p} \, x + \Omega, \ 
\mathcal{B} = \frac{\Lambda}{2} y^{2} + \Phi_{q} \, y + \Sigma
\end{equation} 
where $\Phi (q,p)$, $\Sigma (q,p)$ and $\Omega (q,p)$ are arbitrary functions of their variables. Hence, the general metric of the type $[\textrm{deg}]^{n} \otimes [\textrm{D}]^{nn}$ Einstein spaces can be brought to the form 
\begin{equation}
\label{twierdzenie_metryka_typ_IIIn_x_nic_Einstein}
\frac{1}{2} ds^{2} =  dqdy-dpdx + \left( \frac{\Lambda}{2} x^{2} + \Phi_{p} \, x + \Omega \right) dp^{2}  + \left( \frac{\Lambda}{2} y^{2} + \Phi_{q} \, y + \Sigma \right)  dq^{2}
\end{equation}
The SD conformal curvature coefficients read
\begin{eqnarray}
\label{EEEinstein_spaces_ogolnakrzywizna}
&& C^{(3)} = 2 C_{\dot{1}\dot{1}\dot{2}\dot{2}} = -\frac{2}{3} \Lambda, \ C^{(2)} = -2 \Phi_{pq}
\\ \nonumber
&& \frac{1}{2} C^{(1)} = (\Phi_{p}\Phi_{pq} - \Phi_{qpp} - \Lambda \Omega_{q} ) y - (\Phi_{q}\Phi_{pq} + \Phi_{pqq} - \Lambda \Sigma_{p} ) x - \Phi_{q} \Omega_{q} + \Phi_{p} \Sigma_{p} - \Omega_{qq} - \Sigma_{pp}
\end{eqnarray}

[\textbf{Remark}. The solution (\ref{general_vacuum_solution}) corresponds to the $\mathcal{HH}$-space generated by the key function
\begin{equation}
\label{funkcja_kluczowa}
\Theta = - \frac{\Lambda}{12} x^{2} y^{2} - \frac{1}{6} \Phi_{p} \, xy^{2} - \frac{1}{6} \Phi_{q} \, y x^{2} - \frac{1}{2} \Sigma x^{2} - \frac{1}{2} \Omega y^{2} + \alpha_{\dot{A}} p^{\dot{A}} + \beta
\end{equation}
where $\alpha_{\dot{A}}$ and $\beta$ are arbitrary functions of $(q,p)$. Also,
\begin{equation}
\label{funkcje_strukturalne_nieekspandujace_hiperniebo}
F^{\dot{1}} = \Phi_{p}, \ F^{\dot{2}} = \Phi_{q}, \ N_{\dot{1}} = \Sigma_{p}, \ N_{\dot{2}} = - \Omega_{q}, \ \gamma = -\Lambda \beta + \Sigma \Omega - \frac{\partial \alpha^{\dot{M}}}{\partial q^{\dot{M}}} + \alpha_{\dot{M}} F^{\dot{M}}
\end{equation}
The \textsl{structural functions} $F^{\dot{A}}$, $N^{\dot{A}}$ and $\gamma$ are well-known in $\mathcal{HH}$-spaces theory \cite{Chudecki_Killingi_2}. The form of the key function (\ref{funkcja_kluczowa}) is used in Sections \ref{subsekcja_typyD_II_symetrie} and \ref{subsekcja_typyIII_N_symetrie} in which para-Kähler Einstein spaces with symmetries are considered.]

Under (\ref{gauge}) functions $\Phi$, $\Sigma$ and $\Omega$ transform as follows
\begin{eqnarray}
\label{transformacje_na_przypadekEinsteinowski}
\Phi' &=& \Phi - \Lambda \sigma + \ln \frac{\mu}{\nu} + \Phi_{0}
\\ \nonumber
\mu^{2} \, \Sigma' &=& \Sigma -\Phi'_{q} \sigma_{q} - \frac{\Lambda}{2} \sigma_{q}^{2} - \mu \frac{\partial}{\partial{q}} \left( \frac{\sigma_{q}}{\mu}   \right)
\\ \nonumber
\nu^{2} \, \Omega' &=& \Omega -\Phi'_{p} \sigma_{p} - \frac{\Lambda}{2} \sigma_{p}^{2} + \nu \frac{\partial}{\partial{p}} \left( \frac{\sigma_{p}}{\nu}   \right)
\end{eqnarray}
where $\sigma=\sigma(q,p)$, $\mu=\mu (q)$ and $\nu = \nu (p)$ are arbitrary gauge functions and $\Phi_{0}$ is an arbitrary gauge constant. Hence, $\Phi$, $\Sigma$ or $\Omega$ can be gauged away but for different Petrov-Penrose types different choices are optimal.

\subsection{Types $ [\textrm{II}]^{n} \otimes [\textrm{D}]^{nn}$ and $ [\textrm{D}]^{nn} \otimes [\textrm{D}]^{nn}$}

\subsubsection{General case}
\label{typy_IInxDnn_alesubsekcja}

\begin{Twierdzenie} 
\label{theorem_vacuum_IIxD}
Let $(\mathcal{M}, ds^{2})$ be complex (neutral) Einstein space of the type $ [\textrm{II}]^{n} \otimes [\textrm{D}]^{nn}$ or $ [\textrm{D}]^{nn} \otimes [\textrm{D}]^{nn}$ ($ [\textrm{II}_{r}]^{n} \otimes [\textrm{D}_{r}]^{nn}$, $ [\textrm{II}_{rc}]^{n} \otimes [\textrm{D}_{r}]^{nn}$ or $ [\textrm{D}_{r}]^{nn} \otimes [\textrm{D}_{r}]^{nn}$). Then there exists a local coordinate system $(q,p,x,y)$ such that the metric takes the form
\begin{equation}
\label{ogolna_metryka_vacuum_IIxD}
\frac{1}{2} ds^{2} =  dqdy-dpdx + \left( \frac{\Lambda}{2} x^{2}  + \Omega \right) dp^{2}  + \left( \frac{\Lambda}{2} y^{2}  + \Sigma \right)  dq^{2}
\end{equation}
where $\Omega = \Omega (q,p)$ and $\Sigma = \Sigma (q,p)$ are arbitrary holomorphic (real smooth) functions such that
\begin{eqnarray}
\nonumber
&& \textrm{for the type }  [\textrm{II}]^{n} \otimes [\textrm{D}]^{nn} \ ([\textrm{II}_{r}]^{n} \otimes [\textrm{D}_{r}]^{nn}, \ [\textrm{II}_{rc}]^{n} \otimes [\textrm{D}_{r}]^{nn}): |\Sigma_{p}| + |\Omega_{q}| \ne 0;
\\ \nonumber
&& \textrm{for the type } [\textrm{D}]^{nn} \otimes [\textrm{D}]^{nn} \ ([\textrm{D}_{r}]^{nn} \otimes [\textrm{D}_{r}]^{nn}): \Sigma=\Omega=0. 
\end{eqnarray}
\end{Twierdzenie}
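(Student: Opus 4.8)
The plan is to begin from the general type $[\textrm{deg}]^{n} \otimes [\textrm{D}]^{nn}$ Einstein metric (\ref{twierdzenie_metryka_typ_IIIn_x_nic_Einstein}), whose SD curvature coefficients are recorded in (\ref{EEEinstein_spaces_ogolnakrzywizna}), and to exhaust the residual coordinate gauge (\ref{transformacje_na_przypadekEinsteinowski}) so as to normalise the structural function $\Phi$. The decisive preliminary remark is that both target types force $C^{(3)} = -\tfrac{2}{3}\Lambda \neq 0$ by the algebraic criteria (\ref{warunki_na_typy}); hence $\Lambda \neq 0$ throughout, and it is exactly this that renders $\Phi$ removable.

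First I would eliminate $\Phi$. Since $\Lambda \neq 0$, the choice $\sigma = \Phi/\Lambda$ (with $\mu = \nu = 1$ and $\Phi_{0} = 0$) in the first line of (\ref{transformacje_na_przypadekEinsteinowski}) gives $\Phi' = 0$, while the remaining two lines show that $\Sigma$ and $\Omega$ only acquire additive corrections built from $\Phi$ and thus stay arbitrary functions of $(q,p)$. Renaming them, the metric (\ref{twierdzenie_metryka_typ_IIIn_x_nic_Einstein}) collapses to the asserted form (\ref{ogolna_metryka_vacuum_IIxD}). Putting $\Phi = 0$ in (\ref{EEEinstein_spaces_ogolnakrzywizna}) yields $C^{(2)} = 0$ and reduces the last coefficient to $\tfrac{1}{2}C^{(1)} = -\Lambda\,\Omega_{q}\,y + \Lambda\,\Sigma_{p}\,x - \Omega_{qq} - \Sigma_{pp}$, which is linear in $x$ and $y$.

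Next I would read off the Petrov-Penrose type from the discriminant $\delta = 2C^{(2)}C^{(2)} - 3C^{(3)}C^{(1)}$ introduced in (\ref{rozwiazanie_n_pm}). With $C^{(2)} = 0$ this becomes $\delta = 2\Lambda\,C^{(1)}$, so $\delta$ and $C^{(1)}$ vanish simultaneously. The space is of type $[\textrm{II}]$ exactly when $\delta \neq 0$, i.e. $C^{(1)} \neq 0$; reading the reduced $C^{(1)}$ this is equivalent to $|\Sigma_{p}| + |\Omega_{q}| \neq 0$, the stated condition (in the neutral case the sign of $\delta = 2\Lambda C^{(1)}$ merely separates $[\textrm{II}_{r}]$ from $[\textrm{II}_{rc}]$, leaving the metric form untouched). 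The space is of type $[\textrm{D}]$ exactly when $\delta = 0$, i.e. $C^{(1)} \equiv 0$; since $\Lambda \neq 0$ the vanishing of the coefficients of $x$ and $y$ forces $\Omega_{q} = \Sigma_{p} = 0$, so that $\Omega = \Omega(p)$ and $\Sigma = \Sigma(q)$ and the constant term then drops out automatically. Moreover, in this Einstein setting a space carrying SD $\mathcal{C}^{ne}$ would have vanishing SD Weyl spinor, so the second SD congruence is necessarily nonexpanding, which justifies the label $[\textrm{D}]^{nn}$.

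It remains, in the type $[\textrm{D}]$ case, to gauge the surviving $\Omega(p)$ and $\Sigma(q)$ away, and this is where the main work lies. Here I would invoke the residual freedom: keep $\mu(q)$ and $\nu(p)$ nontrivial but fix $\sigma = \Lambda^{-1}(\ln\mu - \ln\nu + \Phi_{0})$ so that $\Phi' = 0$ is preserved. Substituting the resulting $\sigma_{q} = \mu_{q}/(\Lambda\mu)$ and $\sigma_{p} = -\nu_{p}/(\Lambda\nu)$ into the last two lines of (\ref{transformacje_na_przypadekEinsteinowski}), one finds that $\Sigma'$ depends on $\mu$ alone and $\Omega'$ on $\nu$ alone, so the two normalisations decouple. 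Imposing $\Sigma' = \Omega' = 0$ produces one second-order ODE for $\mu(q)$ and one for $\nu(p)$; after a substitution such as $\nu = u^{-2}$ each linearises (to $u_{pp} + \tfrac{\Lambda}{2}\Omega\,u = 0$ and its $q$-counterpart), so local nonvanishing solutions exist and $\Sigma = \Omega = 0$ can be reached. The delicate point throughout is checking that these residual transformations genuinely decouple and remain compatible with the earlier normalisation $\Phi = 0$.
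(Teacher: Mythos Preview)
Your proof is correct and follows essentially the same strategy as the paper: use $\Lambda\neq 0$ to gauge $\Phi$ to zero via $\sigma$, read off the type from $C^{(1)}$, and in the type-$[\textrm{D}]$ case spend the remaining $\mu(q),\nu(p)$ freedom to kill $\Sigma(q)$ and $\Omega(p)$. The paper's proof merely asserts this last step, whereas you actually carry out the computation and linearise the resulting ODEs via $\mu=u^{-2}$, which is a genuine improvement in detail but not a different approach.
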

\begin{proof}
The cosmological constant $\Lambda$ is necessarily nonzero so $\Phi$ can be gauged away (compare (\ref{transformacje_na_przypadekEinsteinowski})). Additionally, type-[D] condition (\ref{warunek_na_typ_D}) implies $C^{(1)}=0 \ \Longleftrightarrow \ \Sigma_{p}=\Omega_{q}=0$. Hence, $\Sigma=\Sigma (q)$ and $\Omega = \Omega (p)$. Thus, using gauge functions $\mu(q)$ and $\nu (p)$ one can always put $\Sigma=\Omega=0$. 
\end{proof}

\textbf{Remark}. Consider the metric (\ref{ogolna_metryka_vacuum_IIxD}) of the type $ [\textrm{D}]^{nn} \otimes [\textrm{D}]^{nn}$. Performing the coordinate transformation
\begin{equation}
\frac{1}{y} \ \rightarrow \ \frac{1}{y} + \frac{\Lambda}{2} \, q , \ \frac{1}{x} \ \rightarrow \ - \frac{1}{x} - \frac{\Lambda}{2} \, p
\end{equation}
one arrives at the well-known form 
\begin{equation}
\label{metryka_DnnxDnn_Einstein}
ds^{2} = \frac{2dxdp}{\left( 1 + \dfrac{\Lambda}{2} xp \right)^{2}} + \frac{2dydq}{\left( 1 + \dfrac{\Lambda}{2} yq \right)^{2}} 
\end{equation}
There are two different classes of homogeneous pKE spaces. A neutral slice of the metric (\ref{metryka_DnnxDnn_Einstein}) corresponds to one of them \cite{Bor_Makhmali_Nurowski}. The second class is so-called \textsl{dancing metric} but the dancing metric is of the type $ [\textrm{O}_{r}]^{e} \otimes [\textrm{D}_{r}]^{nn}$ and it cannot be obtained from the metric (\ref{ogolna_metryka_vacuum_IIxD}).

\subsubsection{Types $ [\textrm{II}]^{n} \otimes [\textrm{D}]^{nn}$ and $ [\textrm{D}]^{nn} \otimes [\textrm{D}]^{nn}$ with symmetries}
\label{subsekcja_typyD_II_symetrie}

Symmetries\footnote{We use the following terminology: if a vector $K$ satisfies Eqs. $\nabla_{(a} K_{b)} = \chi_{0} g_{ab}$ then $K$ is called \textsl{a homothetic vector}; if $\chi_{0} \ne 0$ then $K$ is called \textsl{a proper homothetic vector}; if $\chi_{0}=0$ the $K$ is called \textsl{a Killing vector}.} in nonexpanding $\mathcal{HH}$-spaces were analyzed in \cite{Plebanski_Finley_Killingi,Chudecki_Killingi_2}. It was proven that Eqs. $\nabla_{(a} K_{b)} = \chi_{0} g_{ab}$ in nonexpanding $\mathcal{HH}$-spaces can be reduced to a single, first order, partial, linear differential equation called \textsl{the master equation} for the key function $\Theta$. Feeding the master equation with (\ref{funkcja_kluczowa}) and (\ref{funkcje_strukturalne_nieekspandujace_hiperniebo}) with $\Phi=0$ we find that any Killing vector admitted by the metric (\ref{ogolna_metryka_vacuum_IIxD}) has the form
\begin{equation}
K = \delta^{\dot{1}} \frac{\partial}{\partial q} + \delta^{\dot{2}} \frac{\partial}{\partial p} - \left( \frac{d \delta^{\dot{2}}}{dp} \, x + \frac{1}{\Lambda} \frac{d^{2} \delta^{\dot{2}}}{dp^{2}} \right) \frac{\partial}{\partial x} - \left(  \frac{d \delta^{\dot{1}}}{dq} \, y - \frac{1}{\Lambda} \frac{d^{2} \delta^{\dot{1}}}{dq^{2}} \right) \frac{\partial}{\partial y}
\end{equation}
where $\delta^{\dot{1}} = \delta^{\dot{1}} (q)$, $\delta^{\dot{2}} = \delta^{\dot{2}} (p)$. The vector $K$ cannot be null and it cannot be proper homothetic. The master equation reduces to the system of two differential equations
\begin{eqnarray}
\label{rownanie_master_wnnioski}
&& \delta^{\dot{1}} \Sigma_{q} + \delta^{\dot{2}} \Sigma_{p} + 2 \Sigma \, \frac{d \delta^{\dot{1}}}{dq} + \frac{1}{\Lambda} \frac{d^{3} \delta^{\dot{1}}}{dq^{3}} = 0
\\ \nonumber
&& \delta^{\dot{1}} \Omega_{q} + \delta^{\dot{2}} \Omega_{p} + 2 \Omega \, \frac{d \delta^{\dot{2}}}{dp} + \frac{1}{\Lambda} \frac{d^{3} \delta^{\dot{2}}}{dp^{3}} = 0
\end{eqnarray}
Detailed analysis of (\ref{rownanie_master_wnnioski}) is long and tedious. We skip all the details and present only final results. The metric (\ref{ogolna_metryka_vacuum_IIxD}) does not admit any Killing vector in general. If an existence of a single Killing vector is assumed it can be brought to the form $K_{1} = \partial_{q}$ or $K_{1} = \partial_{q} + \partial_{p}$ without any loss of generality. The maximal number of Killing vectors admitted by the metric (\ref{ogolna_metryka_vacuum_IIxD}) of the type $ [\textrm{II}]^{n} \otimes [\textrm{D}]^{nn}$ is 2. Type $ [\textrm{D}]^{nn} \otimes [\textrm{D}]^{nn}$ is equipped with 6 Killing vectors. The results are presented in the Table \ref{Killingi_1} (all quantities in the Table \ref{Killingi_1} with a subscript $0$ are constants).

\renewcommand{\arraystretch}{1.5}
\begin{table}[ht!]
%\begin{center}
%\begin{tabular}{|c|c|}   \hline
%\afterpage{
\begin{longtable}{|c|c|}   \hline
\textrm{Killing vectors}      & Functions in the metric  \\  \hline 
\multicolumn{2}{|c|}{Type $ [\textrm{II}]^{n} \otimes [\textrm{D}]^{nn}$} \\  \hline 
$K_{1} = \partial_{q}$ & $\Sigma=\Sigma(p)$, $\Sigma_{p} \ne 0$, $\Omega=0$ \\  \hline
$K_{1} = \partial_{q}$,   &  $\Omega = 0$, $\Sigma (p) = \exp \left(  \displaystyle \int \dfrac{-2 \, dp}{\gamma_{0} p^{2} + \xi_{0} p + \zeta_{0}}  \right)$  \\ 
$K_{2} = q \partial_{q} - y \partial_{y} + \zeta_{0} \partial_{p}+ \xi_{0} (p\partial_{p} -x \partial_{x}) $ & $|\gamma_{0}| + |\xi_{0}| + |\zeta_{0}| \ne 0$ \\
$+ \gamma_{0} (p^{2} \partial_{p} - 2(px+ \Lambda^{-1}) \partial{x})$  &  if $\gamma_{0} \ne 0$  then $\gamma_{0}=1$, $\xi_{0}=0$, $\zeta_{0}$ is arbitrary    \\ 
  &   if $\gamma_{0} = 0$  and  $\xi_{0} \ne 0$ then $\zeta_{0}=0$  \\ 
    &   if $\gamma_{0} = \xi_{0} = 0$ then $\zeta_{0}=1$  \\ \hline
  $K_{1} = \partial_{q} + \partial_{p}$ & $\Sigma=\Sigma(z)$, $\Omega=\Omega (z)$, $z:=q-p$, \\ 
    &   $|\Sigma_{z}| + |\Omega_{z}| \ne 0$ \\  \hline
   $K_{1} = \partial_{q} + \partial_{p}$, & $\Sigma(z) = \Sigma_{0} z^{-2}$, $\Omega (z) = \Omega_{0} z^{-2}$, $z:=q-p$,  \\
   $K_{2} = q \partial_{q} + p \partial_{p} -x \partial_{x} -y \partial_{y}$   &  $|\Sigma_{0}| + |\Omega_{0}| \ne 0$ \\
\hline
$K_{1} = \partial_{q} + \partial_{p}$,  &  $\Sigma(z) = \dfrac{\Sigma_{0}}{(1-e^{a_{0}z})^{2}} - \dfrac{a_{0}^{2}}{2 \Lambda}$,   \\
$K_{2} = e^{a_{0} q} \left(  \partial_{q} + (-a_{0}y + a_{0}^{2} \Lambda^{-1}) \partial_{y} \right)$   & $\Omega (z) = \dfrac{\Omega_{0}}{(1-e^{a_{0}z})^{2}} - \dfrac{a_{0}^{2}}{2 \Lambda}$,   \\
\ \ \ $+e^{a_{0} p}   \left(  \partial_{p} - (a_{0} x + a_{0}^{2} \Lambda^{-1}) \partial_{x}  \right)$  & $z:=q-p$, $a_{0} \ne 0$, $|\Sigma_{0}| + |\Omega_{0}| \ne 0$   \\ \hline
$K_{1} = \partial_{q} + \partial_{p}$,  &  $\Sigma(z) = \Sigma_{0} e^{-2a_{0}z}- \dfrac{a_{0}^{2}}{2 \Lambda}$, $z:=q-p$,   \\
$K_{2} = e^{a_{0} q} \left(  \partial_{q} + (-a_{0}y + a_{0}^{2} \Lambda^{-1}) \partial_{y} \right)$   & $\Omega  = \Omega_{0}$,  $a_{0} \ne 0$, $\Sigma_{0} \ne 0$ \\ \hline 
\multicolumn{2}{|c|}{Type $ [\textrm{D}]^{nn} \otimes [\textrm{D}]^{nn}$} \\  \hline
$K_{1} = \partial_{p}$, $K_{2} = \partial_{q}$, $K_{3} = q \partial_{q} - y \partial_{y}$, & $\Sigma = \Omega = 0$ \\
$K_{4} = p \partial_{p} - x \partial_{x}$,  & \\
 $K_{5} = q^{2} \partial_{q} - 2 (qy-\Lambda^{-1} ) \partial_{y}$, &  \\
$K_{6} = p^{2} \partial_{p} - 2 (px + \Lambda^{-1} ) \partial_{x}$ & \\ \hline
%\end{tabular}
%\end{center}
\caption{Killing vectors in spaces of the types $ [\textrm{II}]^{n} \otimes [\textrm{D}]^{nn}$ or $ [\textrm{D}]^{nn} \otimes [\textrm{D}]^{nn}$.}
\label{Killingi_1}
\end{longtable}
\end{table}
%}

\subsection{Types $ [\textrm{III,N}]^{n} \otimes [\textrm{O}]^{n}$}
\label{subsekcja_III_NXnic_Einstein}

\subsubsection{General case}
\label{subsekcja_III_NXnic_Einstein_alesubsekcja}

\begin{Twierdzenie} 
\label{ttwierdzenie_metryka_typ_III_Nn_x_nic_Einstein}
Let $(\mathcal{M}, ds^{2})$ be a complex (neutral) Einstein space of the type $ [\textrm{III,N}]^{n} \otimes [\textrm{O}]^{n}$ ($ [\textrm{III}_{r},\textrm{N}_{r}]^{n} \otimes [\textrm{O}_{r}]^{n}$). Then there exists a local coordinate system $(q,p,x,y)$ such that the metric takes the form
\begin{equation}
\label{twierdzenie_metryka_typ_III_Nn_x_nic_Einstein}
\frac{1}{2} ds^{2} =  dqdy-dpdx + \left(  \Phi_{p} \, x + \Omega \right) dp^{2}  + \Phi_{q} \, y \,   dq^{2}
\end{equation}
where $\Phi = \Phi (q,p)$ and $\Omega = \Omega (q,p)$ are arbitrary holomorphic (real smooth) functions such that
\begin{eqnarray}
\nonumber
&& \textrm{for the type }  [\textrm{III}]^{n} \otimes [\textrm{O}]^{n} \ ([\textrm{III}_{r}]^{n} \otimes [\textrm{O}_{r}]^{n}): \Phi_{pq} \ne 0, \Omega \textrm{ is arbitrary};
\\ \nonumber
&& \textrm{for the type } [\textrm{N}]^{n} \otimes [\textrm{O}]^{n} \ ([\textrm{N}_{r}]^{n} \otimes [\textrm{O}_{r}]^{n}): \Phi=0, \Omega_{qq} \ne 0. 
\end{eqnarray}
\end{Twierdzenie}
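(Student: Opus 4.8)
The plan is to start from the general para-K\"ahler Einstein metric (\ref{twierdzenie_metryka_typ_IIIn_x_nic_Einstein}), built on the solution (\ref{general_vacuum_solution}), and to impose that the ASD Weyl spinor vanish, i.e. the $[\textrm{O}]^{n}$ part of the type. First I would note that for the metric (\ref{twierdzenie_metryka_Walker_2}) the only nonvanishing component of the ASD Weyl spinor is $C_{\dot{1}\dot{1}\dot{2}\dot{2}}$, and that by (\ref{deg_n_x_D_nn_allformulas}) together with (\ref{EEEinstein_spaces_ogolnakrzywizna}) it obeys $2 C_{\dot{1}\dot{1}\dot{2}\dot{2}} = C^{(3)} = -\tfrac{2}{3}\Lambda$ in the Einstein case. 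Hence the requirement $C_{\dot{A}\dot{B}\dot{C}\dot{D}}=0$ forces $\Lambda=0$: an Einstein space of this type is necessarily Ricci-flat. Substituting $\Lambda=0$ into (\ref{general_vacuum_solution}) collapses the metric functions to $\mathcal{A}=\Phi_{p}\,x+\Omega$ and $\mathcal{B}=\Phi_{q}\,y+\Sigma$, and (\ref{EEEinstein_spaces_ogolnakrzywizna}) then gives $C^{(3)}=0$ and $C^{(2)}=-2\Phi_{pq}$, which already confirms that the SD Weyl spinor is at most of type III.

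Second I would remove the function $\Sigma$ using the residual gauge freedom. Putting $\Lambda=0$ in the transformation law (\ref{transformacje_na_przypadekEinsteinowski}) and demanding $\Sigma'=0$ yields, for $u:=\sigma_{q}$, the first-order linear equation $u_{q}+(\Phi'_{q}-\mu_{q}/\mu)\,u=\Sigma$, which for each fixed $p$ is an ordinary differential equation in $q$ and is therefore always solvable; integrating $u$ in $q$ produces an admissible gauge function $\sigma(q,p)$. This brings the metric to the canonical form (\ref{twierdzenie_metryka_typ_III_Nn_x_nic_Einstein}) with $\Omega$ surviving as a free function. For the type $[\textrm{III}]^{n}\otimes[\textrm{O}]^{n}$ the algebraic condition $C^{(2)}\neq 0$ translates at once into $\Phi_{pq}\neq 0$, with $\Omega$ arbitrary, which settles this case.

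Third I would treat the type $[\textrm{N}]^{n}\otimes[\textrm{O}]^{n}$, characterized by $C^{(2)}=0$, i.e. $\Phi_{pq}=0$, so that $\Phi=f(q)+g(p)$ splits additively. Such a $\Phi$ is removable by a pure rescaling: choosing $\ln\mu=-f(q)$ and $\ln\nu=g(p)+\Phi_{0}$ in (\ref{transformacje_na_przypadekEinsteinowski}) gives $\Phi'=0$, and since this transformation carries $\sigma=0$ it leaves the previously achieved $\Sigma=0$ intact while only rescaling $\Omega$. With $\Phi=\Sigma=0$ the coefficient in (\ref{EEEinstein_spaces_ogolnakrzywizna}) reduces to $\tfrac{1}{2}C^{(1)}=-\Omega_{qq}$, so the type-N condition $C^{(1)}\neq 0$ is exactly $\Omega_{qq}\neq 0$, as asserted.

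The main obstacle I anticipate is the gauge-theoretic bookkeeping in the last two steps: one has to verify that the function $\sigma$ used to eliminate $\Sigma$ and the rescalings $\mu,\nu$ used to eliminate $\Phi$ in the type-N case can be applied consistently, without one undoing the other, and that the non-degeneracy conditions $\Phi_{pq}\neq 0$ and $\Omega_{qq}\neq 0$ are gauge-invariant so that they genuinely separate the two Petrov--Penrose types. The identification $\Lambda=0$, though conceptually the heart of the argument, is immediate once the single surviving ASD component is recognized.
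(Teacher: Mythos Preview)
Your proposal is correct and follows essentially the same route as the paper's proof: set $\Lambda=0$ from $C_{\dot{1}\dot{1}\dot{2}\dot{2}}=0$, gauge away $\Sigma$ via $\sigma$, read off $\Phi_{pq}\neq 0$ for type~[III], and in the type~[N] case split $\Phi=f(q)+g(p)$ and absorb it into $\mu,\nu$ with $\sigma=0$, leaving $\Omega_{qq}\neq 0$ as the nondegeneracy condition. Your extra bookkeeping on the compatibility of the two gauge steps is sound; note in passing that since $\Phi'_{q}=\Phi_{q}+\mu_{q}/\mu$ when $\Lambda=0$, your ODE for $u=\sigma_{q}$ actually simplifies to $u_{q}+\Phi_{q}\,u=\Sigma$, independent of $\mu$, which makes the decoupling of the two steps even cleaner than you indicate.
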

\begin{proof}
If cosmological constant $\Lambda =0$ the ASD Weyl spinor vanishes and a space is of types $[\textrm{III}]^{n} \otimes [\textrm{O}]^{n}$ or $[\textrm{N}]^{n} \otimes [\textrm{O}]^{n}$ (compare (\ref{EEEinstein_spaces_ogolnakrzywizna})). For the type $[\textrm{III}]^{n} \otimes [\textrm{O}]^{n}$, $\Phi$ cannot be gauged away anymore but $\Sigma$ or $\Omega$ can. Let $\Sigma = 0$ what remains valid for both types $[\textrm{III}]^{n} \otimes [\textrm{O}]^{n}$ and $[\textrm{N}]^{n} \otimes [\textrm{O}]^{n}$. Coefficient $C^{(2)}$ is nonzero iff $\Phi_{pq} \ne 0$.

For the type $[\textrm{N}]^{n} \otimes [\textrm{O}]^{n}$ coefficient $C^{(2)}$ is zero which implies $\Phi_{pq}=0$. Consequently, $\Phi = \Phi_{1} (q) + \Phi_{2} (p)$ and using $\mu$ and $\nu$ both $\Phi_{i}$ can be gauged away. The function $\Omega$ must be such that $\Omega_{qq} \ne 0$, otherwise $C^{(1)}=0$ and the space becomes flat.
\end{proof}
\textbf{Remark}. A short historical remark about exact solutions of  algebraically degenerate heavenly spaces is needed. All metrics of SD Einstein spaces of the type $[\textrm{N}]^{n} \otimes [\textrm{O}]^{n}$ and $[\textrm{N}]^{e} \otimes [\textrm{O}]^{n}$ were found in \cite{Plebanski_further} (the metrics (5.20) and (5.24) in \cite{Plebanski_further}). A great progress in the subject was done by Fette, Janis and Newman. They found all algebraically degenerate heavenly metrics \cite{Fette_1,Fette_2}. Finally, an original approach towards algebraically degenerate heavenly spaces was used by Finley and Plebański in \cite{Finley_Plebanski_All}. Consequently, they found more compact forms of algebraically degenerate heavenly metrics.

\subsubsection{Types $[\textrm{III,N}]^{n} \otimes [\textrm{O}]^{n}$  with symmetries}
\label{subsekcja_typyIII_N_symetrie}

If the metric (\ref{twierdzenie_metryka_typ_III_Nn_x_nic_Einstein}) admits homothetic vector then it can be always brought to the form
\begin{eqnarray}
\label{Killing_vector_w_niebie}
K &=& \delta^{\dot{1}} \frac{\partial}{\partial q} + \delta^{\dot{2}} \frac{\partial}{\partial p} + \left( 2\chi_{0} x - \frac{\partial \delta^{\dot{2}}}{\partial p} x + \frac{\partial \delta^{\dot{1}}}{\partial p}y + \epsilon^{\dot{1}} \right)  \frac{\partial}{\partial x}
\\ \nonumber
 && \ \ \ \ \ \ \ \ \ \ \ \ \ \ \ \ + \left( 2\chi_{0} y + \frac{\partial \delta^{\dot{2}}}{\partial q} x - \frac{\partial \delta^{\dot{1}}}{\partial q}y + \epsilon^{\dot{2}} \right)  \frac{\partial}{\partial y}
\end{eqnarray}
where $\delta^{\dot{A}} = \delta^{\dot{A}} (p,q)$, $\epsilon^{\dot{A}} = \epsilon^{\dot{A}} (p,q)$. The master equation splits into system of equations
\begin{subequations}
\label{rownanie_master_dla_niebianskich}
\begin{eqnarray}
\label{rownanie_master_dla_niebianskich_1}
&&  \frac{\partial \delta^{\dot{1}}}{\partial p} = a_{0} \, e^{\Phi}, \ 
    \frac{\partial \delta^{\dot{2}}}{\partial q} = b_{0} \, e^{- \Phi}; \ a_{0}, b_{0} \textrm{ are constants}
   \\ \label{rownanie_master_dla_niebianskich_2}
   && \delta^{\dot{1}} \Phi_{q} + \delta^{\dot{2}} \Phi_{p} + \frac{\partial \delta^{\dot{2}}}{\partial p} - \frac{\partial \delta^{\dot{1}}}{\partial q} = \textrm{const},
   \\ \label{rownanie_master_dla_niebianskich_3}
   &&   \epsilon^{\dot{2}} \Phi_{q} = - \frac{\partial \epsilon^{\dot{2}}}{\partial q}, 
   \\ \label{rownanie_master_dla_niebianskich_4}
    && \delta^{\dot{1}} \Omega_{q} + \delta^{\dot{2}} \Omega_{p} - 2 \Omega \left( \chi_{0} - \frac{\partial \delta^{\dot{2}}}{\partial p} \right) +  \epsilon^{\dot{1}} \Phi_{p} = \frac{\partial \epsilon^{\dot{1}}}{\partial p}, 
   \\ \label{rownanie_master_dla_niebianskich_5}
   &&  2 \Omega \frac{\partial \delta^{\dot{2}}}{\partial q} = 
   \frac{\partial \epsilon^{\dot{1}}}{\partial q} - \frac{\partial \epsilon^{\dot{2}}}{\partial p}
\end{eqnarray}
\end{subequations}
with the transformation rules
\begin{eqnarray}
&&  \delta'^{\dot{1}} = \mu (q) \, \delta^{\dot{1}} , \  \delta'^{\dot{2}} = \nu (p) \, \delta^{\dot{2}} ,
\\ \nonumber
&& \nu \epsilon'^{\dot{1}} = \epsilon^{\dot{1}} - \partial_{p} (\delta^{\dot{1}} \sigma_{q} - \delta^{\dot{2}} \sigma_{p} + 2 \chi_{0} \sigma ) + 2 \sigma_{qp} \delta^{\dot{1}}
\\ \nonumber
&& \mu \epsilon'^{\dot{2}} = \epsilon^{\dot{2}} + \partial_{q} (\delta^{\dot{1}} \sigma_{q} - \delta^{\dot{2}} \sigma_{p} - 2 \chi_{0} \sigma ) + 2 \sigma_{qp} \delta^{\dot{2}}
\end{eqnarray}

\begin{Wniosek}
The metric (\ref{twierdzenie_metryka_typ_III_Nn_x_nic_Einstein}) does not admit any non-null homothetic vector in general.
\end{Wniosek}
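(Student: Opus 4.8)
The plan is to prove the Corollary by showing that, for generic structural functions $\Phi$ and $\Omega$, the master equations (\ref{rownanie_master_dla_niebianskich}) force the $\frac{\partial}{\partial q}$ and $\frac{\partial}{\partial p}$ components of $K$ to vanish, after which $K$ is automatically null. The key elementary observation is that the metric (\ref{twierdzenie_metryka_typ_III_Nn_x_nic_Einstein}) contains no $dx^{2}$, $dy^{2}$ or $dx\,dy$ terms, so the plane spanned by $\frac{\partial}{\partial x}$ and $\frac{\partial}{\partial y}$ --- which is exactly the leaf of the SD congruence $\mathcal{C}^{n}_{m^{A}}$ --- is totally null. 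Consequently, for the homothetic vector (\ref{Killing_vector_w_niebie}), once $\delta^{\dot{1}}=\delta^{\dot{2}}=0$ the whole vector reduces to $K=(2\chi_{0}x+\epsilon^{\dot{1}})\,\frac{\partial}{\partial x}+(2\chi_{0}y+\epsilon^{\dot{2}})\,\frac{\partial}{\partial y}$ and satisfies $g(K,K)=0$. Thus the entire assertion is reduced to establishing $\delta^{\dot{A}}=0$ for a generic metric.

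The first step is to isolate the subsystem (\ref{rownanie_master_dla_niebianskich_1})--(\ref{rownanie_master_dla_niebianskich_2}), which contains only $\Phi$ and the components $\delta^{\dot{A}}$, and to treat it as a linear system in the unknowns $\delta^{\dot{1}},\delta^{\dot{2}}$ and the constants $a_{0},b_{0}$. Differentiating (\ref{rownanie_master_dla_niebianskich_2}) once with respect to $p$ and once with respect to $q$ and substituting $\delta^{\dot{1}}_{p}$ and $\delta^{\dot{2}}_{q}$ from (\ref{rownanie_master_dla_niebianskich_1}) causes the $a_{0}$- and $b_{0}$-terms to cancel pairwise, leaving two reduced relations whose coefficients are the partial derivatives $\Phi_{p},\Phi_{q},\Phi_{pp},\Phi_{pq},\Phi_{qq}$ and the functions $e^{\pm\Phi}$. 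For a generic $\Phi$ one expects these coupled relations, together with (\ref{rownanie_master_dla_niebianskich_1}) itself, to admit only the trivial solution $a_{0}=b_{0}=0$ and $\delta^{\dot{1}}=\delta^{\dot{2}}=0$. This is the step I expect to be the main obstacle: one must give a precise meaning to the word \emph{generic} (an open dense subset of admissible pairs $(\Phi,\Omega)$ suffices) and then verify rigorously that the overdetermined system is killed entirely, rather than merely having its number of free parameters reduced; a convenient route is to prove triviality for one explicit non-constant $\Phi$ and to invoke the openness of the ``no non-trivial $\delta^{\dot{A}}$'' condition.

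Finally, with $\delta^{\dot{1}}=\delta^{\dot{2}}=0$ in hand, the remaining equations (\ref{rownanie_master_dla_niebianskich_3})--(\ref{rownanie_master_dla_niebianskich_5}) constrain only $\chi_{0}$, $\epsilon^{\dot{1}}$, $\epsilon^{\dot{2}}$ and $\Omega$; whatever solutions they may possess for special $\Omega$, each produces a vector lying in the totally null plane spanned by $\frac{\partial}{\partial x}$ and $\frac{\partial}{\partial y}$, and is therefore null by the observation of the first paragraph. Hence the generic metric (\ref{twierdzenie_metryka_typ_III_Nn_x_nic_Einstein}) admits no non-null homothetic vector, which is precisely the statement of the Corollary. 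I would close by noting that the argument leaves room both for null homothetic (in particular null Killing) vectors and for genuinely non-null ones in the non-generic situations examined in the following subsection, so that the qualifier ``in general'' cannot be dropped.
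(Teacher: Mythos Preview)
Your proposal is correct and rests on the same pivotal observation as the paper: the homothetic vector (\ref{Killing_vector_w_niebie}) is null precisely when $\delta^{\dot{A}}=0$ (the paper invokes this via \cite{Chudecki_null}, while you read it off directly from the absence of $dx^{2}$, $dy^{2}$, $dx\,dy$ terms in the metric). The logical structure differs, however. You argue in the direct direction: for generic $\Phi$ the subsystem (\ref{rownanie_master_dla_niebianskich_1})--(\ref{rownanie_master_dla_niebianskich_2}) forces $\delta^{\dot{A}}=0$, hence $K$ is null. The paper argues by contrapositive: assuming $\delta^{\dot{A}}\ne 0$, equations (\ref{rownanie_master_dla_niebianskich_1})--(\ref{rownanie_master_dla_niebianskich_2}) impose a nontrivial constraint on $\Phi$ (and, in the degenerate case $\Phi=0$, equations (\ref{rownanie_master_dla_niebianskich_4})--(\ref{rownanie_master_dla_niebianskich_5}) impose one on $\Omega$), so $\Phi$ and $\Omega$ cannot be arbitrary. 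The paper's framing is shorter because it sidesteps the step you flag as the ``main obstacle'': rather than solving the overdetermined system and proving triviality, it simply observes that the system, read as a condition on $\Phi$, is non-vacuous. Your route is more constructive but requires the extra work (explicit example plus openness) that you correctly identified; the paper's route trades that work for a case split on $\Phi\ne 0$ versus $\Phi=0$. Both arrive at the same conclusion with comparable rigor.
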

\begin{proof}
Killing vector $K$ (\ref{Killing_vector_w_niebie}) is non-null if and only if $\delta^{\dot{A}} \ne 0$ (see Theorem 2.2 of \cite{Chudecki_null}). If $\Phi \ne 0$ then from (\ref{rownanie_master_dla_niebianskich_1}-\ref{rownanie_master_dla_niebianskich_2}) it follows that there is an algebraic condition for $\Phi$. Hence, $\Phi$ is not arbitrary anymore. Consider now the case with $\Phi=0$. Then $\delta^{\dot{A}}$ becomes linear in $p$ and $q$ and (\ref{rownanie_master_dla_niebianskich_4})-(\ref{rownanie_master_dla_niebianskich_5}) constitute algebraic conditions for $\Omega$, as a result of which $\Omega$ cannot be arbitrary anymore. Consequently, if $\delta^{\dot{A}} \ne 0$ then Eqs. (\ref{rownanie_master_dla_niebianskich}) are not satisfied for arbitrary $\Phi$ and $\Omega$. Thus, the metric (\ref{twierdzenie_metryka_typ_III_Nn_x_nic_Einstein}) does not admit any non-null homothetic vector in general.
\end{proof}

Comprehensive analysis of a symmetry algebra of the metric (\ref{twierdzenie_metryka_typ_III_Nn_x_nic_Einstein}) is outside a scope of this text. Instead, we focus on null homothetic vectors. The paper \cite{Chudecki_null} was devoted to the issue of the existence of null homothetic vectors in Einstein spaces. Here we would like to mention a few additional remarks which have not been noticed in \cite{Chudecki_null}.

The homothetic vector (\ref{Killing_vector_w_niebie}) is null if and only if $\delta^{\dot{A}}=0$. Thus, any null homothetic vector admitted by the metric (\ref{twierdzenie_metryka_typ_III_Nn_x_nic_Einstein}) takes the form
\begin{equation}
K =  \left( 2\chi_{0} x + \epsilon_{p}   \right)  \frac{\partial}{\partial x}
 + \left( 2\chi_{0} y + \epsilon_{q} \right)  \frac{\partial}{\partial y}
\end{equation}
where $\epsilon=\epsilon (p,q)$ is an arbitrary function which transforms as follows
\begin{equation}
\label{transformacja_na_epsilon_zerowedelta}
\epsilon' = \epsilon - 2\chi_{0} \sigma + \epsilon_{0}
\end{equation}
Consequently, the master equation reduces to the system of two equations 
\begin{subequations}
\label{rownanie_master_dla_niebianskich_1_null_ogolne}
\begin{eqnarray}
\label{rownanie_master_dla_niebianskich_1_null}
\epsilon_{q} \Phi_{q} = - \epsilon_{qq}
\\
\label{rownanie_master_dla_niebianskich_2_null}
-2 \chi_{0} \Omega + \epsilon_{p} \Phi_{p} = \epsilon_{pp}
\end{eqnarray}
\end{subequations}

\subsubsection{Type $[\textrm{III}]^{n} \otimes [\textrm{O}]^{n}$ with null symmetries}

For type $[\textrm{III}]^{n} \otimes [\textrm{O}]^{n}$ function $\Phi$ is nonzero. Thus, from (\ref{rownanie_master_dla_niebianskich_1_null_ogolne}) it follows that the metric (\ref{twierdzenie_metryka_typ_III_Nn_x_nic_Einstein}) does not admit any null homothetic vector in general. Existence of such a symmetry puts additional constraints on the functions $\Phi$ and $\Omega$.

\begin{Twierdzenie} 
\label{twierdzenie_metryka_typ_III_Nn_x_nic_Einstein_symetriazerowahomotetyczna}
Let $(\mathcal{M}, ds^{2})$ be a complex (neutral) Einstein space of the type $ [\textrm{III}]^{n} \otimes [\textrm{O}]^{n}$ ($ [\textrm{III}_{r}]^{n} \otimes [\textrm{O}_{r}]^{n}$) admitting null and proper homothetic vector field $K$. Then there exists a local coordinate system $(q,p,x,y)$ such that the metric takes the form
\begin{equation}
\label{twierdzenie_metryka_typ_III_Nn_x_nic_Einstein_null_proper_homothetic}
\frac{1}{2} ds^{2} =  dqdy-dpdx +   \Phi_{p} \, x \,  dp^{2}  + \Phi_{q} \, y  \,  dq^{2}
\end{equation}
where $\Phi = \Phi (q,p)$ is an arbitrary holomorphic (real smooth) function such that $\Phi_{pq} \ne 0$. Vector $K$ has the form $K = 2 \chi_{0} (x \partial_{x} + y \partial_{y})$.
\end{Twierdzenie}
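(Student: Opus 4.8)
The plan is to begin from the canonical form already secured in Theorem \ref{ttwierdzenie_metryka_typ_III_Nn_x_nic_Einstein}, namely the metric (\ref{twierdzenie_metryka_typ_III_Nn_x_nic_Einstein}) with $\Sigma=0$, arbitrary $\Phi,\Omega$ and $\Phi_{pq}\ne0$, and to show that the hypothesis of a null \emph{proper} ($\chi_0\ne0$) homothetic vector allows us to gauge $\Omega$ away while simultaneously fixing the shape of $K$. Since $\Lambda=0$ for this type, a null homothetic vector is forced by $\delta^{\dot{A}}=0$ into the reduced form $K=(2\chi_0 x+\epsilon_p)\partial_x+(2\chi_0 y+\epsilon_q)\partial_y$, and the master equation collapses to the pair (\ref{rownanie_master_dla_niebianskich_1_null_ogolne}). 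Equation (\ref{rownanie_master_dla_niebianskich_2_null}) then expresses $\Omega$ algebraically through $2\chi_0\Omega=\epsilon_p\Phi_p-\epsilon_{pp}$, which already signals that $\Omega$ is slaved to the gauge freedom still carried by $\epsilon$.

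First I would use the residual coordinate freedom (\ref{gauge}) with $\mu=\nu=1$ and $\Phi_0=0$, so that by the $\Lambda=0$ specialisation of (\ref{transformacje_na_przypadekEinsteinowski}) the function $\Phi$ is untouched and hence the type-[III] condition $\Phi_{pq}\ne0$ is preserved, while $\Sigma$ and $\Omega$ transform and $\epsilon$ transforms by (\ref{transformacja_na_epsilon_zerowedelta}). Because the vector is proper, $\chi_0\ne0$, the equation $\epsilon-2\chi_0\sigma+\epsilon_0=0$ can be solved for the gauge function, i.e. one takes $\sigma=\tfrac{1}{2\chi_0}(\epsilon+\epsilon_0)$, which forces $\epsilon'=0$ and therefore reduces $K$ to $2\chi_0(x\partial_x+y\partial_y)$, exactly the asserted form of the homothetic vector.

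The step that requires care — and which I expect to be the crux — is checking that this \emph{single} choice of $\sigma$ is consistent: it must preserve the normalisation $\Sigma=0$ inherited from Theorem \ref{ttwierdzenie_metryka_typ_III_Nn_x_nic_Einstein} and at the same time annihilate $\Omega$. Inserting $\sigma_q=\tfrac{1}{2\chi_0}\epsilon_q$ into the $\Sigma$-transformation law gives $\Sigma'\propto\Phi_q\epsilon_q+\epsilon_{qq}$, which vanishes precisely by the first null master equation (\ref{rownanie_master_dla_niebianskich_1_null}); inserting $\sigma_p=\tfrac{1}{2\chi_0}\epsilon_p$ into the $\Omega$-transformation law yields $\Omega'=\Omega-\tfrac{1}{2\chi_0}(\Phi_p\epsilon_p-\epsilon_{pp})$, which is zero by (\ref{rownanie_master_dla_niebianskich_2_null}). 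Thus the two master equations conspire so that the very gauge which trivialises $\epsilon$ also enforces $\Sigma'=0$ and $\Omega'=0$. Collecting these facts, the metric reduces to the claimed form (\ref{twierdzenie_metryka_typ_III_Nn_x_nic_Einstein_null_proper_homothetic}) with $\Phi_{pq}\ne0$ and with $K=2\chi_0(x\partial_x+y\partial_y)$, which completes the argument.
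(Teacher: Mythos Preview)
Your proof is correct and follows essentially the same route as the paper: use $\chi_0\ne0$ and the transformation law (\ref{transformacja_na_epsilon_zerowedelta}) to gauge $\epsilon$ to zero, whereupon the reduced master equations (\ref{rownanie_master_dla_niebianskich_1_null_ogolne}) force $\Omega=0$. Your explicit verification that this same choice of $\sigma$ preserves $\Sigma=0$ (via (\ref{rownanie_master_dla_niebianskich_1_null})) and kills $\Omega$ (via (\ref{rownanie_master_dla_niebianskich_2_null})) is a detail the paper leaves implicit, so your argument is in fact slightly more complete.
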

\begin{proof}
Let $K$ be a null and proper homothetic vector field. With $\chi_{0} \ne 0$ one can always gauge away $\epsilon$ (compare (\ref{transformacja_na_epsilon_zerowedelta})) which implies $\Omega = 0$ (compare (\ref{rownanie_master_dla_niebianskich_2_null})). It implies that $K = 2 \chi_{0} (x \partial_{x} + y \partial_{y})$ holds and the metric (\ref{twierdzenie_metryka_typ_III_Nn_x_nic_Einstein}) reduces to the form (\ref{twierdzenie_metryka_typ_III_Nn_x_nic_Einstein_null_proper_homothetic}).
\end{proof}

The general form of a metric of a space of type $[\textrm{III}]^{n} \otimes [\textrm{O}]^{n}$ equipped with null and proper homothetic vector field was found in \cite{Chudecki_null} (it is the metric (4.14) of \cite{Chudecki_null}). However, there is a dependence on two functions of two variables in (4.14) of \cite{Chudecki_null}. Here we have shown that one of these functions is, in fact, gauge-dependent which is a significant improvement of the results of \cite{Chudecki_null}. 

\begin{Twierdzenie} 
\label{twierdzenie_metryka_typ_III_Nn_x_nic_Einstein_symetriazerowa}
Let $(\mathcal{M}, ds^{2})$ be a complex (neutral) Einstein space of the type $ [\textrm{III}]^{n} \otimes [\textrm{O}]^{n}$ ($ [\textrm{III}_{r}]^{n} \otimes [\textrm{O}_{r}]^{n}$) admitting null Killing vector field $K$. Then there exists a local coordinate system $(q,p,x,y)$ such that the metric takes the form
\begin{equation}
\label{twierdzenie_metryka_typ_III_Nn_x_nic_Einstein_symetriazerowa_1}
\frac{1}{2} ds^{2} =  -dpdx + (2p +H) \, dqdy + \left( \Omega - \frac{x}{2p+H} \right) dp^{2}
\end{equation}
where $\Omega = \Omega (q,p)$ and $H=H(q)$ are arbitrary holomorphic (real smooth) functions such that $H_{q} \ne 0$. Vector $K$ has the form $K =  \partial_{y}$.
\end{Twierdzenie}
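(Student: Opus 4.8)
The plan is to start from the general type $[\textrm{III}]^{n}\otimes[\textrm{O}]^{n}$ Einstein metric (\ref{twierdzenie_metryka_typ_III_Nn_x_nic_Einstein}) (recall $\Lambda=0$ for these spaces) and feed in the null Killing condition. By the discussion preceding the theorem, a \emph{null} homothetic vector has $\delta^{\dot{A}}=0$ in (\ref{Killing_vector_w_niebie}), and the \emph{Killing} requirement forces $\chi_{0}=0$, so that $K=\epsilon_{p}\,\partial_{x}+\epsilon_{q}\,\partial_{y}$ with $\epsilon=\epsilon(p,q)$. With $\chi_{0}=0$ the reduced master equations (\ref{rownanie_master_dla_niebianskich_1_null_ogolne}) become $\epsilon_{q}\Phi_{q}=-\epsilon_{qq}$ and $\epsilon_{p}\Phi_{p}=\epsilon_{pp}$; crucially $\Omega$ drops out entirely, so it will survive as an arbitrary function, matching the statement.

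First I would integrate these equations. Since $\partial_{q}(e^{\Phi}\epsilon_{q})=e^{\Phi}(\epsilon_{qq}+\epsilon_{q}\Phi_{q})=0$ and likewise $\partial_{p}(e^{-\Phi}\epsilon_{p})=0$, one gets $\epsilon_{q}=\phi(p)\,e^{-\Phi}$ and $\epsilon_{p}=\psi(q)\,e^{\Phi}$. Neither factor can vanish identically: $\psi\equiv 0$ would force $\Phi_{p}=\partial_{p}\ln\phi$, hence $\Phi$ separable and $\Phi_{pq}=0$, contradicting type III. I would then use the residual coordinate gauge $q\to q'(q)$, $p\to p'(p)$, which (with $\Lambda=0$) shifts $\Phi$ by $\ln(\mu/\nu)$ in (\ref{transformacje_na_przypadekEinsteinowski}) and rescales $\phi,\psi$; choosing $\nu=\phi$, $\mu=\psi$ normalizes $\phi=\psi=1$, so that $\epsilon_{p}=e^{\Phi}$, $\epsilon_{q}=e^{-\Phi}$. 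The integrability condition $\partial_{p}\epsilon_{q}=\partial_{q}\epsilon_{p}$ then collapses to the quasilinear (inviscid-Burgers) equation $w_{p}+w\,w_{q}=0$ for $w:=e^{2\Phi}$.

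The decisive step is to solve this equation by characteristics, obtaining the implicit solution $q=w\,p+G(w)$ with $G$ arbitrary, and to use it to introduce adapted coordinates (keeping $p$ fixed)
\[
\hat{q}:=e^{\Phi}=w^{1/2},\qquad \hat{x}:=x-e^{2\Phi}y,\qquad \hat{y}:=e^{\Phi}y .
\]
Here $\hat{x}$ is $K$-invariant ($K\hat{x}=0$) and $K\hat{y}=1$, so $K=\partial_{\hat{y}}$ as required. Substituting $x,y,q$ and their differentials into (\ref{twierdzenie_metryka_typ_III_Nn_x_nic_Einstein}) — with $\Phi_{p}=-\tfrac{1}{2}(p+G')^{-1}$ and $\Phi_{q}=\tfrac{1}{2}w^{-1}(p+G')^{-1}$ read off the implicit solution — I expect the $dq^{2}$, $d\hat{q}^{2}$, $dp\,d\hat{q}$ and $dp\,d\hat{y}$ contributions to cancel, leaving precisely the form (\ref{twierdzenie_metryka_typ_III_Nn_x_nic_Einstein_symetriazerowa_1}) with $2p+H(\hat{q})=2(p+G')$, i.e. $H(\hat{q})=2G'(\hat{q}^{2})$, and with $\Omega$ re-expressed as a function of $(\hat{q},p)$.

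Finally I would verify the nondegeneracy clause. From $q=w\,p+G(w)$ one sees that $\Phi$ is separable exactly when $G''=0$; hence the type-III condition $C^{(2)}\neq 0$ of (\ref{EEEinstein_spaces_ogolnakrzywizna}), namely $\Phi_{pq}\neq 0$, is equivalent to $G''\neq 0$, which in turn is equivalent to $H_{\hat{q}}=4\hat{q}\,G''\neq 0$. The hard part will be the explicit change of variables together with the attendant bookkeeping: one must propagate the derivatives of $\Phi$ through the implicit Burgers solution, confirm the four cancellations, and check that no spurious $dp\,d\hat{q}$ or $d\hat{q}^{2}$ term survives. The normalization $\phi=\psi=1$ is what makes these cancellations clean, and carrying it through correctly is the main technical obstacle.
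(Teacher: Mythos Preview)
Your proposal is correct and follows essentially the same route as the paper: set $\chi_{0}=0$, integrate (\ref{rownanie_master_dla_niebianskich_1_null_ogolne}) to $\epsilon_{p}=e^{\Phi}\psi(q)$, $\epsilon_{q}=e^{-\Phi}\phi(p)$, normalize $\phi=\psi=1$ via the residual $q'(q),p'(p)$ gauge, obtain the Burgers-type equation for $e^{2\Phi}$ (the paper writes it as $Q^{2}Q_{q}+Q_{p}=0$ with $Q:=e^{\Phi}$), solve by characteristics, and pass to coordinates $(\hat q,p,\hat x,\hat y)=(Q,\,p,\,x-Q^{2}y,\,Qy)$ in which $K=\partial_{\hat y}$. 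Your identification $H=2G'_{\mathrm{stud}}(\hat q^{2})$ coincides with the paper's $H=G'_{\mathrm{pap}}(\hat q)/\hat q$ once one notes $G_{\mathrm{stud}}(w)=G_{\mathrm{pap}}(\sqrt{w})$, and your nondegeneracy check $G''_{\mathrm{stud}}\neq 0\Leftrightarrow H_{\hat q}\neq 0$ matches the paper's $QG''_{\mathrm{pap}}-G'_{\mathrm{pap}}\neq 0$; the only work you have deferred is the explicit metric substitution, which the paper carries out via an intermediate form before the final relabeling.
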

\begin{proof}
With $\chi_{0}=0$ from (\ref{rownanie_master_dla_niebianskich_1_null_ogolne}) it follows that $\epsilon_{p} = e^{\Phi} f(q)$ and $\epsilon_{q} = e^{-\Phi} h (p)$ where $f$ and $h$ are arbitrary nonzero functions. Both $f$ and $h$ can be  brought to $1$ with help of the gauge functions $\mu$ and $\nu$ (compare (\ref{transformacje_na_przypadekEinsteinowski})). Substitution $\Phi = \ln Q$ brings the null Killing vector to the form $K = Q \partial_{x} + Q^{-1} \partial_{y}$. Eqs. (\ref{rownanie_master_dla_niebianskich_1_null_ogolne}) yield $Q^{2} Q_{q} + Q_{p}=0$. A general solution of this equation reads $q = p \, Q^{2} + G (Q)$ where $G$ is an arbitrary function such that $QG_{QQ}-G_{Q} \ne 0$ (otherwise $C^{(2)}=0$ and the SD Weyl spinor is not of the type [III] anymore). Treating $Q$ as a new variable and $q$ as a function, $q=q(p,Q)$, one finds  
\begin{equation}
Q_{q} = \frac{1}{2pQ + G_{Q}}, \ Q_{p} = - \frac{Q^{2}}{2pQ + G_{Q}}
\end{equation}
and the metric reads
\begin{equation}
\frac{1}{2} ds^{2} = dpd(-x+yQ^{2}) + \frac{2pQ + G_{Q}}{Q} (Q \, dy dQ + y \, dQ^{2}) + \left( \Omega + \frac{Q(-x +yQ^{2})}{2pQ + G_{Q}} \right) dp^{2}
\end{equation}
where $\Omega = \Omega (Q,p)$. Performing the coordinate transformation 
\begin{equation}
p \rightarrow \widetilde{p}, \ Q \rightarrow \widetilde{q}, \ x \rightarrow \widetilde{x} + \widetilde{y} \widetilde{q}, \ y \rightarrow \frac{\widetilde{y}}{\widetilde{q}},
\end{equation}
denoting $H := G_{\widetilde{q}} / \widetilde{q}$ and dropping tildes one arrives at the metric (\ref{twierdzenie_metryka_typ_III_Nn_x_nic_Einstein_symetriazerowa_1}). Also, $QG_{QQ}-G_{Q} \ne 0$ implies that $H_{\widetilde{q}}\ne 0$. The null Killing vector takes the form $K =  \partial_{\tilde{y}}$. 
\end{proof}

The metric of a space of type $ [\textrm{III}]^{n} \otimes [\textrm{O}]^{n}$ equipped with a null Killing vector was found in \cite{Chudecki_null} (the metric (5.28) in \cite{Chudecki_null}), but form (\ref{twierdzenie_metryka_typ_III_Nn_x_nic_Einstein_symetriazerowa_1}) is much more concise.

\subsubsection{Type $[\textrm{N}]^{n} \otimes [\textrm{O}]^{n}$ with null symmetries}

In this case $\Phi$ is zero. If $\chi_{0} \ne 0$ then (\ref{rownanie_master_dla_niebianskich_1_null_ogolne}) implies $\Omega_{qq}=0$ and the metric becomes flat. Thus, one has $\chi_{0}=0$. Hence, any null Killing vector field of the metric 
\begin{equation}
\label{twierdzenie_metryka_typ_III_Nn_x_nic_Einstein_null_Killing}
\frac{1}{2} ds^{2} =  dqdy-dpdx +    \Omega \,  dp^{2}  
\end{equation}
has the form $K = \epsilon_{0} ( q \partial_{x} + p \partial_{y} ) + a_{0} \partial_{x}  + b_{0} \partial_{y}$. Therefore, the metric (\ref{twierdzenie_metryka_typ_III_Nn_x_nic_Einstein_null_Killing}) is automatically equipped with three different null Killing vectors, $K_{1}=\partial_{x}$, $K_{2} = \partial_{y}$ and $K_{3}=q \partial_{x} + p \partial_{y}$.

We point out that there is an interesting geometrical difference between vectors $K_{1}$, $K_{2}$ and $K_{3}$. Any null vector has the form $K_{A\dot{A}}=k_{A} k_{\dot{A}}$. If $K_{A\dot{A}}$ is a null Killing vector field and the space is Einstein space then both $k_{A}$ and $k_{\dot{A}}$ generate a SD (ASD, respectively) $\mathcal{C}$ (see \cite{Chudecki_null}). The metric (\ref{twierdzenie_metryka_typ_III_Nn_x_nic_Einstein_null_Killing}) is equipped with a single SD $\mathcal{C}$ and infinitely many ASD $\mathcal{C}s$. The SD $\mathcal{C}$ is generated by the spinor $m^{A}$ (compare (\ref{trzy_struny_przepisy})), hence $k^{A} \sim m^{A}$ for all the vectors $K_{1}$, $K_{2}$ and $K_{3}$. Dotted spinor $k_{\dot{A}}$ generates the ASD $\mathcal{C}_{k^{\dot{A}}}$ and this particular $\mathcal{C}$ is somehow distinguished between infinitely many ASD $\mathcal{C}s$. However, for $K_{1}$ and $K_{2}$ we find that $\mathcal{C}_{k^{\dot{A}}}=\mathcal{C}^{n}$ while for $K_{3}$ the congruence $\mathcal{C}_{k^{\dot{A}}}$ is expanding, $\mathcal{C}_{k^{\dot{A}}}=\mathcal{C}^{e}$. 

Metrics of type $[\textrm{N}]^{n} \otimes [\textrm{O}]^{n}$ equipped with $K_{1}$ or $K_{3}$ were listed as two different metrics in \cite{Chudecki_null} ((5.28) and (5.33) of \cite{Chudecki_null}). Here we proved that these are, in fact, the same metrics equipped with three different null Killing vectors (although it seems difficult to show that (5.28) of \cite{Chudecki_null} can be brought to the form (5.33) of \cite{Chudecki_null}).

%#####################################################################################

\section{Concluding remarks.}

This paper is the first part of more extensive work devoted to pK and pKE-spaces. In this part we have found all pK and pKE metrics equipped with a single nonexpanding congruence of SD null strings. In our formalism these are metrics of types $[\textrm{deg}]^{n} \otimes [\textrm{D}]^{nn}$. Also, a few other interesting metrics with nonzero traceless Ricci tensor equipped with congruences of SD and ASD null strings have been constructed. The results are gathered in the Table \ref{summary} (the metrics marked by $^{*}$ have been already known, the rest of the metrics are new results).

The second part of our work will be devoted to the pK and pKE-spaces equipped with expanding congruence of SD null strings and algebraically degenerate SD Weyl spinor. These are spaces of the types $[\textrm{deg}]^{e} \otimes [\textrm{D}]^{nn}$.

The third family of pKE-spaces are those with algebraically general SD Weyl spinor which is equivalent to the lack of existence of congruences of SD null strings. These are spaces of the types $[\textrm{I}] \otimes [\textrm{D}]^{nn}$ or $[\textrm{I}] \otimes [\textrm{O}]^{n}$. Special examples of type $[\textrm{I}] \otimes [\textrm{O}]^{n}$ have been found in \cite{Boyer} but we are still far from the full solution of such a problem. Explicit examples of pKE-spaces of the type $[\textrm{I}] \otimes [\textrm{D}]^{nn}$ are even greater challenge. According to our best knowledge such examples are not known yet. It is only known that a general solution depends on two holomorphic functions of three variables each. However, a thorough analysis of the problem gives hope that an explicit example of the type $[\textrm{I}] \otimes [\textrm{D}]^{nn}$ will be constructed. This problem is now intensively studied but the results will be presented elsewhere.
 
\begin{table}[H]
 \footnotesize
\begin{center}
\begin{tabular}{|c|c|c|}   \hline
 Type  &   Metric   & Functions in the metric        \\ \hline \hline
   \multicolumn{3}{|c|}{Spaces with $C_{ab} \ne 0$}  \\ \cline{1-3}
 $ [\textrm{deg}]^{n} \otimes [\textrm{any}] $ & (\ref{metryka_slaba_HH_alternative_form})$^{*}$ & 3 functions of 4 variables     \\ \hline
 $\{ [\textrm{deg}]^{n} \otimes [\textrm{any}]^{e},[++]  \}$ & (\ref{twierdzenie_metryka_degn_x_anye_plus_plus})  & 2 functions of 4 variables, 2 functions of 3 variables     \\ \hline
 $\{ [\textrm{deg}]^{n} \otimes [\textrm{any}]^{e},[--]  \}$ & (\ref{twierdzenie_metryka_degn_x_anye_minus_minus})$^{*}$ & 2 functions of 4 variables, 1 function of 3 variables     \\ \hline
 $ [\textrm{deg}]^{n} \otimes [\textrm{deg}]^{n}$ & (\ref{twierdzenie_metryka_Walker_1})$^{*}$ & 1 function of 4 variables, 2 functions of 3 variables     \\ \hline
 $\{ [\textrm{deg}]^{n} \otimes [\textrm{deg}]^{ne},[--,++]  \}$ & (\ref{twierdzenie_metrykadegn_x_degne_minisminus_plusplus}) & 4 functions of 3 variables     \\ \hline
 $\{ [\textrm{deg}]^{n} \otimes [\textrm{deg}]^{ne},[--,--]  \}$ & (\ref{twierdzenie_metrykadegn_x_degne_minisminus_minisminus}) & 3 functions of 3 variables     \\ \hline
 $ [\textrm{deg}]^{n} \otimes [\textrm{D}]^{nn}$ & (\ref{twierdzenie_metryka_Walker_2}) & 2 functions of 3 variables     \\ \hline
 $\{ [\textrm{II}]^{ne} \otimes [\textrm{D}]^{nn}, [--,--,--,++] \}$ & (\ref{IIxD_mmmmmmpp}) & 2 functions of 2 variables     \\ \hline
 $\{ [\textrm{D}]^{ne} \otimes [\textrm{D}]^{nn}, [--,--,--,++] \}$ & (\ref{metryka_DxD}) & 1 function of 2 variables     \\ \hline
 $ [\textrm{D}]^{nn} \otimes [\textrm{D}]^{nn}$ & (\ref{DxD_two_sided_conformally_recurrent})$^{*}$ & 2 functions of 2 variables     \\ \hline
 $ [\textrm{III}]^{n} \otimes [\textrm{O}]^{n}$ & (\ref{ogolna_metryka_vacuum_IIIxO}) & 4 functions of 2 variables     \\ \hline
 $ [\textrm{N}]^{n} \otimes [\textrm{O}]^{n}$ & (\ref{ogolna_metryka_vacuum_NxO}) & 2 functions of 2 variables, 1 constant     \\ \hline
  \multicolumn{3}{|c|}{Einstein spaces}  \\ \cline{1-3}
  $ [\textrm{II}]^{n} \otimes [\textrm{D}]^{nn}$ & (\ref{ogolna_metryka_vacuum_IIxD}) & 2 functions of 2 variables, 1 constant    \\ \hline
  $ [\textrm{D}]^{nn} \otimes [\textrm{D}]^{nn}$ & (\ref{metryka_DnnxDnn_Einstein})$^{*}$ &  1 constant    \\ \hline
  $ [\textrm{III}]^{n} \otimes [\textrm{O}]^{n}$ & (\ref{twierdzenie_metryka_typ_III_Nn_x_nic_Einstein})$^{*}$ &  2 functions of 2 variables    \\ \hline
  $ [\textrm{N}]^{n} \otimes [\textrm{O}]^{n}$ & (\ref{twierdzenie_metryka_typ_III_Nn_x_nic_Einstein})$^{*}$ &  1 function of 2 variables    \\ \hline
\end{tabular}
\caption{Summary of main results.}
\label{summary}
\end{center}
\end{table}

%#####################################################################################

\appendix
\renewcommand{\theequation}{\Alph{section}.\arabic{equation}}
\setcounter{equation}{0}
%\section{\normalsize Appendix}
\section{Appendix. Classification of spaces equipped with congruences of SD and ASD null strings.}
\label{Dodatek_klasyfikacja}

In this Appendix we present a detailed classification of spaces equipped with at most two SD and two ASD $\mathcal{C}s$. Such a classification can be given in terms of the properties of $\mathcal{C}s$ and $\mathcal{I}s$  without specification of a spinorial basis (compare the Tables \ref{cztery_kongruencje} and \ref{cztery_kongruencje_przeciecia}). In the Tables \ref{Tabela_typy_one_congruence}, \ref{Tabela_typy_two_congruence} and \ref{Tabela_typy_four_congruence} we call such a classification "spinorial".

A special choice of a spinorial basis allows to adapt the null tetrad to the structure of $\mathcal{C}s$. Let the spinorial basis be chosen in such a manner that $m_{A}=[0,m]$, $n_{A} = [n,0]$, $m_{\dot{A}} = [0, \dot{m}]$ and $n_{\dot{A}} = [\dot{n},0]$. With such a choice of the spinorial basis the null tetrad is \textsl{adapted} to the $\mathcal{C}s$ in a sense that $\mathcal{C}_{m^{A}}$ is spanned by $(\partial_{2}, \partial_{4})$, $\mathcal{C}_{n^{A}}$ is spanned by $(\partial_{1}, \partial_{3})$, etc. Also, $\mathcal{I} (\mathcal{C}_{m^{A}}, \mathcal{C}_{m^{\dot{A}}}) \sim \partial_{4}$, $\mathcal{I} (\mathcal{C}_{m^{A}}, \mathcal{C}_{n^{\dot{A}}}) \sim \partial_{2}$, etc. For details see the Tables \ref{cztery_kongruencje_adapted}, \ref{cztery_kongruencje_przeciecia_adapted} and the Scheme \ref{Congruences}. In the Tables \ref{Tabela_typy_one_congruence}, \ref{Tabela_typy_two_congruence} and \ref{Tabela_typy_four_congruence} we call such a classification "tetradial". 

Note that not all the types listed in the Tables \ref{Tabela_typy_one_congruence}, \ref{Tabela_typy_two_congruence} and \ref{Tabela_typy_four_congruence} are admitted by an arbitrary space. We enumerate  only two restrictions:
\begin{eqnarray}
\nonumber
 (i) && \mathcal{C}^{en} \textrm{ are admitted only by the spaces with nonzero traceless Ricci tensor}
\\ \nonumber
 (ii) && \textrm{if the curvature scalar } R \ne 0 \textrm{ then } \mathcal{C}^{n} \textrm{ are admitted only by the spaces}
\\ \nonumber
&& \textrm{of the types [II] and [D]}
\end{eqnarray} 
An interesting question arises: are there any subtypes which cannot exist at all? We are going to dig this problem deeper.

\begin{figure}[H]
\begin{center}
\includegraphics[scale=0.8]{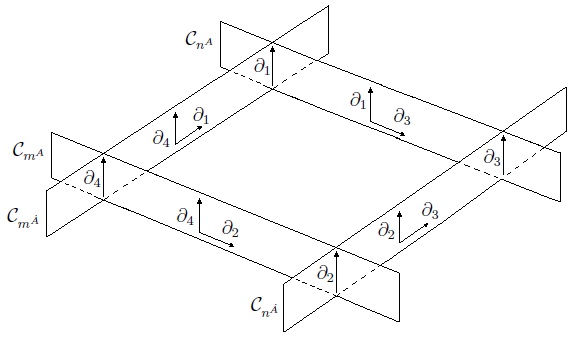}
\caption {Congruences of null strings and congruences of null geodesics in adapted null tetrad.}
\label{Congruences}
\end{center}
\end{figure}

\begin{table}[H]
\begin{center}
\begin{tabular}{|c|c|c|c|c|}   \hline
 Congruence  &   Spanned   & Spinor  & Expansion & Conditions    \\ \hline \hline
 $\mathcal{C}_{m^{A}}$ & $(\partial_{2}, \partial_{4})$ & $m_{A} = [0,m]$, $m \ne 0$  & $M_{\dot{A}} = \sqrt{2} m \, [\Gamma_{421}, -\Gamma_{423}]$  & $\Gamma_{422} = \Gamma_{424} = 0 $ \\ \hline
 $\mathcal{C}_{n^{A}}$ & $(\partial_{1}, \partial_{3})$  & $n_{A} = [n,0]$, $n \ne 0$  & $N_{\dot{A}} = \sqrt{2} n \, [\Gamma_{314}, \Gamma_{312}]$  & $\Gamma_{311} = \Gamma_{313}=0$  \\ \hline
  $\mathcal{C}_{m^{\dot{A}}}$ & $(\partial_{1}, \partial_{4})$  & $m_{\dot{A}} = [0, \dot{m}]$, $\dot{m} \ne 0$  & $M_{A} = \sqrt{2} \dot{m} \,  [\Gamma_{412}, - \Gamma_{413}]$  & $\Gamma_{411} = \Gamma_{414}=0$ \\ \hline
  $\mathcal{C}_{n^{\dot{A}}}$ & $(\partial_{2}, \partial_{3})$  & $n_{\dot{A}} = [\dot{n},0]$, $\dot{n} \ne 0$  & $N_{A} = \sqrt{2} \dot{n} \, [\Gamma_{324}, \Gamma_{321}]$  & $\Gamma_{322} = \Gamma_{323}=0$ \\ \hline
\end{tabular}
\caption{Four distinct congruences of null strings in adapted null tetrad.}
\label{cztery_kongruencje_adapted}
\end{center}
\end{table}

\begin{table}[H]
\begin{center}
\begin{tabular}{|c|c|c|c|}   \hline
 Intersection  &   Tangent vector   & Expansion   & Twist     \\ \hline \hline
 $\mathcal{I} (\mathcal{C}_{m^{A}}, \mathcal{C}_{m^{\dot{A}}})$ & $\sim \partial_{4}$ & $\theta \sim \Gamma_{412}+\Gamma_{421}$  & $\varrho \sim \Gamma_{412}-\Gamma_{421}$   \\ \hline
 $\mathcal{I} (\mathcal{C}_{m^{A}}, \mathcal{C}_{n^{\dot{A}}})$ & $\sim \partial_{2}$ & $\theta \sim \Gamma_{324}+\Gamma_{423}$  & $\varrho \sim \Gamma_{324}-\Gamma_{423}$    \\ \hline
$\mathcal{I} (\mathcal{C}_{n^{A}}, \mathcal{C}_{m^{\dot{A}}})$  & $\sim \partial_{1}$ & $\theta \sim \Gamma_{413}+\Gamma_{314}$  & $\varrho \sim \Gamma_{413}-\Gamma_{314}$    \\ \hline
 $\mathcal{I} (\mathcal{C}_{n^{A}}, \mathcal{C}_{n^{\dot{A}}})$ & $\sim \partial_{3}$ & $\theta \sim \Gamma_{321}+\Gamma_{312}$  & $\varrho \sim \Gamma_{321}-\Gamma_{312}$   \\ \hline
\end{tabular}
\caption{Four intersections of congruences of null strings in adapted null tetrad.}
\label{cztery_kongruencje_przeciecia_adapted}
\end{center}
\end{table}

\newpage
\begin{longtable}{|c|c|c|}   \hline
  Type / &  \multicolumn{2}{|c|}{Conditions}  \\ \cline{2-3}
 Subtype  &   Spinorial   & Tetradial    \\ \hline 
 \rowcolor{light_gray_1}
 Type $ [ \: \cdot \: ]^{n} \otimes [ \: \cdot \: ]^{n}$ & $M^{\dot{A}}=M^{A}=0$ & $\Gamma_{42}=\Gamma_{41}=0$  \\ \hline
$[--]$ & - & -  \\ \hline 
\rowcolor{light_gray_1}
 Type $ [  \: \cdot \: ]^{n} \otimes [ \: \cdot \: ]^{e}$ & $M^{\dot{A}}=0$, $M^{A} \ne 0$ & $\Gamma_{42}=0$, $\Gamma_{41} \ne 0$  \\ \hline
 $[--]$ & $m_{A}M^{A}=0$ & $\Gamma_{412}=0$  \\ \hline
 $[++]$ & $m_{A}M^{A} \ne 0$ & $\Gamma_{412} \ne 0$  \\ \hline 
 \rowcolor{light_gray_1}
 Type $ [ \: \cdot \: ]^{e} \otimes [ \: \cdot \: ]^{e}$ & $M^{\dot{A}}\ne 0$, $M^{A} \ne 0$ & $\Gamma_{42} \ne 0$, $\Gamma_{41} \ne 0$  \\ \hline
 $[--]$ & $m_{A}M^{A}=m_{\dot{A}}M^{\dot{A}}=0$ & $\Gamma_{412}=\Gamma_{421}=0$ \\ \hline
 $[+-]$ & $m_{A}M^{A}=m_{\dot{A}}M^{\dot{A}} \ne 0$ & $\Gamma_{412}=\Gamma_{421} \ne 0$ \\ \hline
 $[-+]$ & $m_{A}M^{A}=-m_{\dot{A}}M^{\dot{A}} \ne 0$ & $\Gamma_{412}=-\Gamma_{421} \ne 0$ \\ \hline
 $[++]$ & $m_{A}M^{A} \pm m_{\dot{A}}M^{\dot{A}} \ne 0$ & $\Gamma_{412} \pm \Gamma_{421} \ne 0$ \\ \hline 
\caption{Types of spaces equipped with one SD and one ASD congruence of null strings.}
\label{Tabela_typy_one_congruence}
\end{longtable}

%\begin{table}[ht]
%\begin{center}
%\setlength\LTleft{0pt}
%\setlength\LTright{0pt}
{  \footnotesize  
\begin{longtable}{|c|c|c|}   \hline
%\begin{longtabu} to \textwidth {|c|c|c|}
 Type / &  \multicolumn{2}{|c|}{Conditions}  \\ \cline{2-3}
 Subtype  &   Spinorial   & Tetradial    \\ \hline 
 \rowcolor{light_gray_1}
 Type $ [ \: \cdot \: ]^{n} \otimes [ \: \cdot \: ]^{nn}$ & $M^{\dot{A}}=N^{A}=M^{A}=0$ & $\Gamma_{42}=\Gamma_{32}=\Gamma_{41}=0$  \\ \hline
$[--,--]$ & - & -  \\ \hline 
\rowcolor{light_gray_1}
 Type $ [ \: \cdot \: ]^{n} \otimes [ \: \cdot \: ]^{ne}$ & $M^{\dot{A}}=M^{A} =0$, $N^{A} \ne 0$ & $\Gamma_{42}=\Gamma_{41} = 0$,        $\Gamma_{32} \ne 0$  \\ \hline
 $[--,--]$ & $m_{A}N^{A}=0$ & $\Gamma_{324}=0$  \\ \hline
 $[--,++]$ & $m_{A}N^{A} \ne 0$ & $\Gamma_{324} \ne 0$  \\ \hline 
 \rowcolor{light_gray_1}
 Type $ [ \: \cdot \: ]^{n} \otimes [ \: \cdot \: ]^{ee}$ & $M^{\dot{A}}= 0$, $M^{A} \ne 0$, $N^{A} \ne 0$ & $\Gamma_{42} = 0$, $\Gamma_{41} \ne 0$, $\Gamma_{32} \ne 0$  \\ \hline
 $[--,--]$ & $m_{A}M^{A}=m_{A}N^{A}=0$ & $\Gamma_{412}=\Gamma_{324}=0$ \\ \hline
 $[--,++]$ & $m_{A}M^{A}=0$, $m_{A}N^{A} \ne 0$ & $\Gamma_{412}=0$, $\Gamma_{324} \ne 0$ \\ \hline
 $[++,++]$ & $m_{A}M^{A} \ne 0$, $m_{A}N^{A} \ne 0$ & $\Gamma_{412} \ne 0$, $\Gamma_{324} \ne 0$ \\ \hline 
 \rowcolor{light_gray_1}
 Type $ [ \: \cdot \:]^{e} \otimes [ \: \cdot \: ]^{nn}$ & $M^{\dot{A}}\ne 0$, $M^{A} =N^{A} = 0$ & $\Gamma_{42} \ne 0$, $\Gamma_{41} = \Gamma_{32} = 0$  \\ \hline
 $[--,++]$ & $m_{\dot{A}}M^{\dot{A}}=0$, $n_{\dot{A}}M^{\dot{A}} \ne 0$ & $\Gamma_{421}=0$, $\Gamma_{423} \ne 0$  \\ \hline
 $[++,++]$ & $m_{\dot{A}}M^{\dot{A}} \ne 0$,  $n_{\dot{A}}M^{\dot{A}} \ne 0$ & $\Gamma_{421} \ne 0$, $\Gamma_{423} \ne 0$  \\ \hline 
 \rowcolor{light_gray_1}
 Type $ [ \: \cdot \: ]^{e} \otimes [ \: \cdot \: ]^{ne}$ & $M^{\dot{A}}\ne 0$, $M^{A} = 0$, $N^{A} \ne 0$ & $\Gamma_{42} \ne 0$, $\Gamma_{41} = 0$, $\Gamma_{32} \ne 0$  \\ \hline
 $[--,+-]$ & $m_{\dot{A}}M^{\dot{A}}=0$, $m_{A} N^{A} = n_{\dot{A}}M^{\dot{A}} \ne 0$ & $\Gamma_{421}=0$, $\Gamma_{324} = \Gamma_{423} \ne 0$  \\ \hline
 $[--,-+]$ & $m_{\dot{A}}M^{\dot{A}}=0$, $m_{A} N^{A} = -n_{\dot{A}}M^{\dot{A}} \ne 0$ & $\Gamma_{421}=0$, $\Gamma_{324} =- \Gamma_{423} \ne 0$  \\ \hline
 $[--,++]$ & $m_{\dot{A}}M^{\dot{A}}=0$, $m_{A} N^{A} \pm n_{\dot{A}}M^{\dot{A}} \ne 0$ & $\Gamma_{421}=0$, $\Gamma_{324} \pm \Gamma_{423} \ne 0$  \\ \hline
 $[++,--]$ & $m_{\dot{A}}M^{\dot{A}} \ne 0$, $m_{A} N^{A} = n_{\dot{A}}M^{\dot{A}} = 0$ & $\Gamma_{421} \ne 0$, $\Gamma_{324} = \Gamma_{423} = 0$  \\ \hline
 $[++,+-]$ & $m_{\dot{A}}M^{\dot{A}} \ne 0$, $m_{A} N^{A} = n_{\dot{A}}M^{\dot{A}} \ne 0$ & $\Gamma_{421} \ne 0$, $\Gamma_{324} = \Gamma_{423} \ne 0$  \\ \hline
 $[++,-+]$ & $m_{\dot{A}}M^{\dot{A}} \ne 0$, $m_{A} N^{A} =- n_{\dot{A}}M^{\dot{A}} \ne 0$ & $\Gamma_{421} \ne 0$, $\Gamma_{324} = -\Gamma_{423} \ne 0$  \\ \hline
 $[++,++]$ & $m_{\dot{A}}M^{\dot{A}} \ne 0$, $m_{A} N^{A} \pm n_{\dot{A}}M^{\dot{A}} \ne 0$ & $\Gamma_{421} \ne 0$, $\Gamma_{324} \pm \Gamma_{423} \ne 0$  \\ \hline 
 \rowcolor{light_gray_1}
 Type $ [ \: \cdot \: ]^{e} \otimes [ \: \cdot \: ]^{ee}$ & $M^{\dot{A}}\ne 0$, $M^{A} \ne 0$, $N^{A} \ne 0$ & $\Gamma_{42} \ne 0$, $\Gamma_{41} \ne 0$, $\Gamma_{32} \ne 0$  \\ \hline
 $[--,+-]$ & $m_{A}M^{A}=m_{\dot{A}}M^{\dot{A}} = 0$, $m_{A} N^{A} = n_{\dot{A}}M^{\dot{A}} \ne 0$ & $\Gamma_{412}=\Gamma_{421}=0$, $\Gamma_{324} = \Gamma_{423} \ne 0$  \\ \hline
 $[--,-+]$ & $m_{A}M^{A}=m_{\dot{A}}M^{\dot{A}} = 0$, $m_{A} N^{A} =- n_{\dot{A}}M^{\dot{A}} \ne 0$ & $\Gamma_{412}=\Gamma_{421}=0$, $\Gamma_{324} = -\Gamma_{423} \ne 0$  \\ \hline
 $[--,++]$ & $m_{A}M^{A}=m_{\dot{A}}M^{\dot{A}} = 0$, $m_{A} N^{A} \pm n_{\dot{A}}M^{\dot{A}} \ne 0$ & $\Gamma_{412}=\Gamma_{421}=0$, $\Gamma_{324} \pm \Gamma_{423} \ne 0$  \\ \hline
 $[+-,+-]$ & $m_{A}M^{A}=m_{\dot{A}}M^{\dot{A}} \ne 0$, $m_{A} N^{A} = n_{\dot{A}}M^{\dot{A}} \ne 0$ & $\Gamma_{412}=\Gamma_{421} \ne 0$, $\Gamma_{324} = \Gamma_{423} \ne 0$  \\ \hline
 $[+-,-+]$ & $m_{A}M^{A}=m_{\dot{A}}M^{\dot{A}} \ne 0$, $m_{A} N^{A} =- n_{\dot{A}}M^{\dot{A}} \ne 0$ & $\Gamma_{412}=\Gamma_{421} \ne 0$, $\Gamma_{324} = -\Gamma_{423} \ne 0$  \\ \hline
 $[-+,-+]$ & $m_{A}M^{A}=-m_{\dot{A}}M^{\dot{A}} \ne 0$, $m_{A} N^{A} = -n_{\dot{A}}M^{\dot{A}} \ne 0$ & $\Gamma_{412}=-\Gamma_{421} \ne 0$, $\Gamma_{324} = -\Gamma_{423} \ne 0$  \\ \hline
 $[++,+-]$ & $m_{A}M^{A} \pm m_{\dot{A}}M^{\dot{A}} \ne 0$, $m_{A} N^{A} = n_{\dot{A}}M^{\dot{A}} \ne 0$ & $\Gamma_{412} \pm \Gamma_{421} \ne 0$, $\Gamma_{324} = \Gamma_{423} \ne 0$  \\ \hline
 $[++,-+]$ & $m_{A}M^{A} \pm m_{\dot{A}}M^{\dot{A}} \ne 0$, $m_{A} N^{A} = -n_{\dot{A}}M^{\dot{A}} \ne 0$ & $\Gamma_{412} \pm \Gamma_{421} \ne 0$, $\Gamma_{324} = -\Gamma_{423} \ne 0$  \\ \hline
 $[++,++]$ & $m_{A}M^{A} \pm m_{\dot{A}}M^{\dot{A}} \ne 0$, $m_{A} N^{A} \pm n_{\dot{A}}M^{\dot{A}} \ne 0$ & $\Gamma_{412} \pm \Gamma_{421} \ne 0$, $\Gamma_{324} \pm \Gamma_{423} \ne 0$  \\ \hline 
\caption{Types of spaces equipped with one SD and two ASD congruences of null strings.}
\label{Tabela_typy_two_congruence}
\end{longtable}
}
%\end{center}
%\end{table}

%\begin{table}[ht]
%\begin{center}
%\setlength\LTleft{0pt}
%\setlength\LTright{0pt}
{ \footnotesize
\begin{longtable}{|c|c|c|}   \hline
%\begin{longtabu} to \textwidth {|c|c|c|}
 Type / &  \multicolumn{2}{|c|}{Conditions}  \\ \cline{2-3}
 Subtype  &   Spinorial   & Tetradial    \\ \hline 
  \rowcolor{light_gray_1}
 Type $ [ \: \cdot \: ]^{nn} \otimes [ \: \cdot \: ]^{nn}$ & $M^{\dot{A}}=N^{\dot{A}}=N^{A}=M^{A}=0$ & $\Gamma_{42}=\Gamma_{31}=\Gamma_{32}=\Gamma_{41}=0$  \\ \hline
$[--,--,--,--]$ & - & -  \\ \hline 
 \rowcolor{light_gray_1}
 Type $ [ \: \cdot \: ]^{nn} \otimes [ \: \cdot \: ]^{ne}$ & $M^{\dot{A}}=N^{\dot{A}}=M^{A} =0$, $N^{A} \ne 0$ & $\Gamma_{42}=\Gamma_{31}=\Gamma_{41} = 0$,        $\Gamma_{32} \ne 0$  \\ \hline
 $[--,++,--,++]$ & $m_{A}N^{A} \ne 0$, $n_{A} N^{A} \ne 0$ & $\Gamma_{324} \ne 0$, $\Gamma_{321} \ne 0$  \\ \hline
 $[--,--,--,++]$ & $m_{A}N^{A} = 0$, $n_{A} N^{A} \ne 0$ & $\Gamma_{324} = 0$, $\Gamma_{321} \ne 0$  \\ \hline
  \rowcolor{light_gray_1}
Type $ [ \: \cdot \: ]^{nn} \otimes [ \: \cdot \: ]^{ee}$ & $M^{\dot{A}}=N^{\dot{A}}=0$, $M^{A} \ne 0$, $N^{A} \ne 0$ & $\Gamma_{42}=\Gamma_{31}=0$, $\Gamma_{41} \ne 0$,        $\Gamma_{32} \ne 0$  \\ \hline
 $[--,--,++,++]$ & $m_{A} M^{A}=0$, $m_{A}N^{A} = 0$, $n_{A} M^{A} \ne 0$, $n_{A} N^{A} \ne 0$ & $\Gamma_{412}=0$, $\Gamma_{324}=0$, $\Gamma_{413} \ne 0$, $\Gamma_{321} \ne 0$  \\ \hline
 $[--,++,++,--]$ & $m_{A} M^{A}=0$, $m_{A}N^{A} \ne 0$, $n_{A} M^{A} \ne 0$, $n_{A} N^{A}=0$ & $\Gamma_{412}=0$, $\Gamma_{324} \ne 0$, $\Gamma_{413} \ne 0$, $\Gamma_{321} = 0$  \\ \hline
 $[--,++,++,++]$ & $m_{A} M^{A}=0$, $m_{A}N^{A} \ne 0$, $n_{A} M^{A} \ne 0$, $n_{A} N^{A} \ne 0$ & $\Gamma_{412}=0$, $\Gamma_{324} \ne 0$, $\Gamma_{413} \ne 0$, $\Gamma_{321} \ne 0$  \\ \hline
 $[++,++,++,++]$ & $m_{A} M^{A} \ne 0$, $m_{A}N^{A} \ne 0$, $n_{A} M^{A} \ne 0$, $n_{A} N^{A} \ne 0$ & $\Gamma_{412} \ne 0$, $\Gamma_{324} \ne 0$, $\Gamma_{413} \ne 0$, $\Gamma_{321} \ne 0$  \\ \hline 
 \rowcolor{light_gray_1}
 Type $ [ \: \cdot \: ]^{ne} \otimes [ \: \cdot \: ]^{ne}$ & $M^{\dot{A}}=0$, $N^{\dot{A}} \ne 0$, $M^{A} = 0$, $N^{A} \ne 0$ & $\Gamma_{42}=\Gamma_{41}=0$, $\Gamma_{31} \ne 0$,        $\Gamma_{32} \ne 0$  \\ \hline
 $[--,--,--,-+]$ & $m_{A} N^{A}=0$, $m_{\dot{A}}N^{\dot{A}} = 0$, $n_{A} N^{A}= -n_{\dot{A}} N^{\dot{A}} \ne 0$ & $\Gamma_{324}=0$, $\Gamma_{314}=0$, $\Gamma_{321} = -\Gamma_{312} \ne 0$  \\ \hline
 $[--,--,--,+-]$ & $m_{A} N^{A}=0$, $m_{\dot{A}}N^{\dot{A}} = 0$, $n_{A} N^{A}= n_{\dot{A}} N^{\dot{A}} \ne 0$ & $\Gamma_{324}=0$, $\Gamma_{314}=0$, $\Gamma_{321} = \Gamma_{312} \ne 0$  \\ \hline
 $[--,--,--,++]$ & $m_{A} N^{A}=0$, $m_{\dot{A}}N^{\dot{A}} = 0$, $n_{A} N^{A} \pm n_{\dot{A}} N^{\dot{A}} \ne 0$ & $\Gamma_{324}=0$, $\Gamma_{314}=0$, $\Gamma_{321} \pm \Gamma_{312} \ne 0$  \\ \hline
 $[--,--,++,-+]$ & $m_{A} N^{A}=0$, $m_{\dot{A}}N^{\dot{A}} \ne 0$, $n_{A} N^{A}= -n_{\dot{A}} N^{\dot{A}} \ne 0$ & $\Gamma_{324}=0$, $\Gamma_{314} \ne 0$, $\Gamma_{321} = -\Gamma_{312} \ne 0$  \\ \hline
 $[--,--,++,+-]$ & $m_{A} N^{A}=0$, $m_{\dot{A}}N^{\dot{A}} \ne 0$, $n_{A} N^{A}= n_{\dot{A}} N^{\dot{A}} \ne 0$ & $\Gamma_{324}=0$, $\Gamma_{314} \ne 0$, $\Gamma_{321} = \Gamma_{312} \ne 0$  \\ \hline
 $[--,--,++,++]$ & $m_{A} N^{A}=0$, $m_{\dot{A}}N^{\dot{A}} \ne 0$, $n_{A} N^{A} \pm n_{\dot{A}} N^{\dot{A}} \ne 0$ & $\Gamma_{324}=0$, $\Gamma_{314} \ne 0$, $\Gamma_{321} \pm \Gamma_{312} \ne 0$  \\ \hline
 $[--,++,++,--]$ & $m_{A} N^{A} \ne 0$, $m_{\dot{A}}N^{\dot{A}} \ne 0$, $n_{A} N^{A} =n_{\dot{A}} N^{\dot{A}} = 0$ & $\Gamma_{324} \ne 0$, $\Gamma_{314} \ne 0$, $\Gamma_{321} = \Gamma_{312} \ne 0$  \\ \hline
  $[--,++,++,-+]$ & $m_{A} N^{A} \ne 0$, $m_{\dot{A}}N^{\dot{A}} \ne 0$, $n_{A} N^{A}= -n_{\dot{A}} N^{\dot{A}} \ne 0$ & $\Gamma_{324} \ne 0$, $\Gamma_{314} \ne 0$, $\Gamma_{321} = -\Gamma_{312} \ne 0$  \\ \hline
 $[--,++,++,+-]$ & $m_{A} N^{A} \ne 0$, $m_{\dot{A}}N^{\dot{A}} \ne 0$, $n_{A} N^{A}= n_{\dot{A}} N^{\dot{A}} \ne 0$ & $\Gamma_{324} \ne 0$, $\Gamma_{314} \ne 0$, $\Gamma_{321} = \Gamma_{312} \ne 0$  \\ \hline
 $[--,++,++,++]$ & $m_{A} N^{A} \ne 0$, $m_{\dot{A}}N^{\dot{A}} \ne 0$, $n_{A} N^{A} \pm n_{\dot{A}} N^{\dot{A}} \ne 0$ & $\Gamma_{324} \ne 0$, $\Gamma_{314} \ne 0$, $\Gamma_{321} \pm \Gamma_{312} \ne 0$  \\ \hline 
 \rowcolor{light_gray_1}
 Type $ [ \: \cdot \: ]^{ne} \otimes [ \: \cdot \: ]^{ee}$ & $M^{\dot{A}}=0$, $N^{\dot{A}} \ne 0$, $M^{A} \ne 0$, $N^{A} \ne 0$ & $\Gamma_{42}=0$, $\Gamma_{41} \ne 0$, $\Gamma_{31} \ne 0$,        $\Gamma_{32} \ne 0$  \\ \hline
 \rowcolor{light_gray_2}
for all types below  & $m_{A} M^{A} = m_{A}N^{A} = 0$, & $\Gamma_{412} = \Gamma_{324}= 0$, \\ \hline
$[--,--,-+,-+]$  & $n_{A}M^{A}=-m_{\dot{A}} N^{\dot{A}} \ne 0$, $n_{A} N^{A} = - n_{\dot{A}} N^{\dot{A}} \ne 0$                 & $\Gamma_{413} = -\Gamma_{314} \ne 0$, $\Gamma_{321} = -\Gamma_{312} \ne 0$   \\ \hline             
$[--,--,-+,+-]$  & $n_{A}M^{A}=-m_{\dot{A}} N^{\dot{A}} \ne 0$, $n_{A} N^{A} =  n_{\dot{A}} N^{\dot{A}} \ne 0$ 
                 & $\Gamma_{413} = -\Gamma_{314} \ne 0$, $\Gamma_{321} = \Gamma_{312} \ne 0$     \\ \hline
$[--,--,-+,++]$  & $n_{A}M^{A}=-m_{\dot{A}} N^{\dot{A}} \ne 0$, $n_{A} N^{A} \pm n_{\dot{A}} N^{\dot{A}} \ne 0$      
                 & $\Gamma_{413} = -\Gamma_{314} \ne 0$, $\Gamma_{321} \pm \Gamma_{312} \ne 0$  \\ \hline     
$[--,--,+-,+-]$  & $n_{A}M^{A}=m_{\dot{A}} N^{\dot{A}} \ne 0$, $n_{A} N^{A} = n_{\dot{A}} N^{\dot{A}} \ne 0$ 
                 & $\Gamma_{413} = \Gamma_{314} \ne 0$, $\Gamma_{321} = \Gamma_{312} \ne 0$  \\ \hline 
$[--,--,+-,++]$  & $n_{A}M^{A}=m_{\dot{A}} N^{\dot{A}} \ne 0$, $n_{A} N^{A} \pm n_{\dot{A}} N^{\dot{A}} \ne 0$ 
                 &  $\Gamma_{413} = \Gamma_{314} \ne 0$, $\Gamma_{321} \pm \Gamma_{312} \ne 0$  \\ \hline $[--,--,++,++]$  & $n_{A}M^{A} \pm m_{\dot{A}} N^{\dot{A}} \ne 0$, $n_{A} N^{A} \pm n_{\dot{A}} N^{\dot{A}} \ne 0$ 
                 & $\Gamma_{413} \pm \Gamma_{314} \ne 0$, $\Gamma_{321} \pm \Gamma_{312} \ne 0$  \\ \hline    
 \rowcolor{light_gray_2}
for all types below  &  $m_{A} M^{A} = 0$, $m_{A}N^{A} \ne 0$,  & $\Gamma_{412} = 0$, $\Gamma_{324} \ne 0$, \\ \hline
$[--,++,-+,--]$ & $n_{A}M^{A} =- m_{\dot{A}} N^{\dot{A}} \ne 0$, $n_{A} N^{A} = n_{\dot{A}} N^{\dot{A}} = 0$ 
                & $\Gamma_{413} =- \Gamma_{314} \ne 0$, $\Gamma_{321} = \Gamma_{312} = 0$  \\ \hline  
$[--,++,-+,-+]$ & $n_{A}M^{A} =- m_{\dot{A}} N^{\dot{A}} \ne 0$, $n_{A} N^{A} = -n_{\dot{A}} N^{\dot{A}} \ne 0$ 
                & $\Gamma_{413} =- \Gamma_{314} \ne 0$, $\Gamma_{321} = -\Gamma_{312} \ne 0$  \\ \hline     
$[--,++,-+,+-]$ & $n_{A}M^{A} =- m_{\dot{A}} N^{\dot{A}} \ne 0$, $n_{A} N^{A} = n_{\dot{A}} N^{\dot{A}} \ne 0$
                & $\Gamma_{413} =- \Gamma_{314} \ne 0$, $\Gamma_{321} = \Gamma_{312} \ne 0$  \\ \hline  
$[--,++,-+,++]$ & $n_{A}M^{A} =- m_{\dot{A}} N^{\dot{A}} \ne 0$, $n_{A} N^{A} \pm n_{\dot{A}} N^{\dot{A}} \ne 0$
                & $\Gamma_{413} =- \Gamma_{314} \ne 0$, $\Gamma_{321} \pm \Gamma_{312} \ne 0$  \\ \hline     
$[--,++,+-,--]$ & $n_{A}M^{A} = m_{\dot{A}} N^{\dot{A}} \ne 0$, $n_{A} N^{A} = n_{\dot{A}} N^{\dot{A}} = 0$ 
                & $\Gamma_{413} = \Gamma_{314} \ne 0$, $\Gamma_{321} = \Gamma_{312} = 0$  \\ \hline  
$[--,++,+-,-+]$ & $n_{A}M^{A} = m_{\dot{A}} N^{\dot{A}} \ne 0$, $n_{A} N^{A} = -n_{\dot{A}} N^{\dot{A}} \ne 0$ 
                & $\Gamma_{413} = \Gamma_{314} \ne 0$, $\Gamma_{321} = -\Gamma_{312} \ne 0$  \\ \hline     
$[--,++,+-,+-]$ & $n_{A}M^{A} = m_{\dot{A}} N^{\dot{A}} \ne 0$, $n_{A} N^{A} = n_{\dot{A}} N^{\dot{A}} \ne 0$
                & $\Gamma_{413} = \Gamma_{314} \ne 0$, $\Gamma_{321} = \Gamma_{312} \ne 0$  \\ \hline  
$[--,++,+-,++]$ & $n_{A}M^{A} = m_{\dot{A}} N^{\dot{A}} \ne 0$, $n_{A} N^{A} \pm n_{\dot{A}} N^{\dot{A}} \ne 0$
                & $\Gamma_{413} = \Gamma_{314} \ne 0$, $\Gamma_{321} \pm \Gamma_{312} \ne 0$  \\ \hline   
$[--,++,++,--]$ & $n_{A}M^{A} \pm m_{\dot{A}} N^{\dot{A}} \ne 0$, $n_{A} N^{A} = n_{\dot{A}} N^{\dot{A}} = 0$ 
                & $\Gamma_{413} \pm \Gamma_{314} \ne 0$, $\Gamma_{321} = \Gamma_{312} = 0$  \\ \hline  
$[--,++,++,-+]$ & $n_{A}M^{A} \pm m_{\dot{A}} N^{\dot{A}} \ne 0$, $n_{A} N^{A} = -n_{\dot{A}} N^{\dot{A}} \ne 0$
                & $\Gamma_{413} \pm \Gamma_{314} \ne 0$, $\Gamma_{321} = -\Gamma_{312} \ne 0$  \\ \hline     
$[--,++,++,+-]$ & $n_{A}M^{A} \pm m_{\dot{A}} N^{\dot{A}} \ne 0$, $n_{A} N^{A} = n_{\dot{A}} N^{\dot{A}} \ne 0$ 
                & $\Gamma_{413} \pm \Gamma_{314} \ne 0$, $\Gamma_{321} = \Gamma_{312} \ne 0$  \\ \hline  
$[--,++,++,++]$ & $n_{A}M^{A} \pm m_{\dot{A}} N^{\dot{A}} \ne 0$, $n_{A} N^{A} \pm n_{\dot{A}} N^{\dot{A}} \ne 0$
                & $\Gamma_{413} \pm \Gamma_{314} \ne 0$, $\Gamma_{321} \pm \Gamma_{312} \ne 0$  \\ \hline  
\rowcolor{light_gray_2}
for all types below  & $m_{A} M^{A} \ne 0$, $m_{A}N^{A} \ne 0$, & $\Gamma_{412} \ne 0$, $\Gamma_{324} \ne 0$, \\ \hline
$[++,++,--,-+]$ &  $n_{A}M^{A} = m_{\dot{A}} N^{\dot{A}} = 0$, $n_{A} N^{A} = - n_{\dot{A}} N^{\dot{A}} \ne 0$                &  $\Gamma_{413} = \Gamma_{314} = 0$, $\Gamma_{321} = -\Gamma_{312} \ne 0$  \\ \hline      
$[++,++,--,+-]$ &  $n_{A}M^{A} = m_{\dot{A}} N^{\dot{A}} = 0$, $n_{A} N^{A} =  n_{\dot{A}} N^{\dot{A}} \ne 0$                &  $\Gamma_{413} = \Gamma_{314} = 0$, $\Gamma_{321} = \Gamma_{312} \ne 0$  \\ \hline          
$[++,++,--,++]$ &  $n_{A}M^{A} = m_{\dot{A}} N^{\dot{A}} = 0$, $n_{A} N^{A} \pm n_{\dot{A}} N^{\dot{A}} \ne 0$                &  $\Gamma_{413} = \Gamma_{314} = 0$, $\Gamma_{321} \pm \Gamma_{312} \ne 0$  \\ \hline
$[++,++,-+,-+]$ &  $n_{A}M^{A} = -m_{\dot{A}} N^{\dot{A}} \ne 0$, $n_{A} N^{A} = - n_{\dot{A}} N^{\dot{A}} \ne 0$                &  $\Gamma_{413} = -\Gamma_{314} \ne 0$, $\Gamma_{321} = -\Gamma_{312} \ne 0$  \\ \hline      
$[++,++,-+,+-]$ &  $n_{A}M^{A} = -m_{\dot{A}} N^{\dot{A}} \ne 0$, $n_{A} N^{A} =  n_{\dot{A}} N^{\dot{A}} \ne 0$                &  $\Gamma_{413} = -\Gamma_{314} \ne 0$, $\Gamma_{321} = \Gamma_{312} \ne 0$  \\ \hline          
$[++,++,-+,++]$ &  $n_{A}M^{A} = -m_{\dot{A}} N^{\dot{A}} \ne 0$, $n_{A} N^{A} \pm n_{\dot{A}} N^{\dot{A}} \ne 0$                &  $\Gamma_{413} = -\Gamma_{314} \ne 0$, $\Gamma_{321} \pm \Gamma_{312} \ne 0$  \\ \hline             
$[++,++,+-,+-]$ &  $n_{A}M^{A} = m_{\dot{A}} N^{\dot{A}} \ne 0$, $n_{A} N^{A} =  n_{\dot{A}} N^{\dot{A}} \ne 0$                &  $\Gamma_{413} = \Gamma_{314} \ne 0$, $\Gamma_{321} = \Gamma_{312} \ne 0$  \\ \hline      
$[++,++,+-,++]$ &  $n_{A}M^{A} = m_{\dot{A}} N^{\dot{A}} \ne 0$, $n_{A} N^{A} \pm  n_{\dot{A}} N^{\dot{A}} \ne 0$                &  $\Gamma_{413} = \Gamma_{314} \ne 0$, $\Gamma_{321} \pm \Gamma_{312} \ne 0$  \\ \hline          
$[++,++,++,++]$ &  $n_{A}M^{A} \pm m_{\dot{A}} N^{\dot{A}} \ne 0$, $n_{A} N^{A} \pm n_{\dot{A}} N^{\dot{A}} \ne 0$                &  $\Gamma_{413} \pm \Gamma_{314} \ne 0$, $\Gamma_{321} \pm \Gamma_{312} \ne 0$  \\ \hline
 \rowcolor{light_gray_1}
 Type $ [ \: \cdot \: ]^{ee} \otimes [ \: \cdot \: ]^{ee}$ & $M^{\dot{A}}\ne0$, $N^{\dot{A}} \ne 0$, $M^{A} \ne 0$, $N^{A} \ne 0$ & $\Gamma_{42}\ne0$, $\Gamma_{41} \ne 0$, $\Gamma_{31} \ne 0$,        $\Gamma_{32} \ne 0$  \\ \hline 
\rowcolor{light_gray_2}
for all types below  & $m_{A} M^{A} = m_{\dot{A}}M^{\dot{A}} =n_{A} N^{A} = n_{\dot{A}} N^{\dot{A}}= 0$, &
 $\Gamma_{412} = \Gamma_{421}=\Gamma_{321}=\Gamma_{312}= 0$, \\ \hline
$[--,-+,-+,--]$ 
            & $m_{A}N^{A} =- n_{\dot{A}} M^{\dot{A}} \ne 0$, $n_{A} M^{A} = - m_{\dot{A}} N^{\dot{A}} \ne 0$        
            & $\Gamma_{324} = -\Gamma_{423} \ne 0$, $\Gamma_{413} = -\Gamma_{314} \ne 0$  \\ \hline    
$[--,-+,+-,--]$ 
            & $m_{A}N^{A} =- n_{\dot{A}} M^{\dot{A}} \ne 0$, $n_{A} M^{A} =  m_{\dot{A}} N^{\dot{A}} \ne 0$        
            & $\Gamma_{324} = -\Gamma_{423} \ne 0$, $\Gamma_{413} = \Gamma_{314} \ne 0$  \\ \hline    
$[--,-+,++,--]$ 
            & $m_{A}N^{A} =- n_{\dot{A}} M^{\dot{A}} \ne 0$, $n_{A} M^{A} \pm m_{\dot{A}} N^{\dot{A}} \ne 0$        
            & $\Gamma_{324} = -\Gamma_{423} \ne 0$, $\Gamma_{413} \pm \Gamma_{314} \ne 0$  \\ \hline    
$[--,+-,+-,--]$ 
            & $m_{A}N^{A} = n_{\dot{A}} M^{\dot{A}} \ne 0$, $n_{A} M^{A} =  m_{\dot{A}} N^{\dot{A}} \ne 0$        
            & $\Gamma_{324} = \Gamma_{423} \ne 0$, $\Gamma_{413} = \Gamma_{314} \ne 0$  \\ \hline 
$[--,+-,++,--]$ 
            & $m_{A}N^{A} = n_{\dot{A}} M^{\dot{A}} \ne 0$, $n_{A} M^{A} \pm  m_{\dot{A}} N^{\dot{A}} \ne 0$        
            & $\Gamma_{324} = \Gamma_{423} \ne 0$, $\Gamma_{413} \pm \Gamma_{314} \ne 0$  \\ \hline 
$[--,++,++,--]$ 
            & $m_{A}N^{A} \pm n_{\dot{A}} M^{\dot{A}} \ne 0$, $n_{A} M^{A} \pm  m_{\dot{A}} N^{\dot{A}} \ne 0$        
            & $\Gamma_{324} \pm \Gamma_{423} \ne 0$, $\Gamma_{413} \pm \Gamma_{314} \ne 0$  \\ \hline   
\rowcolor{light_gray_2}
for all types below  & $m_{A} M^{A} = m_{\dot{A}}M^{\dot{A}} =0$, $n_{A} N^{A} = -n_{\dot{A}} N^{\dot{A}} \ne 0$, &
 $\Gamma_{412} = \Gamma_{421}=0$, $\Gamma_{321}=-\Gamma_{312} \ne 0$, \\ \hline
$[--,-+,-+,-+]$ 
            & $m_{A}N^{A} =- n_{\dot{A}} M^{\dot{A}} \ne 0$, $n_{A} M^{A} = - m_{\dot{A}} N^{\dot{A}} \ne 0$        
            & $\Gamma_{324} = -\Gamma_{423} \ne 0$, $\Gamma_{413} = -\Gamma_{314} \ne 0$  \\ \hline    
$[--,-+,+-,-+]$ 
            & $m_{A}N^{A} =- n_{\dot{A}} M^{\dot{A}} \ne 0$, $n_{A} M^{A} =  m_{\dot{A}} N^{\dot{A}} \ne 0$        
            & $\Gamma_{324} = -\Gamma_{423} \ne 0$, $\Gamma_{413} = \Gamma_{314} \ne 0$  \\ \hline    
$[--,-+,++,-+]$ 
            & $m_{A}N^{A} =- n_{\dot{A}} M^{\dot{A}} \ne 0$, $n_{A} M^{A} \pm m_{\dot{A}} N^{\dot{A}} \ne 0$        
            & $\Gamma_{324} = -\Gamma_{423} \ne 0$, $\Gamma_{413} \pm \Gamma_{314} \ne 0$  \\ \hline    
$[--,+-,+-,-+]$ 
            & $m_{A}N^{A} = n_{\dot{A}} M^{\dot{A}} \ne 0$, $n_{A} M^{A} =  m_{\dot{A}} N^{\dot{A}} \ne 0$        
            & $\Gamma_{324} = \Gamma_{423} \ne 0$, $\Gamma_{413} = \Gamma_{314} \ne 0$  \\ \hline 
$[--,+-,++,-+]$ 
            & $m_{A}N^{A} = n_{\dot{A}} M^{\dot{A}} \ne 0$, $n_{A} M^{A} \pm  m_{\dot{A}} N^{\dot{A}} \ne 0$        
            & $\Gamma_{324} = \Gamma_{423} \ne 0$, $\Gamma_{413} \pm \Gamma_{314} \ne 0$  \\ \hline 
$[--,++,++,-+]$ 
            & $m_{A}N^{A} \pm n_{\dot{A}} M^{\dot{A}} \ne 0$, $n_{A} M^{A} \pm  m_{\dot{A}} N^{\dot{A}} \ne 0$        
            & $\Gamma_{324} \pm \Gamma_{423} \ne 0$, $\Gamma_{413} \pm \Gamma_{314} \ne 0$  \\ \hline    
\rowcolor{light_gray_2}
for all types below  & $m_{A} M^{A} = m_{\dot{A}}M^{\dot{A}} =0$, $n_{A} N^{A} = n_{\dot{A}} N^{\dot{A}} \ne 0$, &
 $\Gamma_{412} = \Gamma_{421}=0$, $\Gamma_{321}=\Gamma_{312} \ne 0$, \\ \hline
$[--,-+,-+,+-]$ 
            & $m_{A}N^{A} =- n_{\dot{A}} M^{\dot{A}} \ne 0$, $n_{A} M^{A} = - m_{\dot{A}} N^{\dot{A}} \ne 0$        
            & $\Gamma_{324} = -\Gamma_{423} \ne 0$, $\Gamma_{413} = -\Gamma_{314} \ne 0$  \\ \hline    
$[--,-+,+-,+-]$ 
            & $m_{A}N^{A} =- n_{\dot{A}} M^{\dot{A}} \ne 0$, $n_{A} M^{A} =  m_{\dot{A}} N^{\dot{A}} \ne 0$        
            & $\Gamma_{324} = -\Gamma_{423} \ne 0$, $\Gamma_{413} = \Gamma_{314} \ne 0$  \\ \hline    
$[--,-+,++,+-]$ 
            & $m_{A}N^{A} =- n_{\dot{A}} M^{\dot{A}} \ne 0$, $n_{A} M^{A} \pm m_{\dot{A}} N^{\dot{A}} \ne 0$        
            & $\Gamma_{324} = -\Gamma_{423} \ne 0$, $\Gamma_{413} \pm \Gamma_{314} \ne 0$  \\ \hline    
$[--,+-,+-,+-]$ 
            & $m_{A}N^{A} = n_{\dot{A}} M^{\dot{A}} \ne 0$, $n_{A} M^{A} =  m_{\dot{A}} N^{\dot{A}} \ne 0$        
            & $\Gamma_{324} = \Gamma_{423} \ne 0$, $\Gamma_{413} = \Gamma_{314} \ne 0$  \\ \hline 
$[--,+-,++,+-]$ 
            & $m_{A}N^{A} = n_{\dot{A}} M^{\dot{A}} \ne 0$, $n_{A} M^{A} \pm  m_{\dot{A}} N^{\dot{A}} \ne 0$        
            & $\Gamma_{324} = \Gamma_{423} \ne 0$, $\Gamma_{413} \pm \Gamma_{314} \ne 0$  \\ \hline 
$[--,++,++,+-]$ 
            & $m_{A}N^{A} \pm n_{\dot{A}} M^{\dot{A}} \ne 0$, $n_{A} M^{A} \pm  m_{\dot{A}} N^{\dot{A}} \ne 0$        
            & $\Gamma_{324} \pm \Gamma_{423} \ne 0$, $\Gamma_{413} \pm \Gamma_{314} \ne 0$  \\ \hline    
\rowcolor{light_gray_2}
for all types below  & $m_{A} M^{A} = m_{\dot{A}}M^{\dot{A}} =0$, $n_{A} N^{A} \pm n_{\dot{A}} N^{\dot{A}} \ne 0$, &
 $\Gamma_{412} = \Gamma_{421}=0$, $\Gamma_{321} \pm \Gamma_{312} \ne 0$, \\ \hline
$[--,-+,-+,++]$ 
            & $m_{A}N^{A} =- n_{\dot{A}} M^{\dot{A}} \ne 0$, $n_{A} M^{A} = - m_{\dot{A}} N^{\dot{A}} \ne 0$        
            & $\Gamma_{324} = -\Gamma_{423} \ne 0$, $\Gamma_{413} = -\Gamma_{314} \ne 0$  \\ \hline    
$[--,-+,+-,++]$ 
            & $m_{A}N^{A} =- n_{\dot{A}} M^{\dot{A}} \ne 0$, $n_{A} M^{A} =  m_{\dot{A}} N^{\dot{A}} \ne 0$        
            & $\Gamma_{324} = -\Gamma_{423} \ne 0$, $\Gamma_{413} = \Gamma_{314} \ne 0$  \\ \hline    
$[--,-+,++,++]$ 
            & $m_{A}N^{A} =- n_{\dot{A}} M^{\dot{A}} \ne 0$, $n_{A} M^{A} \pm m_{\dot{A}} N^{\dot{A}} \ne 0$        
            & $\Gamma_{324} = -\Gamma_{423} \ne 0$, $\Gamma_{413} \pm \Gamma_{314} \ne 0$  \\ \hline    
$[--,+-,+-,++]$ 
            & $m_{A}N^{A} = n_{\dot{A}} M^{\dot{A}} \ne 0$, $n_{A} M^{A} =  m_{\dot{A}} N^{\dot{A}} \ne 0$        
            & $\Gamma_{324} = \Gamma_{423} \ne 0$, $\Gamma_{413} = \Gamma_{314} \ne 0$  \\ \hline 
$[--,+-,++,++]$ 
            & $m_{A}N^{A} = n_{\dot{A}} M^{\dot{A}} \ne 0$, $n_{A} M^{A} \pm  m_{\dot{A}} N^{\dot{A}} \ne 0$        
            & $\Gamma_{324} = \Gamma_{423} \ne 0$, $\Gamma_{413} \pm \Gamma_{314} \ne 0$  \\ \hline 
$[--,++,++,++]$ 
            & $m_{A}N^{A} \pm n_{\dot{A}} M^{\dot{A}} \ne 0$, $n_{A} M^{A} \pm  m_{\dot{A}} N^{\dot{A}} \ne 0$        
            & $\Gamma_{324} \pm \Gamma_{423} \ne 0$, $\Gamma_{413} \pm \Gamma_{314} \ne 0$  \\ \hline  
\rowcolor{light_gray_2}
for all types below  & $m_{A} M^{A} = -m_{\dot{A}}M^{\dot{A}} \ne 0$, $n_{A} N^{A} = -n_{\dot{A}} N^{\dot{A}} \ne 0$, &
 $\Gamma_{412} = -\Gamma_{421} \ne 0$, $\Gamma_{321}=-\Gamma_{312} \ne 0$, \\ \hline
$[-+,-+,-+,-+]$ 
            & $m_{A}N^{A} =- n_{\dot{A}} M^{\dot{A}} \ne 0$, $n_{A} M^{A} = - m_{\dot{A}} N^{\dot{A}} \ne 0$        
            & $\Gamma_{324} = -\Gamma_{423} \ne 0$, $\Gamma_{413} = -\Gamma_{314} \ne 0$  \\ \hline    
$[-+,-+,+-,-+]$ 
            & $m_{A}N^{A} =- n_{\dot{A}} M^{\dot{A}} \ne 0$, $n_{A} M^{A} =  m_{\dot{A}} N^{\dot{A}} \ne 0$        
            & $\Gamma_{324} = -\Gamma_{423} \ne 0$, $\Gamma_{413} = \Gamma_{314} \ne 0$  \\ \hline    
$[-+,-+,++,-+]$ 
            & $m_{A}N^{A} =- n_{\dot{A}} M^{\dot{A}} \ne 0$, $n_{A} M^{A} \pm m_{\dot{A}} N^{\dot{A}} \ne 0$        
            & $\Gamma_{324} = -\Gamma_{423} \ne 0$, $\Gamma_{413} \pm \Gamma_{314} \ne 0$  \\ \hline    
$[-+,+-,+-,-+]$ 
            & $m_{A}N^{A} = n_{\dot{A}} M^{\dot{A}} \ne 0$, $n_{A} M^{A} =  m_{\dot{A}} N^{\dot{A}} \ne 0$        
            & $\Gamma_{324} = \Gamma_{423} \ne 0$, $\Gamma_{413} = \Gamma_{314} \ne 0$  \\ \hline 
$[-+,+-,++,-+]$ 
            & $m_{A}N^{A} = n_{\dot{A}} M^{\dot{A}} \ne 0$, $n_{A} M^{A} \pm  m_{\dot{A}} N^{\dot{A}} \ne 0$        
            & $\Gamma_{324} = \Gamma_{423} \ne 0$, $\Gamma_{413} \pm \Gamma_{314} \ne 0$  \\ \hline 
$[-+,++,++,-+]$ 
            & $m_{A}N^{A} \pm n_{\dot{A}} M^{\dot{A}} \ne 0$, $n_{A} M^{A} \pm  m_{\dot{A}} N^{\dot{A}} \ne 0$        
            & $\Gamma_{324} \pm \Gamma_{423} \ne 0$, $\Gamma_{413} \pm \Gamma_{314} \ne 0$  \\ \hline  
\rowcolor{light_gray_2}
for all types below  & $m_{A} M^{A} = -m_{\dot{A}}M^{\dot{A}} \ne 0$, $n_{A} N^{A} = n_{\dot{A}} N^{\dot{A}} \ne 0$, &
 $\Gamma_{412} = -\Gamma_{421} \ne 0$, $\Gamma_{321}=\Gamma_{312} \ne 0$, \\ \hline
$[-+,-+,+-,+-]$ 
            & $m_{A}N^{A} =- n_{\dot{A}} M^{\dot{A}} \ne 0$, $n_{A} M^{A} =  m_{\dot{A}} N^{\dot{A}} \ne 0$        
            & $\Gamma_{324} = -\Gamma_{423} \ne 0$, $\Gamma_{413} = \Gamma_{314} \ne 0$  \\ \hline    
$[-+,-+,++,+-]$ 
            & $m_{A}N^{A} =- n_{\dot{A}} M^{\dot{A}} \ne 0$, $n_{A} M^{A} \pm m_{\dot{A}} N^{\dot{A}} \ne 0$        
            & $\Gamma_{324} = -\Gamma_{423} \ne 0$, $\Gamma_{413} \pm \Gamma_{314} \ne 0$  \\ \hline    
$[-+,+-,+-,+-]$ 
            & $m_{A}N^{A} = n_{\dot{A}} M^{\dot{A}} \ne 0$, $n_{A} M^{A} =  m_{\dot{A}} N^{\dot{A}} \ne 0$        
            & $\Gamma_{324} = \Gamma_{423} \ne 0$, $\Gamma_{413} = \Gamma_{314} \ne 0$  \\ \hline 
$[-+,+-,++,+-]$ 
            & $m_{A}N^{A} = n_{\dot{A}} M^{\dot{A}} \ne 0$, $n_{A} M^{A} \pm  m_{\dot{A}} N^{\dot{A}} \ne 0$        
            & $\Gamma_{324} = \Gamma_{423} \ne 0$, $\Gamma_{413} \pm \Gamma_{314} \ne 0$  \\ \hline 
$[-+,++,++,+-]$ 
            & $m_{A}N^{A} \pm n_{\dot{A}} M^{\dot{A}} \ne 0$, $n_{A} M^{A} \pm  m_{\dot{A}} N^{\dot{A}} \ne 0$        
            & $\Gamma_{324} \pm \Gamma_{423} \ne 0$, $\Gamma_{413} \pm \Gamma_{314} \ne 0$  \\ \hline  
\rowcolor{light_gray_2}
for all types below  & $m_{A} M^{A} = -m_{\dot{A}}M^{\dot{A}} \ne 0$, $n_{A} N^{A}  \pm n_{\dot{A}} N^{\dot{A}} \ne 0$, &$\Gamma_{412} = -\Gamma_{421} \ne 0$, $\Gamma_{321} \pm \Gamma_{312} \ne 0$, \\ \hline
$[-+,-+,++,++]$ 
            & $m_{A}N^{A} =- n_{\dot{A}} M^{\dot{A}} \ne 0$, $n_{A} M^{A} \pm m_{\dot{A}} N^{\dot{A}} \ne 0$        
            & $\Gamma_{324} = -\Gamma_{423} \ne 0$, $\Gamma_{413} \pm \Gamma_{314} \ne 0$  \\ \hline    
$[-+,+-,+-,++]$ 
            & $m_{A}N^{A} = n_{\dot{A}} M^{\dot{A}} \ne 0$, $n_{A} M^{A} =  m_{\dot{A}} N^{\dot{A}} \ne 0$        
            & $\Gamma_{324} = \Gamma_{423} \ne 0$, $\Gamma_{413} = \Gamma_{314} \ne 0$  \\ \hline 
$[-+,+-,++,++]$ 
            & $m_{A}N^{A} = n_{\dot{A}} M^{\dot{A}} \ne 0$, $n_{A} M^{A} \pm  m_{\dot{A}} N^{\dot{A}} \ne 0$        
            & $\Gamma_{324} = \Gamma_{423} \ne 0$, $\Gamma_{413} \pm \Gamma_{314} \ne 0$  \\ \hline 
$[-+,++,++,++]$ 
            & $m_{A}N^{A} \pm n_{\dot{A}} M^{\dot{A}} \ne 0$, $n_{A} M^{A} \pm  m_{\dot{A}} N^{\dot{A}} \ne 0$        
            & $\Gamma_{324} \pm \Gamma_{423} \ne 0$, $\Gamma_{413} \pm \Gamma_{314} \ne 0$  \\ \hline  
\rowcolor{light_gray_2}
for all types below  & $m_{A} M^{A} = m_{\dot{A}}M^{\dot{A}} \ne 0$, $n_{A} N^{A}  = n_{\dot{A}} N^{\dot{A}} \ne 0$, &$\Gamma_{412} = \Gamma_{421} \ne 0$, $\Gamma_{321} = \Gamma_{312} \ne 0$, \\ \hline
$[+-,+-,+-,+-]$ 
            & $m_{A}N^{A} = n_{\dot{A}} M^{\dot{A}} \ne 0$, $n_{A} M^{A} =  m_{\dot{A}} N^{\dot{A}} \ne 0$        
            & $\Gamma_{324} = \Gamma_{423} \ne 0$, $\Gamma_{413} = \Gamma_{314} \ne 0$  \\ \hline 
$[+-,+-,++,+-]$ 
            & $m_{A}N^{A} = n_{\dot{A}} M^{\dot{A}} \ne 0$, $n_{A} M^{A} \pm  m_{\dot{A}} N^{\dot{A}} \ne 0$        
            & $\Gamma_{324} = \Gamma_{423} \ne 0$, $\Gamma_{413} \pm \Gamma_{314} \ne 0$  \\ \hline 
$[+-,++,++,+-]$ 
            & $m_{A}N^{A} \pm n_{\dot{A}} M^{\dot{A}} \ne 0$, $n_{A} M^{A} \pm  m_{\dot{A}} N^{\dot{A}} \ne 0$        
            & $\Gamma_{324} \pm \Gamma_{423} \ne 0$, $\Gamma_{413} \pm \Gamma_{314} \ne 0$  \\ \hline   
\rowcolor{light_gray_2}              
for all types below  & $m_{A} M^{A} = m_{\dot{A}}M^{\dot{A}} \ne 0$, $n_{A} N^{A}  \pm n_{\dot{A}} N^{\dot{A}} \ne 0$, &$\Gamma_{412} = \Gamma_{421} \ne 0$, $\Gamma_{321} \pm \Gamma_{312} \ne 0$, \\ \hline
$[+-,+-,++,++]$ 
            & $m_{A}N^{A} = n_{\dot{A}} M^{\dot{A}} \ne 0$, $n_{A} M^{A} \pm  m_{\dot{A}} N^{\dot{A}} \ne 0$        
            & $\Gamma_{324} = \Gamma_{423} \ne 0$, $\Gamma_{413} \pm \Gamma_{314} \ne 0$  \\ \hline 
$[+-,++,++,++]$ 
            & $m_{A}N^{A} \pm n_{\dot{A}} M^{\dot{A}} \ne 0$, $n_{A} M^{A} \pm  m_{\dot{A}} N^{\dot{A}} \ne 0$        
            & $\Gamma_{324} \pm \Gamma_{423} \ne 0$, $\Gamma_{413} \pm \Gamma_{314} \ne 0$  \\ \hline   
\rowcolor{light_gray_2}              
for all types below  & $m_{A} M^{A} \pm m_{\dot{A}}M^{\dot{A}} \ne 0$, $n_{A} N^{A}  \pm n_{\dot{A}} N^{\dot{A}} \ne 0$, &$\Gamma_{412} \pm \Gamma_{421} \ne 0$, $\Gamma_{321} \pm \Gamma_{312} \ne 0$, \\ \hline
$[++,++,++,++]$ 
            & $m_{A}N^{A} \pm n_{\dot{A}} M^{\dot{A}} \ne 0$, $n_{A} M^{A} \pm  m_{\dot{A}} N^{\dot{A}} \ne 0$        
            & $\Gamma_{324} \pm \Gamma_{423} \ne 0$, $\Gamma_{413} \pm \Gamma_{314} \ne 0$  \\ \hline           
\caption{Types of spaces equipped with two SD and two ASD congruences of null strings.}
\label{Tabela_typy_four_congruence}
\end{longtable}
}
%\end{center}
%\end{table}

%#####################################################################################

\end{document}